\pdfoutput=1
\documentclass{article}
\usepackage{iclr2027_conference,times}

\usepackage{hyperref}
\usepackage{url}
\usepackage{amsmath}
\usepackage{amssymb}
\usepackage{mathtools}
\usepackage{amsthm}
\usepackage{graphicx}
\usepackage{booktabs}
\usepackage{algorithm}
\usepackage{algorithmic}
\usepackage[capitalize,noabbrev]{cleveref}
\hypersetup{hidelinks}
\crefname{assumption}{Assumption}{Assumptions}
\Crefname{assumption}{Assumption}{Assumptions}

\theoremstyle{plain}
\newtheorem{theorem}{Theorem}
\newtheorem{lemma}[theorem]{Lemma}
\newtheorem{corollary}[theorem]{Corollary}
\theoremstyle{definition}
\newtheorem{definition}[theorem]{Definition}
\newtheorem{assumption}[theorem]{Assumption}
\theoremstyle{remark}

\theoremstyle{plain}
\newtheorem{maintheorem}{Theorem}
\newtheorem{proposition}{Proposition}
\crefname{maintheorem}{Theorem}{Theorems}
\Crefname{maintheorem}{Theorem}{Theorems}

\title{Neural ODEs Meet Concurrent Learning: \\ Stable Online Learning \\ with Lyapunov Guarantees}

\author{Omkar Sudhir Patil \\
Division of Electrical and Computer Engineering \\
Louisiana State University \\
Baton Rouge, LA 70803, USA \\
\texttt{opatil1@lsu.edu}
}

\iclrfinalcopy

\begin{document}

\maketitle
\lhead{Preprint. Under review as a conference paper at ICLR 2027.}

\begin{abstract}
Neural ODEs learn dynamics from trajectory losses, but their adjoint gradients lack the regressor-times-parameter-error structure on which Lyapunov analyses of online adaptation rest, so training on streaming data comes without stability guarantees. We show that this structure is in fact present: the adjoint gradient decomposes exactly into a positive semi-definite trajectory operator acting on the parameter error plus a nonlinear perturbation with explicit, horizon-dependent bounds. A quadratic Lyapunov function then certifies online Neural ODE training over sliding windows under computable gain and horizon conditions, and the same certificate extends to stored data: its drift branch recovers concurrent learning, and its trajectory branch yields NODE-CL, a stored-segment Gauss--Newton method built on batched forward sensitivities that needs no state-derivative estimates. On four DeepMind Control Suite domains, NODE-CL attains the lowest median prediction error on three under velocity measurement noise, where observer-based concurrent learning degrades by up to $8\times$; with clean measurements it is best on the pendulum and within a factor of $1.6$ of the best stored-data baseline on the cartpole and reacher.
\end{abstract}

\section{Introduction}
\label{sec:introduction}

Two communities learn unknown dynamics from data with neural networks and rarely talk to each other. In machine learning, a neural ordinary differential equation (Neural ODE) is fitted by minimizing a loss over predicted trajectories with adjoint gradients \citep{Chen2018}; the loss is informative because a trajectory mismatch integrates the modeling error through the dynamics, but training is offline, with no guarantee about the parameter flow on streaming data. In adaptive control, the same network is updated online from the instantaneous prediction error, and a Lyapunov argument certifies that the estimation error stays bounded \citep{Sanner1992,Lewis1999,Patil2022}, because that residual is, to first order, a known regressor times the parameter error. Trajectory losses lack that structure: the predicted trajectory depends on the parameters through an entire solve, so the adjoint gradient is not a regressor times a parameter error, and least-squares gains inject an indefinite term into any Lyapunov derivative. Concurrent learning \citep{Chowdhary2011}, which replays stored samples, sits in between: certified, but tied to the instantaneous residual and hence to a state-derivative observer.

This paper proposes NODE-CL, a stored-segment Gauss--Newton method for online Neural ODE identification that needs no state-derivative estimates and comes with a Lyapunov certificate, and shows that the same certificate covers both communities' methods (\cref{fig:overview}). The result that makes this possible is a trajectory-error decomposition, under which the adjoint gradient decomposes into a positive semi-definite trajectory operator acting on the parameter estimation error plus a perturbation with explicit, window-length dependent constants. A quadratic Lyapunov function then suffices: online Neural ODE training over sliding windows is certified under computable gain and horizon conditions, the window can be replaced by stored segments, and a Gauss--Newton gain from batched forward sensitivities can be added without leaving the certificate. The drift-based formulation is the foundation: with a least-squares gain driven by the regressor of the gradient decomposition, the indefinite gain-dynamics term is dominated by the gradient term, and the stored-set version of that law contains the stack structure of concurrent learning as its constant-gain instance.

\begin{figure}[t]
\centering
\includegraphics[width=0.9\textwidth]{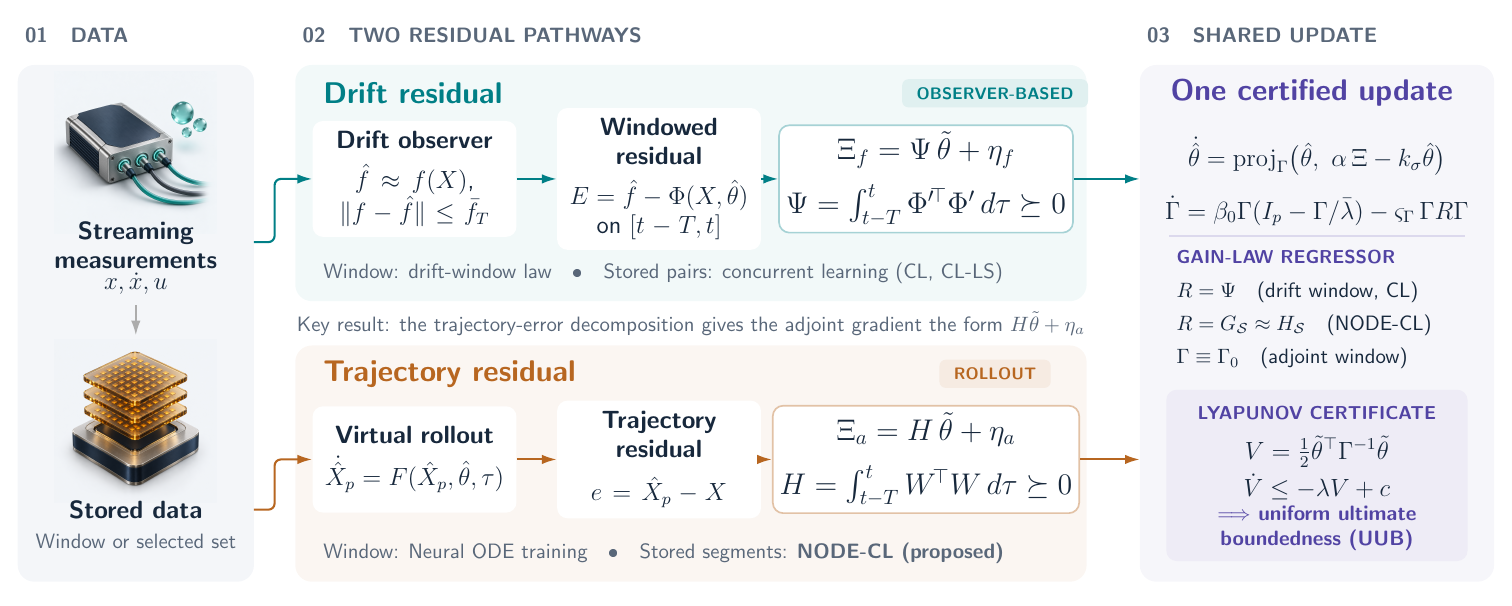}
\caption{Streaming data (01) feed two residual pathways (02), an instantaneous drift residual formed with an observer and a trajectory residual formed by rolling the model out from stored initial states; each gradient decomposes into a positive semi-definite operator on the parameter error plus a bounded perturbation, so both pathways share one certified update (03).}
\label{fig:overview}
\end{figure}

\textbf{Contributions.} First, NODE-CL: on four DeepMind Control Suite domains it is the most robust of the compared methods to velocity measurement noise and, with clean data, is best on the pendulum and within a factor of $1.6$ of the best stored-data baseline on the cartpole and reacher, against concurrent-learning baselines derived from \citet{Hart2025} (CL-LS and CL) and controlled Adam replay, with horizon, observer, and gain ablations. Second, the trajectory-error decomposition, which gives the adjoint gradient of a nonlinear-in-parameter trajectory model the regressor form needed for Lyapunov analysis, with explicit horizon-dependent constants. Third, two further certified online methods that the same analysis yields, drift-window and adjoint-window identification. Computational costs and limitations are reported in \cref{app:sim_details}.

\textbf{Related work.} Neural ODEs \citep{Chen2018,Dupont2019,Rubanova2019,Kidger2020,Massaroli2020,Kidger2022} are commonly trained offline with adjoint or checkpointed gradients \citep{Zhuang2020}, and long horizons are handled by multiple shooting \citep{Turan2022,Massaroli2021} or by regularizing the flow \citep{Finlay2020}. Existing Lyapunov-based neural adaptive laws \citep{Sanner1992,Lewis1999,Patil2022,Joshi2019,Sun2022} use instantaneous residuals; composite adaptation and concurrent learning add recorded data \citep{SlotineLi1989,Chowdhary2011,Kamalapurkar2017}. Least-squares DNN laws appear in \citet{Makumi2025,Hart2025}. Our drift branch adds certified gain bounds and a gain-metric projection, while the trajectory branch removes the derivative observer.

The NODE-CL gain recursion is related to covariance updates in extended-Kalman-filter and recursive Gauss--Newton training \citep{SinghalWu1989,Puskorius1994,Haykin2001}; the new ingredient is the trajectory-error decomposition that places this update inside a Lyapunov certificate. This differs from certifying properties of the learned dynamics themselves, such as stability of the learned flow \citep{Kolter2019,Kang2021,Rodriguez2022,Luo2025} or of a constraint or attractor set \citep{White2023,Sochopoulos2024}, and from regret guarantees for instantaneous nonlinear adaptation \citep{Boffi2021,BoffiTu2021} and meta-learned adaptive controllers \citep{Richards2021,Shi2021,OConnell2022}; the certificate here concerns the online parameter flow. Classical identification and sparse regression use different model classes \citep{Ljung1999,Brunton2016}; the projection follows \citet{Lavretsky2013,Cai2006}. \cref{sec:problem} formulates the problem, \cref{sec:drift} develops the methods, and \cref{sec:simulations} reports the experiments.

\section{Problem Formulation}
\label{sec:problem}

Consider the second-order nonlinear system
\begin{equation}
\label{eq:system}
\ddot{x} = f(x, \dot{x}) + g(x, \dot{x}) u,
\end{equation}
where $x, \dot{x} \in \mathbb{R}^{n}$ are measurable states, $f: \mathbb{R}^{n} \times \mathbb{R}^{n} \to \mathbb{R}^{n}$ is an unknown drift to be identified, $g: \mathbb{R}^{n} \times \mathbb{R}^{n} \to \mathbb{R}^{n \times m}$ is a known input matrix, and $u \in \mathbb{R}^{m}$ is a known external input. Let $X \triangleq [x^{\top} \; \dot{x}^{\top}]^{\top} \in \mathbb{R}^{2n}$ and $B \triangleq [0_{n \times n} \; I_{n}]^{\top} \in \mathbb{R}^{2n \times n}$; we write $f(X)$ and $g(X)$ for $f(x,\dot{x})$ and $g(x,\dot{x})$. The following assumptions describe the identification experiment and the regularity of the system.

\begin{assumption}
\label{assm:bounded_trajectory}
The external input is bounded, $\left\Vert u(t) \right\Vert \leq \bar{u}$ for all $t \geq t_{0}$, and the resulting state trajectory is bounded, $\left\Vert X(t) \right\Vert \leq \bar{X}$ for all $t \geq t_{0}$ (standard in identification, where boundedness is enforced by a stabilizing baseline input, as in \cref{sec:simulations}).
\end{assumption}

\begin{assumption}
\label{assm:f_smooth}
The drift $f$ is continuously differentiable, and $g$ is twice continuously differentiable. Consequently, on $\{X : \left\Vert X \right\Vert \leq \bar{X}\}$ there exist constants $c_{fx}, c_{fv}, \bar{f}, \bar{g} > 0$ such that $\left\Vert \partial f / \partial x \right\Vert \leq c_{fx}$, $\left\Vert \partial f / \partial \dot{x} \right\Vert \leq c_{fv}$, $\left\Vert f \right\Vert \leq \bar{f}$, and $\left\Vert g \right\Vert \leq \bar{g}$.
\end{assumption}

Let $\Phi: \mathbb{R}^{2n} \times \mathbb{R}^{p} \to \mathbb{R}^{n}$ denote a DNN with parameters $\theta \in \mathbb{R}^{p}$ and $\Phi'(X,\theta) \triangleq \partial \Phi / \partial \theta \in \mathbb{R}^{n \times p}$. Let $\Theta \triangleq \{\theta \in \mathbb{R}^{p} : \left\Vert \theta \right\Vert \leq \bar{\theta}\}$ denote the admissible parameter set. Fix a margin $\varpi > 0$ and define the compact convex domain $\Omega \triangleq \{X \in \mathbb{R}^{2n} : \left\Vert X \right\Vert \leq \bar{X} + \varpi\}$, which contains the trajectory and a tube in which the virtual predictions below remain. Since $\Phi$ is a composition of smooth layers, it is twice continuously differentiable on any compact set, so there exist constants $\bar{\upsilon}, c_{\theta\theta}, c_{X\theta} > 0$ such that, on $\Omega \times \Theta$, $\left\Vert \Phi'(X,\theta) \right\Vert \leq \bar{\upsilon}$, $\left\Vert \partial^{2}\Phi/\partial \theta^{2} \right\Vert \leq c_{\theta\theta}$, and $\left\Vert \partial^{2}\Phi/\partial X \partial \theta \right\Vert \leq c_{X\theta}$.

Let $\mu$ denote a finite Borel measure on $\Omega$, and define the ideal parameters as $\theta^{*} = \arg\min_{\theta \in \Theta} \int_{\Omega} \left\Vert f(X) - \Phi(X, \theta) \right\Vert^{2} \, d\mu(X) + \rho \left\Vert \theta \right\Vert^{2}$ with a regularization weight $\rho > 0$.\footnote{The reader is referred to \citet{Hart2025arxiv} for identifiability conditions under which this regularized minimizer is unique.} The drift is then modeled as
\begin{equation}
\label{eq:ufap}
f(X) = \Phi(X, \theta^{*}) + \varepsilon(X), \qquad \max_{X \in \Omega} \left\Vert \varepsilon(X) \right\Vert \leq \bar{\varepsilon},
\end{equation}
where $\bar{\varepsilon}$ can be made small by an expressive architecture and a large enough $\bar{\theta}$; every ultimate bound below grows with $\bar{\varepsilon}$.

The objective is an online law for $\hat{\theta}(t)$ under which the estimation error $\tilde{\theta}(t) \triangleq \theta^{*} - \hat{\theta}(t)$ is uniformly ultimately bounded, with an exponential rate into a residual set determined by the leakage, approximation, and observer errors, which shrinks under a window persistence-of-excitation condition stated in the appendix. \cref{sec:drift} develops the adaptation laws that achieve this objective.
\section{Method: From Local Residuals to NODE-CL}
\label{sec:drift}

We develop the method in three stages, each building on the previous one. First, we define a drift residual, the mismatch between an observer-based drift estimate and the model over a sliding window, and derive a gradient-based estimator with a certified least-squares gain. Second, we replace the drift residual by a trajectory residual, the mismatch between a model rollout and the measured trajectory, which removes the observer, and show that its adjoint gradient admits the same certified estimator. Third, we apply concurrent learning to the trajectory regressor: NODE-CL replays stored trajectory segments instead of a sliding window and rescales their directions with a Gauss--Newton gain. \cref{sec:stage1} presents the first stage.

\subsection{Stage I: proposed drift-window method}
\label{sec:stage1}

We introduce the drift-window method to convert streaming state and input measurements into approximate drift labels and to adapt the model over a sliding window of those labels. Let $\hat{f}$ denote a drift estimate and $\tilde{f} \triangleq f(X)-\hat{f}$ its error. Dynamic state-derivative estimators are standard in concurrent learning \citep{Kamalapurkar2017,Hart2025}. Throughout, the estimate is kept by a projection (introduced in \cref{sec:gain}) inside $\Theta_{\epsilon} \triangleq \{\theta : \left\Vert \theta \right\Vert \leq \bar{\theta} + \epsilon_{\theta}\}$, the inflation of $\Theta$ by a margin $\epsilon_{\theta} > 0$, so $\left\Vert \tilde{\theta}(t) \right\Vert \leq \bar{\tilde{\theta}} \triangleq 2\bar{\theta} + \epsilon_{\theta}$; the constants $\bar{\upsilon}, c_{\theta\theta}, c_{X\theta}$ are taken so that the bounds of \cref{sec:problem} hold on $\Omega \times \Theta_{\epsilon}$. The analysis below requires only an estimator error envelope, rather than a particular construction. For completeness, \cref{app:observer_proof} specifies the observer used in the experiments and proves the bound below. For a window length $T>0$ and $t\geq t_{0}+T$, the resulting bound on the observer error over the window is
\begin{equation}
\sup_{\tau\in[t-T,t]}\left\Vert \tilde{f}(\tau) \right\Vert \leq \bar{f}_{T}(t) \triangleq \left\Vert z_{o}(t_{0}) \right\Vert e^{-k_{o}(t-T-t_{0})}+\bar{z}_{o}. \label{eq:observer_window_bound}
\end{equation}
where $z_{o}$ is the observer error state, $k_{o} > 0$ its decay rate, and $\bar{z}_{o}$ its steady-state residual, which decreases with the observer gain (all given explicitly in \cref{app:observer_proof}).
\cref{sec:windowloss} defines the windowed loss and decomposes its gradient, \cref{sec:gain} introduces the gain and projection, and \cref{sec:law} states the adaptation law and its guarantee.

\subsubsection{Window loss and gradient decomposition}
\label{sec:windowloss}

To make the update less sensitive to the error of any single sample, we accumulate the residual over a sliding window rather than at a single instant, and we show that the gradient of this windowed loss has the regressor form needed for the Lyapunov analysis. Fix a window length $T > 0$. For $t \geq t_{0} + T$, the windowed drift prediction error and loss are defined as
\begin{align}
E(\tau; t) &\triangleq \hat{f}(\tau) - \Phi(X(\tau), \hat{\theta}(t)), \label{eq:E_def}\\
\mathcal{E}_{f}(t) &\triangleq \tfrac{1}{2}\int_{t-T}^{t} \left\Vert E(\tau; t) \right\Vert^{2}\, d\tau, \label{eq:Ef_def}
\end{align}
where the pairs $(X(\tau), \hat{f}(\tau))$ are buffered and $\hat{\theta}(t)$ is the current estimate. Since only $\Phi$ depends on $\hat{\theta}$, the negative gradient of \cref{eq:Ef_def} is
\begin{equation}
\label{eq:Xif_def}
\Xi_{f}(t) \triangleq -\nabla_{\hat{\theta}}\mathcal{E}_{f}(t) = \int_{t-T}^{t} \Phi'(X(\tau), \hat{\theta}(t))^{\top} E(\tau; t)\, d\tau.
\end{equation}
Expanding $\Phi(X(\tau), \theta^{*})$ about $\hat{\theta}(t)$ in each sample gives the following decomposition of $\Xi_{f}$, which is the form every Lyapunov argument below relies on.

\begin{proposition}
\label{lem:drift_decomp}
Let \cref{assm:bounded_trajectory,assm:f_smooth} hold, and let $\hat{\theta}(t) \in \Theta_{\epsilon}$. Then, for $t \geq t_{0} + T$,
\begin{equation}
\label{eq:Xif_decomp}
\Xi_{f}(t) = \Psi(t)\, \tilde{\theta}(t) + \eta_{f}(t),
\end{equation}
where the windowed regressor
\begin{equation}
\label{eq:Psi_def}
\Psi(t) \triangleq \int_{t-T}^{t} \Phi'(X(\tau), \hat{\theta}(t))^{\top} \Phi'(X(\tau), \hat{\theta}(t))\, d\tau
\end{equation}
is positive semi-definite with $\left\Vert \Psi(t) \right\Vert \leq T\bar{\upsilon}^{2}$, and the perturbation satisfies
\begin{equation}
\label{eq:etaf_bound}
\left\Vert \eta_{f}(t) \right\Vert \leq c_{\eta 1}\left\Vert \tilde{\theta}(t) \right\Vert^{2} + c_{\eta 2}\,\bar{f}_{T}(t) + c_{\eta 3},
\end{equation}
with $c_{\eta 1} \triangleq \tfrac{1}{2} T \bar{\upsilon}\, c_{\theta\theta}$, $c_{\eta 2} \triangleq T\bar{\upsilon}$, and $c_{\eta 3} \triangleq T\bar{\upsilon}\,\bar{\varepsilon}$.
\end{proposition}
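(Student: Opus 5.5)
The plan is to rewrite the residual $E(\tau;t)$ pointwise as a linear term in $\tilde{\theta}(t)$ plus three remainders, and then integrate over the window. For each $\tau\in[t-T,t]$, write $\hat{f}(\tau) = f(X(\tau)) - \tilde{f}(\tau)$ and use \cref{eq:ufap} to get
\begin{equation*}
E(\tau;t) = \Phi(X(\tau),\theta^{*}) - \Phi(X(\tau),\hat{\theta}(t)) + \varepsilon(X(\tau)) - \tilde{f}(\tau).
\end{equation*}
Since $\Theta_{\epsilon}$ is convex and contains both $\theta^{*}\in\Theta$ and $\hat{\theta}(t)$, the segment between them lies in $\Theta_{\epsilon}$. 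Also $X(\tau)\in\Omega$ by \cref{assm:bounded_trajectory}. Taylor's theorem with integral remainder in $\theta$ then gives
\begin{equation*}
\Phi(X(\tau),\theta^{*}) - \Phi(X(\tau),\hat{\theta}(t)) = \Phi'(X(\tau),\hat{\theta}(t))\,\tilde{\theta}(t) + R(\tau;t),
\end{equation*}
with $\left\Vert R(\tau;t)\right\Vert \leq \tfrac{1}{2}c_{\theta\theta}\left\Vert\tilde{\theta}(t)\right\Vert^{2}$. This uses the second-derivative bound, which the paper states on $\Omega\times\Theta_{\epsilon}$.

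Substituting into \cref{eq:Xif_def}, the linear term yields exactly $\Psi(t)\tilde{\theta}(t)$ with $\Psi$ as in \cref{eq:Psi_def}. The perturbation is then defined as
\begin{equation*}
\eta_{f}(t) \triangleq \int_{t-T}^{t}\Phi'(X(\tau),\hat{\theta}(t))^{\top}\bigl(R(\tau;t)+\varepsilon(X(\tau))-\tilde{f}(\tau)\bigr)\,d\tau .
\end{equation*}
The properties of $\Psi(t)$ follow directly:
\begin{itemize}
\item it is positive semi-definite, because $v^{\top}\Psi v=\int\left\Vert\Phi' v\right\Vert^{2}d\tau\geq 0$;
\item $\left\Vert\Psi(t)\right\Vert\leq T\bar{\upsilon}^{2}$, from $\left\Vert\Phi'\right\Vert\leq\bar{\upsilon}$ and the window length $T$.
\end{itemize}

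To bound $\eta_{f}$, apply the triangle inequality inside the integral and use $\left\Vert\Phi'\right\Vert\leq\bar{\upsilon}$:
\begin{itemize}
\item the remainder gives $T\bar{\upsilon}\cdot\tfrac{1}{2}c_{\theta\theta}\left\Vert\tilde{\theta}\right\Vert^{2}$, which is $c_{\eta1}\left\Vert\tilde{\theta}\right\Vert^{2}$;
\item the approximation error gives $T\bar{\upsilon}\bar{\varepsilon}=c_{\eta3}$, using the $\Omega$-uniform bound in \cref{eq:ufap};
\item the observer error gives $T\bar{\upsilon}\sup_{\tau}\left\Vert\tilde{f}(\tau)\right\Vert\leq c_{\eta2}\bar{f}_{T}(t)$, by \cref{eq:observer_window_bound}, which is valid for $t\geq t_{0}+T$.
\end{itemize}

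The only delicate step is the Taylor expansion. It needs the whole segment $[\hat{\theta}(t),\theta^{*}]$ to lie in the set where $c_{\theta\theta}$ is valid. Convexity of the ball $\Theta_{\epsilon}$ and the inclusion $\Theta\subset\Theta_{\epsilon}$ settle this. Everything else is bookkeeping with the constant bounds.
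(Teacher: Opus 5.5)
Your proposal is correct and follows the paper's proof in \cref{app:drift_decomp_proof} step for step. Like the paper, you rewrite $E(\tau;t)$ using $\hat{f}=f-\tilde{f}$ and \cref{eq:ufap}, expand $\Phi(X(\tau),\theta^{*})$ to second order about $\hat{\theta}(t)$ on the convex set $\Theta_{\epsilon}$, read off $\Psi\tilde{\theta}$ and $\eta_{f}$, and bound the three remainder terms with $\left\Vert \Phi' \right\Vert\leq\bar{\upsilon}$ over a window of length $T$. The only difference is that you use the integral form of the Taylor remainder where the paper uses the Lagrange form; the Lagrange form is not exact for vector-valued $\Phi$, so yours is the slightly more careful choice, and it gives the same bound $\tfrac{1}{2}c_{\theta\theta}\left\Vert \tilde{\theta} \right\Vert^{2}$.
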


The proof is given in \cref{app:drift_decomp_proof}; the three terms in \cref{eq:etaf_bound} are the Taylor remainder, the observer error, and the approximation error, each scaled by $T$. \cref{sec:gain} introduces the gain and projection used in the update law.

\subsubsection{Least-squares gain and projection}
\label{sec:gain}

The adaptation law also requires a least-squares gain for excited directions and a projection that keeps the estimate inside $\Theta_{\epsilon}$, where the smoothness constants of $\Phi$ hold. The gain evolves as
\begin{equation}
\label{eq:Gamma_update}
\dot{\Gamma} = \beta_{0}\, \Gamma\left(I_{p} - \Gamma/\bar{\lambda}\right) - \varsigma_{\Gamma}(t)\, \Gamma\, \Psi(t)\, \Gamma, \qquad \Gamma(t_{0}) = \Gamma_{0},
\end{equation}
where $\beta_{0}>0$ is the forgetting rate and $\bar{\lambda}>0$ is a cap. The Lipschitz gate $\varsigma_{\Gamma}(t)=\varsigma(\lambda_{\min}(\Gamma(t)))$ is zero below the floor $\underline{\lambda}$, one above $2\underline{\lambda}$, and linear between them, with $0<\underline{\lambda}<\bar{\lambda}/2$; set $\Psi=0$ during the initial window. Driving the gain by the same regressor as \cref{eq:Xif_decomp} preserves $\underline{\lambda}I_p\preceq\Gamma(t)\preceq\bar{\lambda}I_p$ (proved in \cref{app:gamma_proof}). The gain-metric projection $\mathrm{proj}_{\Gamma}(\theta,y)$ equals $\Gamma y$ inside $\Theta$ and, in the margin between $\Theta$ and $\Theta_{\epsilon}$, removes the component of $\Gamma y$ that points outward (defined in \cref{app:proj_proof}); it keeps $\hat{\theta}(t)\in\Theta_\epsilon$ and satisfies, for every $\theta^{*}\in\Theta$,
\begin{equation}
\label{eq:proj_prop}
\tilde{\theta}^{\top} \Gamma^{-1} \mathrm{proj}_{\Gamma}(\hat{\theta}, y) \geq \tilde{\theta}^{\top} y
\end{equation}
(proved in \cref{app:proj_proof}). This metric compatibility is the property used in every Lyapunov proof. \cref{sec:law} states the adaptation law and its guarantee.

\subsubsection{Adaptation law and guarantee}
\label{sec:law}

The drift-based windowed adaptation law is designed as
\begin{equation}
\label{eq:adapt_law}
\dot{\hat{\theta}} = \mathrm{proj}_{\Gamma}\!\left(\hat{\theta},\; \alpha\, \Xi_{f}(t) - k_{\sigma}\, \hat{\theta} \right),
\end{equation}
where $\alpha>0$ is the learning gain and $k_{\sigma}>0$ is a sigma-modification gain; set $\Xi_f=0$ during the initial window. This is a projected, least-squares preconditioned gradient flow on \cref{eq:Ef_def}. Existence and uniqueness of solutions are addressed in \cref{app:proj_proof}. The following proposition certifies this law.

\begin{proposition}
\label{thm:drift}
Let \cref{assm:bounded_trajectory,assm:f_smooth} hold, let $\hat{\theta}(t_{0}) \in \Theta_{\epsilon}$, let $\underline{\lambda} I_{p} \preceq \Gamma_{0} \preceq \bar{\lambda} I_{p}$, and let the gains satisfy $\alpha > \tfrac{1}{2}$ and $k_{\sigma} > \alpha\, T\, \bar{\upsilon}\, c_{\theta\theta}\, \bar{\tilde{\theta}}$. Suppose the drift estimate satisfies \cref{eq:observer_window_bound}. Then, under the gain dynamics in \cref{eq:Gamma_update} and the adaptation law in \cref{eq:adapt_law}, the parameter estimation error satisfies, for all $t \geq t_{1} \triangleq t_{0} + T$,
\begin{equation}
\label{eq:theta_bound}
\left\Vert \tilde{\theta}(t) \right\Vert \leq \sqrt{\tfrac{\bar{\lambda}}{\underline{\lambda}} \left\Vert \tilde{\theta}(t_{1}) \right\Vert^{2} e^{-\lambda_{d}(t - t_{1})} + \tfrac{2\bar{\lambda}}{\lambda_{d}}\, c_{d}(t_{1})},
\qquad
\limsup_{t \to \infty} \left\Vert \tilde{\theta}(t) \right\Vert \leq \sqrt{\tfrac{2\bar{\lambda}}{\lambda_{d}}\, c_{d}^{\infty}},
\end{equation}
where $k_{1} \triangleq \tfrac{k_{\sigma}}{2} - \alpha c_{\eta 1} \bar{\tilde{\theta}} > 0$, $\lambda_{d} \triangleq 2 k_{1} \underline{\lambda}$,
\begin{equation}
\label{eq:cd_def}
c_{d}(t) \triangleq k_{\sigma}\bar{\theta}^{2} + \tfrac{\alpha^{2}}{k_{\sigma}}\left(c_{\eta 2}\, \bar{f}_{T}(t) + c_{\eta 3}\right)^{2},
\qquad
c_{d}^{\infty} \triangleq k_{\sigma}\bar{\theta}^{2} + \tfrac{\alpha^{2}}{k_{\sigma}}\left(c_{\eta 2}\, \bar{z}_{o} + c_{\eta 3}\right)^{2}.
\end{equation}
\end{proposition}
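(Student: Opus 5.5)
We will use the Lyapunov candidate $V(t) \triangleq \tfrac{1}{2}\tilde{\theta}^{\top}\Gamma^{-1}\tilde{\theta}$ on $t \geq t_{1}$. By the gain bounds $\underline{\lambda}I_{p}\preceq\Gamma\preceq\bar{\lambda}I_{p}$ proved in \cref{app:gamma_proof}, $V$ is sandwiched as $\tfrac{1}{2\bar{\lambda}}\|\tilde{\theta}\|^{2}\leq V\leq\tfrac{1}{2\underline{\lambda}}\|\tilde{\theta}\|^{2}$. Since $\theta^{*}$ is constant, $\dot{\tilde{\theta}}=-\dot{\hat{\theta}}$. Using $\tfrac{d}{dt}\Gamma^{-1}=-\Gamma^{-1}\dot{\Gamma}\Gamma^{-1}$ and \cref{eq:Gamma_update},
\[
\dot V=-\tilde{\theta}^{\top}\Gamma^{-1}\mathrm{proj}_{\Gamma}(\hat{\theta},y)-\tfrac{\beta_{0}}{2}\tilde{\theta}^{\top}\left(\Gamma^{-1}-\bar{\lambda}^{-1}I_{p}\right)\tilde{\theta}+\tfrac{\varsigma_{\Gamma}}{2}\tilde{\theta}^{\top}\Psi\tilde{\theta},
\]
with $y=\alpha\Xi_{f}-k_{\sigma}\hat{\theta}$. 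The forgetting term is nonpositive because $\Gamma\preceq\bar{\lambda}I_{p}$, so it can be dropped. By the projection property \cref{eq:proj_prop}, the first term is bounded above by $-\tilde{\theta}^{\top}y$.

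Next we substitute the decomposition of \cref{lem:drift_decomp}:
\[
-\tilde{\theta}^{\top}y=-\alpha\tilde{\theta}^{\top}\Psi\tilde{\theta}-\alpha\tilde{\theta}^{\top}\eta_{f}+k_{\sigma}\tilde{\theta}^{\top}\hat{\theta}.
\]
The main obstacle is the indefinite gain-dynamics term $\tfrac{\varsigma_{\Gamma}}{2}\tilde{\theta}^{\top}\Psi\tilde{\theta}$. Since $\varsigma_{\Gamma}\leq1$ and $\Psi\succeq0$, it is dominated by $-\alpha\tilde{\theta}^{\top}\Psi\tilde{\theta}$ exactly when $\alpha>\tfrac{1}{2}$, which is why that hypothesis appears. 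After this step, the combined $\Psi$ contribution $-(\alpha-\tfrac{1}{2})\tilde{\theta}^{\top}\Psi\tilde{\theta}\leq0$ is discarded.

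The sigma term is handled with $\hat{\theta}=\theta^{*}-\tilde{\theta}$ and Young's inequality:
\[
k_{\sigma}\tilde{\theta}^{\top}\hat{\theta}\leq-\tfrac{k_{\sigma}}{2}\|\tilde{\theta}\|^{2}+\tfrac{k_{\sigma}}{2}\bar{\theta}^{2}.
\]
The perturbation is bounded with \cref{eq:etaf_bound}. The quadratic piece uses $\|\tilde{\theta}\|\leq\bar{\tilde{\theta}}$ to give $\alpha c_{\eta1}\|\tilde{\theta}\|^{3}\leq\alpha c_{\eta1}\bar{\tilde{\theta}}\|\tilde{\theta}\|^{2}$. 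This is where the condition $k_{\sigma}>\alpha T\bar{\upsilon}c_{\theta\theta}\bar{\tilde{\theta}}$ makes $k_{1}>0$, since $2\alpha c_{\eta1}\bar{\tilde{\theta}}=\alpha T\bar{\upsilon}c_{\theta\theta}\bar{\tilde{\theta}}$. For the linear piece, Young's inequality gives
\[
\alpha\|\tilde{\theta}\|(c_{\eta2}\bar{f}_{T}+c_{\eta3})\leq\tfrac{k_{\sigma}}{4}\|\tilde{\theta}\|^{2}+\tfrac{\alpha^{2}}{k_{\sigma}}(c_{\eta2}\bar{f}_{T}+c_{\eta3})^{2}.
\]
The $\tfrac{k_{\sigma}}{4}$ absorbed here eats into the $\tfrac{k_{\sigma}}{2}$ margin, so the split must be tuned to land exactly on $k_{1}$ and $c_{d}$. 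I expect to rebalance it with weights, for example splitting the sigma term as $\tfrac{k_{\sigma}}{2}$ and using $2V$ scalings, so that the result is $\dot V\leq-k_{1}\|\tilde{\theta}\|^{2}+c_{d}(t)$. The constant can absorb the factor $\tfrac12$ slack, because $c_{d}$ as defined carries the full $k_{\sigma}\bar{\theta}^{2}$ rather than $\tfrac{k_{\sigma}}{2}\bar{\theta}^{2}$.

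Finally, $\|\tilde{\theta}\|^{2}\geq2\underline{\lambda}V$ gives $\dot V\leq-\lambda_{d}V+c_{d}(t)$ with $\lambda_{d}=2k_{1}\underline{\lambda}$. Since $\bar{f}_{T}(t)$ is nonincreasing in $t$ by \cref{eq:observer_window_bound}, $c_{d}(t)\leq c_{d}(t_{1})$ for $t\geq t_{1}$. The comparison lemma then yields $V(t)\leq V(t_{1})e^{-\lambda_{d}(t-t_{1})}+c_{d}(t_{1})/\lambda_{d}$. Converting back through the sandwich bounds gives the first inequality in \cref{eq:theta_bound}.

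For the limsup, we restart the comparison argument at any $s\geq t_{1}$ using $c_{d}(s)$ and let $s\to\infty$, so that $\bar{f}_{T}\to\bar{z}_{o}$ and $c_{d}(s)\to c_{d}^{\infty}$. The exponential term vanishes, leaving $\limsup\|\tilde{\theta}\|^{2}\leq2\bar{\lambda}c_{d}^{\infty}/\lambda_{d}$. Throughout, $\hat{\theta}\in\Theta_{\epsilon}$ is guaranteed by the projection, which justifies using \cref{lem:drift_decomp} and $\bar{\tilde{\theta}}$.
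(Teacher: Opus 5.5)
Your argument is the paper's proof step for step: the same $V=\tfrac{1}{2}\tilde{\theta}^{\top}\Gamma^{-1}\tilde{\theta}$, the projection inequality, absorption of $\tfrac{\varsigma_{\Gamma}}{2}\tilde{\theta}^{\top}\Psi\tilde{\theta}$ via $\alpha>\tfrac{1}{2}$, Young's inequality on the perturbation, and the comparison lemma. Your restart argument for the $\limsup$ is an equivalent alternative to the paper's use of the monotonicity of $c_{d}$.

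The constant bookkeeping you left open closes as you suspected, by spending the slack in $c_{d}$ on the cross term:
\[
k_{\sigma}\tilde{\theta}^{\top}\theta^{*}\leq\tfrac{k_{\sigma}}{4}\|\tilde{\theta}\|^{2}+k_{\sigma}\bar{\theta}^{2}.
\]
The sigma term then contributes $-\tfrac{3k_{\sigma}}{4}\|\tilde{\theta}\|^{2}+k_{\sigma}\bar{\theta}^{2}$. Combined with your $\tfrac{k_{\sigma}}{4}$ split of the linear perturbation term, this gives exactly $\dot V\leq-k_{1}\|\tilde{\theta}\|^{2}+c_{d}(t)$, and no ``$2V$ scaling'' is needed. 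Your literal split $-\tfrac{k_{\sigma}}{2}\|\tilde{\theta}\|^{2}+\tfrac{k_{\sigma}}{2}\bar{\theta}^{2}$ would only give the coefficient $\tfrac{k_{\sigma}}{4}-\alpha c_{\eta 1}\bar{\tilde{\theta}}$, which the gain hypothesis does not make positive.
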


The proof is given in \cref{app:drift_proof}: in $V = \tfrac{1}{2}\tilde{\theta}^{\top}\Gamma^{-1}\tilde{\theta}$ the gain dynamics produce $\tfrac{1}{2}\varsigma_{\Gamma}\tilde{\theta}^{\top}\Psi\tilde{\theta}$ and the projected gradient $-\alpha\tilde{\theta}^{\top}\Psi\tilde{\theta}$, so $\alpha > \tfrac{1}{2}$ leaves a sink; $k_{\sigma}\bar{\theta}^{2}$ is the usual sigma-modification tradeoff. Without excitation no law can localize $\theta$ beyond the projection set, and \cref{eq:theta_bound} reflects this (see the remark in \cref{app:drift_proof}).

Two further consequences are stated and proved in \cref{app:drift_proof}: the sink retained in the Lyapunov argument bounds the time-averaged first-order functional error without excitation, and it improves the parameter rate under a window persistence-of-excitation condition. The analysis also applies to noncontiguous stored drift pairs with uniformly bounded observer errors (\cref{app:stored_proof}); this stored-set result supplies the guarantee used by the adapted CL-LS baseline. \cref{sec:adjoint} replaces the drift residual by a trajectory residual.

\subsection{Stage II: proposed adjoint-window method}
\label{sec:adjoint}

We now replace the drift residual by a trajectory residual, the mismatch between a model rollout and the measured trajectory, which needs no derivative estimate. The difficulty is that its adjoint gradient has no immediate regressor form, and this section derives one: \cref{sec:virtual} defines the rollout and decomposes its adjoint gradient, and \cref{sec:certified} states the update and its guarantee.

\subsubsection{Virtual system, window loss, and adjoint gradient}
\label{sec:virtual}

Define the model vector field $F: \mathbb{R}^{2n} \times \mathbb{R}^{p} \times \mathbb{R} \to \mathbb{R}^{2n}$ as $F(\chi, \theta, \tau) \triangleq [\chi_{v}^{\top},\; (\Phi(\chi, \theta) + g(\chi)\, u_{\mathrm{rec}}(\tau))^{\top}]^{\top}$, where $\chi = [\chi_{x}^{\top}\; \chi_{v}^{\top}]^{\top}$ and $u_{\mathrm{rec}}$ is the recorded input. At time $t \geq t_{0}+T$, the virtual predicted trajectory $\hat{X}_{p}(\cdot; t): [t-T, t] \to \mathbb{R}^{2n}$ is generated by
\begin{equation}
\label{eq:virtual_system}
\frac{d \hat{X}_{p}}{d\tau} = F(\hat{X}_{p}, \hat{\theta}(t), \tau), \qquad \hat{X}_{p}(t-T; t) = X(t-T),
\end{equation}
and the trajectory window loss and prediction error are defined as
\begin{equation}
\label{eq:ET_def}
\mathcal{E}_{T}(t) \triangleq \tfrac{1}{2}\int_{t-T}^{t} \left\Vert e(\tau; t) \right\Vert^{2} d\tau, \qquad e(\tau;t) \triangleq \hat{X}_{p}(\tau; t) - X(\tau).
\end{equation}
Let $L_{F}, L_{FX} > 0$ bound $\left\Vert \partial F/\partial \chi \right\Vert$ and $\left\Vert \partial^{2} F/\partial \chi^{2} \right\Vert$ on $\Omega \times \Theta_{\epsilon}$, and let $\kappa_{T} \triangleq (e^{L_{F} T} - 1)/L_{F}$. The virtual trajectory remains in $\Omega$ under the window condition
\begin{equation}
\label{eq:containment_cond}
\kappa_{T}\left(\bar{\upsilon}\, \bar{\tilde{\theta}} + \bar{\varepsilon}\right) < \varpi,
\end{equation}
(proved in \cref{app:containment_proof}), which also gives $\left\Vert e(\tau;t) \right\Vert < \varpi$. 

By the adjoint sensitivity method \citep{Chen2018}, the gradient of \cref{eq:ET_def} with respect to $\hat{\theta}$ is
\begin{equation}
\label{eq:adjoint_gradient}
\begin{aligned}
\nabla_{\hat{\theta}}\, \mathcal{E}_{T}(t) &= \int_{t-T}^{t} \left(\frac{\partial F}{\partial \theta}(\hat{X}_{p}, \hat{\theta}(t), \tau)\right)^{\top} \nu(\tau)\, d\tau, \\
\frac{d\nu}{d\tau} &= -\left(\frac{\partial F}{\partial \chi}(\hat{X}_{p}, \hat{\theta}(t), \tau)\right)^{\top} \nu - e(\tau; t), \qquad \nu(t) = 0_{2n},
\end{aligned}
\end{equation}
where $\nu: [t-T, t] \to \mathbb{R}^{2n}$ is the adjoint state (derived in \cref{app:adjoint_proof}).

Since $\partial F/\partial \theta = B\, \Phi'(\hat{X}_{p}, \theta)$, partitioning $\nu = [\nu_{x}^{\top}\; \nu_{v}^{\top}]^{\top}$ gives the trajectory gradient signal
\begin{equation}
\label{eq:Xia_def}
\Xi_{a}(t) \triangleq -\nabla_{\hat{\theta}}\mathcal{E}_{T}(t) = -\int_{t-T}^{t} \Phi'(\hat{X}_{p}(\tau;t), \hat{\theta}(t))^{\top} \nu_{v}(\tau)\, d\tau.
\end{equation}

The main result of the paper, stated next, shows that the trajectory gradient has the same structure as the drift gradient: a positive semi-definite operator acting on the parameter error plus a bounded perturbation. It follows from an exact representation of the rollout error, obtained by variation of constants applied to a mean-value linearization, which yields a trajectory regressor $W(\tau)$ and a remainder $d(\tau)$ with $e=-(W\tilde{\theta}+d)$; their definitions and all constants are given in \cref{app:traj_decomp_proof}.

\begin{maintheorem}
\label{lem:traj_decomp}
Let \cref{assm:bounded_trajectory,assm:f_smooth} and \cref{eq:containment_cond} hold, and let $\hat{\theta}(t) \in \Theta_{\epsilon}$. Let $W$ and $d$ be the explicitly constructed quantities in \cref{app:traj_decomp_proof}. Then $e(\tau; t) = -\left(W(\tau)\, \tilde{\theta}(t) + d(\tau)\right)$ for all $\tau \in [t-T,t]$, and the adjoint gradient signal in \cref{eq:Xia_def} decomposes as
\begin{equation}
\label{eq:Xia_decomp}
\Xi_{a}(t) = H(t)\, \tilde{\theta}(t) + \eta_{a}(t),
\end{equation}
where $H(t) \triangleq \int_{t-T}^{t} W(\tau)^{\top} W(\tau)\, d\tau \succeq 0$ with $\left\Vert H(t) \right\Vert \leq T \bar{w}^{2}$, and
\begin{equation}
\label{eq:etaa_bound}
\left\Vert \eta_{a}(t) \right\Vert \leq b_{1}\left\Vert \tilde{\theta} \right\Vert^{2} + b_{2}\left\Vert \tilde{\theta} \right\Vert + b_{3},
\end{equation}
with $\bar{w} \triangleq \bar{\upsilon}\kappa_{T}$ and constants $b_{1}, b_{2}, b_{3} \geq 0$ defined in \cref{app:traj_decomp_proof}, where $b_{1} = O(T\kappa_{T}^{2})$, $b_{2}$ is proportional to $\bar{\varepsilon}$, and $b_{3}$ vanishes with $\bar{\varepsilon}$.
\end{maintheorem}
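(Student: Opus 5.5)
I will construct $W$ and $d$ explicitly from the error dynamics and then obtain the gradient decomposition by rewriting the adjoint integral as a Gram integral. Subtracting the true dynamics $\dot X = F(X,\theta^{*},\tau) + B\varepsilon(X)$ (using the recorded input) from \cref{eq:virtual_system} gives
\begin{equation*}
\dot e = F(\hat X_p,\hat\theta,\tau) - F(X,\hat\theta,\tau) + F(X,\hat\theta,\tau) - F(X,\theta^{*},\tau) - B\varepsilon(X), \qquad e(t-T;t)=0.
\end{equation*}
For the first difference I use the mean-value matrix $A(\tau) \triangleq \int_{0}^{1} \frac{\partial F}{\partial\chi}(X+s e,\hat\theta,\tau)\,ds$. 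By convexity of $\Omega$ and the containment result under \cref{eq:containment_cond}, this matrix is well defined with $\|A\|\le L_F$.

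For the second difference I Taylor-expand $\Phi(X,\theta^{*})$ about $\hat\theta$. This gives $-B\Phi'(X,\hat\theta)\tilde\theta - B r(\tau)$ with $\|r\|\le \tfrac12 c_{\theta\theta}\|\tilde\theta\|^2$. Let $\Phi_{A}(\tau,s)$ be the state-transition matrix of $A$. Variation of constants then yields $e=-(W\tilde\theta+d)$ with
\begin{equation*}
W(\tau) \triangleq \int_{t-T}^{\tau}\Phi_{A}(\tau,s)B\Phi'(X(s),\hat\theta)\,ds, \qquad d(\tau)\triangleq\int_{t-T}^{\tau}\Phi_{A}(\tau,s)B\bigl(r(s)+\varepsilon(X(s))\bigr)\,ds.
\end{equation*}
Grönwall gives $\|\Phi_A(\tau,s)\|\le e^{L_F(\tau-s)}$, hence $\|W\|\le \bar\upsilon\kappa_T=\bar w$ and $\|d\|\le\kappa_T(\tfrac12 c_{\theta\theta}\|\tilde\theta\|^2+\bar\varepsilon)$. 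This establishes the first claim, and $H\succeq 0$ with $\|H\|\le T\bar w^2$ is then immediate.

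For the gradient, the linear adjoint in \cref{eq:adjoint_gradient} is solved as $\nu(\tau)=\int_\tau^t \Phi_{\bar A}(s,\tau)^\top e(s)\,ds$, where $\bar A(\tau)=\frac{\partial F}{\partial\chi}(\hat X_p,\hat\theta,\tau)$ is the Jacobian along the prediction. Substituting into \cref{eq:Xia_def} and applying Fubini gives
\begin{equation*}
\Xi_a=-\int_{t-T}^t \widehat W(s)^\top e(s)\,ds, \qquad \widehat W(s)\triangleq\int_{t-T}^{s}\Phi_{\bar A}(s,\tau)B\Phi'(\hat X_p(\tau),\hat\theta)\,d\tau,
\end{equation*}
so $\widehat W$ is the forward sensitivity $\partial\hat X_p/\partial\hat\theta$. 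Inserting $e=-(W\tilde\theta+d)$ yields
\begin{equation*}
\Xi_a=H\tilde\theta+\eta_a, \qquad \eta_a\triangleq\int(\widehat W-W)^\top W\,ds\;\tilde\theta+\int\widehat W^\top d\,ds.
\end{equation*}

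The main obstacle is bounding $\widehat W-W$, since the two transition matrices are evaluated along different curves and the regressors at different states ($\hat X_p$ versus $X$). I plan to write $\widehat W-W$ through its own variation-of-constants representation. The difference then splits into two pieces.
\begin{itemize}
\item The mismatch $\bar A-A$ is bounded by $L_{FX}\|e\|$. A Grönwall comparison of the two transition matrices gives a bound of order $L_{FX}\kappa_T^2\|e\|$.
\item The regressor mismatch satisfies $\|\Phi'(\hat X_p)-\Phi'(X)\|\le c_{X\theta}\|e\|$.
\end{itemize}
Both pieces are proportional to $\sup\|e\|\le \bar w\|\tilde\theta\|+\kappa_T(\tfrac12c_{\theta\theta}\|\tilde\theta\|^2+\bar\varepsilon)$. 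Multiplying by $\|W\|\,\|\tilde\theta\|$ therefore produces terms in $\|\tilde\theta\|^2$, $\|\tilde\theta\|^3$ and $\bar\varepsilon\|\tilde\theta\|$. I fold the cubic term into the quadratic one using $\|\tilde\theta\|\le\bar{\tilde\theta}$. The remaining term satisfies $\|\int\widehat W^\top d\|\le T\bar w\kappa_T(\tfrac12 c_{\theta\theta}\|\tilde\theta\|^2+\bar\varepsilon)$.

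Collecting the three pieces, $b_1$ gathers all quadratic contributions and is $O(T\kappa_T^2)$ in the leading Taylor term. The coefficient $b_2$ collects the $\bar\varepsilon\|\tilde\theta\|$ terms, and $b_3=T\bar w\kappa_T\bar\varepsilon$. This matches the claimed dependence, with $b_2\propto\bar\varepsilon$ and $b_3$ vanishing with $\bar\varepsilon$.
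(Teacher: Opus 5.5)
Your proposal is correct and follows the paper's proof essentially step by step. The same mean-value linearization and variation of constants produce the same $W$ and $d$. The adjoint signal is rewritten as $-\int \widehat W^{\top} e\,d\tau$ with $\widehat W$ the forward sensitivity: you solve the adjoint explicitly and apply Fubini, while the paper integrates by parts in \cref{thm:adjoint}. Finally, $\widehat W - W$ is controlled by the same transition-matrix and regressor perturbation bounds.

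The only difference is bookkeeping. You split $\eta_a$ as $\int(\widehat W - W)^{\top} W\,d\tau\,\tilde{\theta} + \int \widehat W^{\top} d\,d\tau$ and use $\|\widehat W\|\le \bar w$, which is valid because $\hat{X}_{p}$ stays in $\Omega$. This gives constants slightly different from those in \cref{eq:b_defs}, e.g.\ $b_3 = T\bar w\kappa_T\bar{\varepsilon}$ instead of $T d_0(\bar w + D_S d_0)$, but with the same stated dependence on $T$, $\kappa_T$, and $\bar{\varepsilon}$.
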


The proof is given in \cref{app:traj_decomp_proof}. Every constant in \cref{eq:etaa_bound} carries a factor $T\kappa_{T}$, so $b_{1}, b_{2}, b_{3} \to 0$ as $T \to 0$. No observer error appears because \cref{eq:ET_def} compares the rollout with measured states. The corresponding cost is the Gronwall growth of the constants with $T$. \cref{sec:certified} uses this decomposition to certify the online update.

\subsubsection{Certified online update}
\label{sec:certified}

The trajectory-based windowed adaptation law is designed as
\begin{equation}
\label{eq:adapt_law_a}
\dot{\hat{\theta}} = \mathrm{proj}_{\Gamma_{0}}\!\left(\hat{\theta},\; \alpha\, \Xi_{a}(t) - k_{\sigma}\, \hat{\theta}\right),
\end{equation}
with a constant symmetric positive definite gain $\Gamma_{0}$ satisfying $\underline{\lambda} I_{p} \preceq \Gamma_{0} \preceq \bar{\lambda} I_{p}$, and $\Xi_{a} \triangleq 0_{p}$ on $[t_{0}, t_{0}+T]$. Forming $H$ online would require extra solves, so the trajectory law uses a constant gain. This gain contributes no additional Lyapunov term, and therefore $\alpha > \tfrac{1}{2}$ is not required. The resulting guarantee is stated next.

\begin{proposition}
\label{thm:adjoint_uub}
Let \cref{assm:bounded_trajectory,assm:f_smooth} hold, let $\hat{\theta}(t_{0}) \in \Theta_{\epsilon}$, and let the window length and gains satisfy \cref{eq:containment_cond} and $k_{\sigma} > 2\alpha\left(b_{1}\bar{\tilde{\theta}} + b_{2}\right)$. Then, for the adaptation law in \cref{eq:adapt_law_a} with the adjoint gradient in \cref{eq:Xia_def}, the parameter estimation error satisfies, for all $t \geq t_{1} = t_{0} + T$,
\begin{equation}
\label{eq:theta_bound_a}
\left\Vert \tilde{\theta}(t) \right\Vert \leq \sqrt{\tfrac{\bar{\lambda}}{\underline{\lambda}} \left\Vert \tilde{\theta}(t_{1}) \right\Vert^{2} e^{-\lambda_{a}(t - t_{1})} + \tfrac{\bar{\lambda}\, c_{a}}{k_{1a}\, \underline{\lambda}}},
\end{equation}
where $k_{1a} \triangleq \tfrac{k_{\sigma}}{2} - \alpha\left(b_{1}\bar{\tilde{\theta}} + b_{2}\right) > 0$, $\lambda_{a} \triangleq 2 k_{1a} \underline{\lambda}$, and $c_{a} \triangleq k_{\sigma}\bar{\theta}^{2} + \alpha^{2} b_{3}^{2}/k_{\sigma}$.
\end{proposition}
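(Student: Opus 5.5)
I will use the constant-metric Lyapunov function $V \triangleq \tfrac{1}{2}\tilde{\theta}^{\top}\Gamma_{0}^{-1}\tilde{\theta}$. Since $\underline{\lambda} I_{p} \preceq \Gamma_{0} \preceq \bar{\lambda} I_{p}$, it satisfies $\tfrac{1}{2\bar{\lambda}}\Vert\tilde{\theta}\Vert^{2} \leq V \leq \tfrac{1}{2\underline{\lambda}}\Vert\tilde{\theta}\Vert^{2}$.

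On $[t_{0},t_{1}]$ we have $\Xi_{a}=0$, and the projection keeps $\hat{\theta}\in\Theta_{\epsilon}$. So the estimate is well defined and bounded up to $t_{1}$, and the analysis starts at $t_{1}$. For $t\geq t_{1}$, the projection keeps $\hat{\theta}(t)\in\Theta_{\epsilon}$, so $\Vert\tilde{\theta}\Vert\leq\bar{\tilde{\theta}}$. Together with \cref{eq:containment_cond}, this means \cref{lem:traj_decomp} applies at every such $t$.

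Since $\theta^{*}$ is constant, $\dot{\tilde{\theta}}=-\dot{\hat{\theta}}$, and
\[
\dot V = -\tilde{\theta}^{\top}\Gamma_{0}^{-1}\mathrm{proj}_{\Gamma_{0}}\!\left(\hat{\theta},\alpha\Xi_{a}-k_{\sigma}\hat{\theta}\right).
\]
The metric property \cref{eq:proj_prop} holds for any fixed SPD gain, in particular $\Gamma_{0}$, and gives
\[
\dot V \leq -\alpha\tilde{\theta}^{\top}\Xi_{a} + k_{\sigma}\tilde{\theta}^{\top}\hat{\theta}.
\]
I then substitute \cref{eq:Xia_decomp}. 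The term $-\alpha\tilde{\theta}^{\top}H\tilde{\theta}\leq 0$ because $H\succeq0$, and I discard it; there is no gain-dynamics term, so $\alpha>\tfrac12$ is not needed. For the sigma term I use $\hat{\theta}=\theta^{*}-\tilde{\theta}$, $\Vert\theta^{*}\Vert\leq\bar{\theta}$ and Young's inequality:
\[
k_{\sigma}\tilde{\theta}^{\top}\hat{\theta} \leq -\tfrac{k_{\sigma}}{2}\Vert\tilde{\theta}\Vert^{2}+\tfrac{k_{\sigma}}{2}\bar{\theta}^{2}.
\]
The perturbation is bounded by \cref{eq:etaa_bound} together with $\Vert\tilde{\theta}\Vert\leq\bar{\tilde{\theta}}$:
\[
\alpha\Vert\tilde{\theta}\Vert\,\Vert\eta_{a}\Vert \leq \alpha\left(b_{1}\bar{\tilde{\theta}}+b_{2}\right)\Vert\tilde{\theta}\Vert^{2}+\alpha b_{3}\Vert\tilde{\theta}\Vert.
\]
The cubic term is made quadratic here using the a priori bound from projection, which is why the gain condition involves $\bar{\tilde{\theta}}$. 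The linear term is handled by Young's inequality:
\[
\alpha b_{3}\Vert\tilde{\theta}\Vert \leq \tfrac{k_{\sigma}}{4}\Vert\tilde{\theta}\Vert^{2}+\tfrac{\alpha^{2}b_{3}^{2}}{k_{\sigma}}.
\]
Collecting terms gives $\dot V \leq -k'\Vert\tilde{\theta}\Vert^{2}+\tfrac12 c_{a}$, where $c_{a}=k_{\sigma}\bar{\theta}^{2}+\alpha^{2}b_{3}^{2}/k_{\sigma}$ after adjusting the Young weights. The remaining task is to split the weights so that the surviving quadratic coefficient is exactly $k_{1a}=\tfrac{k_{\sigma}}{2}-\alpha(b_{1}\bar{\tilde{\theta}}+b_{2})$. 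One way is to split $\tfrac{k_{\sigma}}{2}\Vert\tilde{\theta}\Vert^{2}$ asymmetrically, or equivalently to write $\dot V\leq -k_{1a}\Vert\tilde{\theta}\Vert^{2}+c_{a}$ with the constants absorbed as in \cref{thm:drift}. The hypothesis $k_{\sigma}>2\alpha(b_{1}\bar{\tilde{\theta}}+b_{2})$ is exactly $k_{1a}>0$.

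I then convert to $V$ using $\Vert\tilde{\theta}\Vert^{2}\geq 2\underline{\lambda}V$, which yields $\dot V\leq-\lambda_{a}V+c_{a}$ with $\lambda_{a}=2k_{1a}\underline{\lambda}$. The comparison lemma gives
\[
V(t)\leq V(t_{1})e^{-\lambda_{a}(t-t_{1})}+\tfrac{c_{a}}{\lambda_{a}}.
\]
The sandwich bounds then give
\[
\Vert\tilde{\theta}(t)\Vert^{2}\leq\tfrac{\bar{\lambda}}{\underline{\lambda}}\Vert\tilde{\theta}(t_{1})\Vert^{2}e^{-\lambda_{a}(t-t_{1})}+\tfrac{2\bar{\lambda}c_{a}}{\lambda_{a}},
\]
and $\tfrac{2\bar{\lambda}c_{a}}{\lambda_{a}}=\tfrac{\bar{\lambda}c_{a}}{k_{1a}\underline{\lambda}}$, which matches \cref{eq:theta_bound_a}. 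Taking square roots completes the argument.

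The main obstacle is rigor at the projection boundary. $\tilde{\theta}$ is only absolutely continuous under the discontinuous projection, so the derivative inequality must hold almost everywhere, and \cref{eq:proj_prop} must apply with the constant gain $\Gamma_{0}$. I will cite \cref{app:proj_proof} for existence and for these facts. A second point to check is that \cref{lem:traj_decomp} is used pointwise in $t$ with $\hat{\theta}(t)$ frozen inside the virtual rollout, and this is consistent because $H$ and $\eta_{a}$ are evaluated at each fixed $t$. Beyond these checks, the remaining steps are bookkeeping of the Young constants.
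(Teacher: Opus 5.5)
Your proof follows the paper's argument step for step. You use the constant-metric function $V=\tfrac12\tilde{\theta}^{\top}\Gamma_{0}^{-1}\tilde{\theta}$ and \cref{eq:proj_prop} with $\Gamma_{0}$. You substitute \cref{eq:Xia_decomp} and discard the $H$ term. You fold the cubic term with $\Vert\tilde{\theta}\Vert\leq\bar{\tilde{\theta}}$, then apply Young's inequality, the comparison lemma, and the sandwich conversion $2\bar{\lambda}c_{a}/\lambda_{a}=\bar{\lambda}c_{a}/(k_{1a}\underline{\lambda})$.

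The one step that does not deliver the stated constants as written is the Young bookkeeping. You use the symmetric split $k_{\sigma}\tilde{\theta}^{\top}\theta^{*}\leq\tfrac{k_{\sigma}}{2}\Vert\tilde{\theta}\Vert^{2}+\tfrac{k_{\sigma}}{2}\bar{\theta}^{2}$, followed by your bound on $\alpha b_{3}\Vert\tilde{\theta}\Vert$. That gives
\[
\dot V\leq-\Bigl(\tfrac{k_{\sigma}}{4}-\alpha\bigl(b_{1}\bar{\tilde{\theta}}+b_{2}\bigr)\Bigr)\Vert\tilde{\theta}\Vert^{2}+\tfrac{k_{\sigma}}{2}\bar{\theta}^{2}+\tfrac{\alpha^{2}b_{3}^{2}}{k_{\sigma}}.
\]
The coefficient here is $k_{1a}-\tfrac{k_{\sigma}}{4}$, which the hypothesis $k_{\sigma}>2\alpha(b_{1}\bar{\tilde{\theta}}+b_{2})$ does not make positive. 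The constant is also not $\tfrac12 c_{a}$.

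Once the symmetric split is taken, no reweighting of the $b_{3}$ term can recover $k_{1a}$ (unless $b_{3}=0$). The quadratic coefficient is already exactly $-k_{1a}$ before that term is absorbed, so there is no slack left. The adjustment has to be made on the sigma term instead. Use $k_{\sigma}\tilde{\theta}^{\top}\theta^{*}\leq\tfrac{k_{\sigma}}{4}\Vert\tilde{\theta}\Vert^{2}+k_{\sigma}\bar{\theta}^{2}$, i.e.\ $ab\leq a^{2}/4+b^{2}$, as the paper does. Then $-k_{\sigma}+\tfrac{k_{\sigma}}{4}+\tfrac{k_{\sigma}}{4}=-\tfrac{k_{\sigma}}{2}$, and you get exactly $\dot V\leq-k_{1a}\Vert\tilde{\theta}\Vert^{2}+c_{a}$. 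The rest of your argument then goes through unchanged.

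A minor point: the projection in \cref{eq:proj_def} is continuous and locally Lipschitz, since its correction term vanishes when $p_{c}=0$ or $\nabla p_{c}^{\top}\Gamma_{0}y=0$. So the only regularity issue is the switch of $\Xi_{a}$ at $t_{1}$. That is covered by Carath\'{e}odory solutions and an almost-everywhere derivative inequality.
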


The proof is given in \cref{app:adjoint_uub_proof}. Since $b_{3}$ vanishes with $\bar{\varepsilon}$, the residual $\alpha^{2}b_{3}^{2}/k_{\sigma}$ vanishes with the approximation error. Moreover, $b_{1}, b_{2} \to 0$ as $T \to 0$, so the gain condition can always be met by shortening the window. For $L_{F}T \leq 1$, it holds whenever $T \leq (k_{\sigma}/(2\alpha C_{b}))^{1/3}$ for a computable constant $C_{b}$ (\cref{app:adjoint_uub_proof}). \cref{sec:stack} extends the trajectory law to stored segments.

\subsection{Stage III: proposed stored-segment NODE-CL}
\label{sec:stack}

We now replace the sliding window of Stage II by stored trajectory segments, in the spirit of concurrent learning. A sliding window retains only the most recently visited region of the state space, and a constant gain does not compensate for the scale of the trajectory operator $H$, which grows as $T^{3}$ for short windows since $\left\Vert H \right\Vert \leq T\bar{w}^{2}$ and $\bar{w} = \bar{\upsilon}\kappa_{T} \approx \bar{\upsilon} T$. NODE-CL addresses these limitations by retaining selected trajectory segments and adapting the gain. Let $\mathcal{S}(t) = \{(X_{j}(\cdot), u_{j}(\cdot))\}_{j=1}^{N_{s}}$ be $N_{s}$ stored segments, each a measured state and input record over $[t_{j}, t_{j} + T_{s}]$ with $t_{j} + T_{s} \leq t$ and membership piecewise constant in $t$; let $\hat{X}_{p,j}$ solve \cref{eq:virtual_system} from $X_{j}(t_{j})$ with the current $\hat{\theta}(t)$, $e_{j} \triangleq \hat{X}_{p,j} - X_{j}$, and $S_{j} \triangleq \partial \hat{X}_{p,j}/\partial \hat{\theta}$. Define
\begin{equation}
\label{eq:stack_defs}
\Xi_{\mathcal{S}} \triangleq -\frac{1}{N_{s}}\sum_{j=1}^{N_{s}} \int_{t_{j}}^{t_{j}+T_{s}} S_{j}^{\top} e_{j}\, d\tau,
\qquad
G_{\mathcal{S}} \triangleq \frac{1}{N_{s}}\sum_{j=1}^{N_{s}} \int_{t_{j}}^{t_{j}+T_{s}} S_{j}^{\top} S_{j}\, d\tau,
\end{equation}
the negative gradient of the averaged segment loss and its Gauss--Newton matrix, both produced by one batched forward pass at cost $O(N_{s} N_{w} n^{2} p)$ per update ($N_{w}$ integration steps per segment). Applying the trajectory-error decomposition of \cref{lem:traj_decomp} segment by segment gives $\Xi_{\mathcal{S}} = H_{\mathcal{S}}\tilde{\theta} + \eta_{\mathcal{S}}$ with $H_{\mathcal{S}} \triangleq \tfrac{1}{N_{s}}\sum_{j=1}^{N_{s}}\int_{t_{j}}^{t_{j}+T_{s}} W_{j}^{\top} W_{j}\, d\tau \succeq 0$ and $\left\Vert \eta_{\mathcal{S}} \right\Vert \leq b_{1}\left\Vert \tilde{\theta} \right\Vert^{2} + b_{2}\left\Vert \tilde{\theta} \right\Vert + b_{3}$ at segment length $T_{s}$. The computable $G_{\mathcal{S}}$ differs from $H_{\mathcal{S}}$ only through $S_{j} - W_{j}$: as shown in \cref{app:stack_proof}, $\left\Vert G_{\mathcal{S}} - H_{\mathcal{S}} \right\Vert \leq g_{2}\left\Vert \tilde{\theta} \right\Vert^{2} + g_{1}\left\Vert \tilde{\theta} \right\Vert + g_{0}$ for constants $g_{0}, g_{1}, g_{2}$ built from those of \cref{lem:traj_decomp} at $T_{s}$.

The stored-segment law is $\dot{\hat{\theta}} = \mathrm{proj}_{\Gamma}(\hat{\theta}, \alpha\Xi_{\mathcal{S}} - k_{\sigma}\hat{\theta})$ with $\Gamma$ generated by \cref{eq:Gamma_update}, using $G_{\mathcal{S}}$ in place of $\Psi$. On the excited subspace, the forgetting update equilibrates at $\Gamma \approx \beta_{0} G_{\mathcal{S}}^{-1}$. The resulting projected Gauss--Newton flow is insensitive to the $T_{s}^{3}$ scale and can therefore use short segments.

\setcounter{algorithm}{2}
\begin{algorithm}[H]
\caption{Proposed NODE-CL: Stored-Segment Gauss--Newton Identification}
\label{alg:stack}
\begin{algorithmic}[1]
\REQUIRE Segment length $T_{s}$, budget $N_{s}$, gains $\alpha, k_{\sigma}, \beta_{0}, \bar{\lambda}, \underline{\lambda}$
\STATE Initialize $\hat{\theta}(t_{0}) \in \Theta_{\epsilon}$, $\Gamma(t_{0}) = \Gamma_{0}$, and segment set $\mathcal{S} \leftarrow \emptyset$
\FOR{each update time $t$}
  \STATE Form a candidate from the latest $T_{s}$ of $(X,u)$ and admit it if it improves the minimum singular value of the stacked terminal sensitivities
  \STATE Roll out all segments in $\mathcal{S}$ with current $\hat{\theta}$; integrate sensitivities $S_j$ and accumulate $\Xi_{\mathcal{S}},G_{\mathcal{S}}$ by \cref{eq:stack_defs}
  \STATE Update $\Gamma$ by \cref{eq:Gamma_update}, replacing $\Psi$ with $G_{\mathcal{S}}$
  \STATE Update $\hat{\theta}$ by $\dot{\hat{\theta}}=\mathrm{proj}_{\Gamma}(\hat{\theta},\alpha\Xi_{\mathcal{S}}-k_{\sigma}\hat{\theta})$
\ENDFOR
\end{algorithmic}
\end{algorithm}

NODE-CL uses short segments to control sensitivity growth and admits segments that add complementary directions within a fixed memory budget. Before the projected update, $G_{\mathcal{S}}$ rescales those directions. Segment length therefore determines a bias--conditioning tradeoff, while the admission rule remains a data-selection heuristic rather than an excitation guarantee. The Gauss--Newton rescaling distinguishes NODE-CL from replaying the same segments with a generic optimizer. The following theorem certifies NODE-CL.

\begin{maintheorem}
\label{thm:stack}
Let \cref{assm:bounded_trajectory,assm:f_smooth} hold, let \cref{eq:containment_cond} hold at $T_{s}$, and activate NODE-CL at time $t_s$ with a nonempty segment set, $\hat{\theta}(t_s) \in \Theta_{\epsilon}$, and $\underline{\lambda}I_{p} \preceq \Gamma(t_s) \preceq \bar{\lambda}I_{p}$. Suppose $\alpha > \tfrac{1}{2}$ and
\begin{equation}
k_{\sigma} > 2\alpha(b_{1}\bar{\tilde{\theta}} + b_{2}) + g_{2}\bar{\tilde{\theta}}^{2} + g_{1}\bar{\tilde{\theta}} + g_{0}.
\label{eq:stack_gain_condition}
\end{equation}
Then $\underline{\lambda}I_p \preceq \Gamma(t) \preceq \bar{\lambda} I_p$ for all $t \geq t_s$. For any $\varphi_s\geq 0$ such that $H_{\mathcal{S}}(t)\succeq\varphi_s I_p$ for all $t\geq t_s$ (always valid with $\varphi_s=0$), let
\begin{equation}
k_{1s} \triangleq \tfrac{k_{\sigma}}{2} - \alpha(b_{1}\bar{\tilde{\theta}} + b_{2}) - \tfrac{1}{2}(g_{2}\bar{\tilde{\theta}}^{2} + g_{1}\bar{\tilde{\theta}} + g_{0}),
\quad k_s(\varphi_s) \triangleq k_{1s}+(\alpha-\tfrac{1}{2})\varphi_s,
\quad \lambda_s(\varphi_s) \triangleq 2k_s(\varphi_s)\underline{\lambda}.
\label{eq:stack_rate}
\end{equation}
The parameter error satisfies
\begin{equation}
\left\Vert \tilde{\theta}(t) \right\Vert \leq \sqrt{\frac{\bar{\lambda}}{\underline{\lambda}}\left\Vert \tilde{\theta}(t_s) \right\Vert^2 e^{-\lambda_s(\varphi_s)(t-t_s)} + \frac{\bar{\lambda}c_a}{\underline{\lambda}k_s(\varphi_s)}},
\qquad
\limsup_{t\to\infty}\left\Vert \tilde{\theta}(t) \right\Vert \leq \sqrt{\frac{\bar{\lambda}c_a}{\underline{\lambda}k_s(\varphi_s)}}.
\label{eq:stack_bound}
\end{equation}
\end{maintheorem}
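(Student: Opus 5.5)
The plan is to use the Lyapunov function $V=\tfrac{1}{2}\tilde{\theta}^{\top}\Gamma^{-1}\tilde{\theta}$, as in \cref{thm:drift}, with the stored-segment decomposition $\Xi_{\mathcal{S}}=H_{\mathcal{S}}\tilde{\theta}+\eta_{\mathcal{S}}$ and the mismatch bound on $G_{\mathcal{S}}-H_{\mathcal{S}}$ from \cref{app:stack_proof}.

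\textbf{Step 1: gain bounds.} I would first show $\underline{\lambda}I_p\preceq\Gamma(t)\preceq\bar{\lambda}I_p$, reusing the argument of \cref{app:gamma_proof} with $G_{\mathcal{S}}$ in place of $\Psi$. That argument only needs the driving matrix to be positive semi-definite and bounded on $\Omega\times\Theta_\epsilon$, and $G_{\mathcal{S}}$ is a sum of Gram integrals, so $\|G_{\mathcal{S}}\|\le T_s\bar{\upsilon}^2\kappa_{T_s}^2$.
\begin{itemize}
\item Upper bound: the Riccati term is negative semi-definite, and the logistic term vanishes at $\Gamma=\bar{\lambda}I_p$, so the largest eigenvalue cannot cross $\bar{\lambda}$.
\item Lower bound: the gate switches off the Riccati term below $\underline{\lambda}$, where the logistic term pushes the smallest eigenvalue upward.
\end{itemize}
Membership of $\mathcal{S}$ is piecewise constant in $t$, so $G_{\mathcal{S}}$ is piecewise continuous. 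I would handle this by solving the ODE on each interval between admissions and concatenating the solutions, since $\Gamma$ stays continuous across switches.

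\textbf{Step 2: the derivative of $V$.} Using $\frac{d}{dt}\Gamma^{-1}=-\Gamma^{-1}\dot{\Gamma}\Gamma^{-1}$,
\[
\dot V\le -\tilde{\theta}^{\top}(\alpha\Xi_{\mathcal{S}}-k_\sigma\hat{\theta}) - \tfrac{1}{2}\beta_0\tilde{\theta}^{\top}(\Gamma^{-1}-\bar{\lambda}^{-1}I_p)\tilde{\theta}+\tfrac{1}{2}\varsigma_\Gamma\tilde{\theta}^{\top}G_{\mathcal{S}}\tilde{\theta}.
\]
The projection contributes through \cref{eq:proj_prop}; because $\dot{\tilde{\theta}}=-\dot{\hat{\theta}}$, it produces the inequality in the right direction. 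The forgetting term is nonpositive because $\Gamma\preceq\bar{\lambda}I_p$. I would then insert the decompositions and use $\varsigma_\Gamma\le 1$:
\begin{itemize}
\item The operator terms give $-\alpha\tilde{\theta}^{\top}H_{\mathcal{S}}\tilde{\theta}+\tfrac{1}{2}\tilde{\theta}^{\top}H_{\mathcal{S}}\tilde{\theta}=-(\alpha-\tfrac{1}{2})\tilde{\theta}^{\top}H_{\mathcal{S}}\tilde{\theta}\le-(\alpha-\tfrac{1}{2})\varphi_s\|\tilde{\theta}\|^2$. This is where $\alpha>\tfrac{1}{2}$ is used.
\item The mismatch term is bounded by $\tfrac{1}{2}(g_2\|\tilde{\theta}\|^2+g_1\|\tilde{\theta}\|+g_0)\|\tilde{\theta}\|^2$. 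I would replace $\|\tilde{\theta}\|$ inside the bracket by $\bar{\tilde{\theta}}$.
\item The perturbation gives $\alpha\|\tilde{\theta}\|(b_1\|\tilde{\theta}\|^2+b_2\|\tilde{\theta}\|+b_3)$. I would bound the cubic and quadratic parts by $\alpha(b_1\bar{\tilde{\theta}}+b_2)\|\tilde{\theta}\|^2$.
\item The sigma term is the standard one: $k_\sigma\tilde{\theta}^{\top}\hat{\theta}\le-\tfrac{k_\sigma}{2}\|\tilde{\theta}\|^2+\tfrac{k_\sigma}{2}\bar{\theta}^2$.
\item The cross term is handled by Young's inequality: $\alpha b_3\|\tilde{\theta}\|\le\tfrac{k_\sigma}{4}\|\tilde{\theta}\|^2+\alpha^2b_3^2/k_\sigma$. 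I would allocate these fractions so that the constant collects into at most $c_a/2$, and the gain condition \cref{eq:stack_gain_condition} leaves exactly the coefficient $k_{1s}$.
\end{itemize}

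\textbf{Step 3: comparison.} Collecting terms should give $\dot V\le-k_s(\varphi_s)\|\tilde{\theta}\|^2+c_a/2$. Since $\|\tilde{\theta}\|^2\ge 2\underline{\lambda}V$, this becomes $\dot V\le-\lambda_s V+c_a/2$. Integrating and using $\tfrac{1}{2\bar{\lambda}}\|\tilde{\theta}\|^2\le V\le\tfrac{1}{2\underline{\lambda}}\|\tilde{\theta}\|^2$ gives \cref{eq:stack_bound}: the steady-state bound is $V\le c_a/(2\lambda_s)=c_a/(4k_s\underline{\lambda})$, which is dominated by the stated $\bar{\lambda}c_a/(\underline{\lambda}k_s)$.

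\textbf{Main obstacle.} I expect the hardest step to be the $G_{\mathcal{S}}$–$H_{\mathcal{S}}$ mismatch. The computable $G_{\mathcal{S}}$ drives the gain, while $H_{\mathcal{S}}$ is what appears in the gradient, so the sink $(\alpha-\tfrac{1}{2})H_{\mathcal{S}}$ survives only up to a term that can be indefinite. This must be absorbed by $k_\sigma$, which requires the mismatch bound of \cref{app:stack_proof}; I would establish it first from $\|S_j-W_j\|=O(\kappa_{T_s}\|\tilde{\theta}\|)$ plus an $\bar{\varepsilon}$ term via Gronwall, writing $G_{\mathcal{S}}-H_{\mathcal{S}}=\sum\int\big[(S_j-W_j)^{\top}S_j+W_j^{\top}(S_j-W_j)\big]$. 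A secondary subtlety is that the gate $\varsigma_\Gamma$ may be less than one. This is harmless, because the mismatch and $H_{\mathcal{S}}$ terms are then scaled down and only help the nonpositive side, provided $\alpha-\tfrac{\varsigma_\Gamma}{2}\ge\alpha-\tfrac{1}{2}$.
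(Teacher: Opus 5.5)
Your route is the paper's. You reuse \cref{lem:gamma} with $G_{\mathcal{S}}\succeq 0$, take $V=\tfrac{1}{2}\tilde{\theta}^{\top}\Gamma^{-1}\tilde{\theta}$, and apply the projection inequality. You then split $\tfrac{1}{2}\varsigma_{\Gamma}\tilde{\theta}^{\top}G_{\mathcal{S}}\tilde{\theta}$ into an $H_{\mathcal{S}}$ part, absorbed via $\alpha>\tfrac{1}{2}$ and $\varsigma_{\Gamma}\le 1$, and a mismatch part controlled by \cref{lem:gn}. The paper proves \cref{lem:gn} with exactly your identity $S_j^{\top}S_j-W_j^{\top}W_j=(S_j-W_j)^{\top}S_j+W_j^{\top}(S_j-W_j)$.

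What does not close as written is the leakage bookkeeping in Steps 2--3. Your two Young bounds are $k_{\sigma}\tilde{\theta}^{\top}\hat{\theta}\le-\tfrac{k_{\sigma}}{2}\|\tilde{\theta}\|^{2}+\tfrac{k_{\sigma}}{2}\bar{\theta}^{2}$ and $\alpha b_{3}\|\tilde{\theta}\|\le\tfrac{k_{\sigma}}{4}\|\tilde{\theta}\|^{2}+\alpha^{2}b_{3}^{2}/k_{\sigma}$. Together they consume $\tfrac{3k_{\sigma}}{4}$ of the leakage, so the surviving coefficient is $k_{1s}-\tfrac{k_{\sigma}}{4}$. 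Its positivity requires twice the right-hand side of \cref{eq:stack_gain_condition}, not the condition itself.

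No reallocation reaches your Step 3 target $\dot V\le-k_{s}(\varphi_s)\|\tilde{\theta}\|^{2}+c_{a}/2$ either. To keep $-\tfrac{k_{\sigma}}{2}\|\tilde{\theta}\|^{2}$ you may spend only $\tfrac{k_{\sigma}}{2}\|\tilde{\theta}\|^{2}$ on the two linear terms $k_{\sigma}\bar{\theta}\|\tilde{\theta}\|$ and $\alpha b_{3}\|\tilde{\theta}\|$. A split in proportions $s$ and $1-s$ leaves the constant $\tfrac{k_{\sigma}\bar{\theta}^{2}}{2s}+\tfrac{\alpha^{2}b_{3}^{2}}{2(1-s)k_{\sigma}}\ge\tfrac{c_{a}}{2}+\alpha\bar{\theta}b_{3}$. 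This is strictly above $c_{a}/2$ whenever $b_{3}\bar{\theta}>0$.

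The repair is the paper's allocation $s=\tfrac{1}{2}$. Write the sigma term as $-k_{\sigma}\|\tilde{\theta}\|^{2}+k_{\sigma}\tilde{\theta}^{\top}\theta^{*}$ and use $k_{\sigma}\tilde{\theta}^{\top}\theta^{*}\le\tfrac{k_{\sigma}}{4}\|\tilde{\theta}\|^{2}+k_{\sigma}\bar{\theta}^{2}$ alongside your cross bound. This gives $\dot V\le-k_{s}(\varphi_s)\|\tilde{\theta}\|^{2}+c_{a}\le-\lambda_{s}(\varphi_s)V+c_{a}$, where $k_{1s}>0$ is exactly \cref{eq:stack_gain_condition}. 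Hence $V(t)\le V(t_s)e^{-\lambda_s(\varphi_s)(t-t_s)}+c_{a}/\lambda_s(\varphi_s)$. Multiplying by $2\bar{\lambda}$ and using $V(t_s)\le\|\tilde{\theta}(t_s)\|^{2}/(2\underline{\lambda})$ yields \cref{eq:stack_bound} with the constants matching exactly.

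So the stated radius $\bar{\lambda}c_{a}/(\underline{\lambda}k_{s}(\varphi_s))$ has no slack to spare. Your remark that $V\le c_{a}/(2\lambda_s)$ is ``dominated'' by it also compared a bound on $V$ with a bound on $\|\tilde{\theta}\|^{2}$ without the conversion factor $2\bar{\lambda}$. With this correction the rest of your argument goes through unchanged.
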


The proof is given in \cref{app:stack_proof}. Thus excitation increases $k_s(\varphi_s)$, improving both the exponential rate and the certified ultimate radius; without excitation, set $\varphi_s=0$. The ratio $\bar{\lambda}/\underline{\lambda}$ in \cref{eq:stack_bound} arises only from converting the Lyapunov metric to the Euclidean norm. By the same bound on $G_{\mathcal{S}} - H_{\mathcal{S}}$, the computable condition $G_{\mathcal{S}} \succeq \varphi I_{p}$ implies $H_{\mathcal{S}} \succeq (\varphi - g_{2}\bar{\tilde{\theta}}^{2} - g_{1}\bar{\tilde{\theta}} - g_{0}) I_{p}$, so the enhanced rate holds with that $\varphi_{s}$ whenever it is positive. Segments are admitted by the singular-value maximizing rule of \citet{Chowdhary2011} on the stacked terminal sensitivities $S_{j}(t_{j}+T_{s})$, a heuristic that does not itself guarantee excitation. The complete procedure is given in \cref{alg:stack}; pseudocode for the drift-window and adjoint-window methods is in \cref{app:algorithms}. \cref{sec:simulations} evaluates the three methods.

\section{Simulations}
\label{sec:simulations}

The experiments test whether the observer-free trajectory pathway is robust to velocity noise while staying competitive on clean data; two diagnostic systems then check individual theorem conditions. \cref{sec:bench} reports the benchmark and \cref{sec:diag} the diagnostics.

\subsection{Benchmark suite}
\label{sec:bench}

\textbf{Methods.} NODE-CL is the proposed stored-segment method evaluated here. Baselines are single-step adaptation, CL-LS (the identification form of the least-squares concurrent-learning law for DNNs of \citet{Hart2025}, the stored-set extension of \cref{thm:drift} in \cref{app:stored_proof}), CL (the same DNN point stack with a constant gain, i.e., the classical stack of \citet{Chowdhary2011} applied to the DNN Jacobian, a simplification of \citet{Hart2025} rather than a separate prior method), and NODE-replay (Adam \citep{Kingma2015} applied to exactly the NODE-CL segments). The drift- and adjoint-window methods are evaluated in the diagnostics.

\textbf{Protocol.} Four DeepMind Control Suite domains \citep{Tassa2018}, pendulum, cartpole, acrobot, and reacher, are used unmodified, and the full acceleration map $\ddot{x} = \Phi(z, \theta)$ with $z = [\phi(x)^{\top}, \dot{x}^{\top}, u^{\top}]^{\top}$ is identified, where $\phi$ maps unlimited hinge angles to $(\cos, \sin)$; no input matrix is assumed known, and \cref{sec:adjoint} applies with $F = [\chi_{v}^{\top}, \Phi(z, \theta)^{\top}]^{\top}$. Each run streams $60\,$s of $1\,$kHz measurements under a multisine excitation with a weak stabilizing baseline; five evaluation seeds vary excitation, initialization, and noise; a sixth, tuning-only seed is excluded. Evaluation is open-loop prediction from eight held-out initial conditions on a held-out record: root-mean-square state error normalized by the state standard deviation at $1\,$s and $4\,$s horizons, angles wrapped. All methods use the same network (one hidden layer, $73$ to $162$ parameters) and, per seed and noise level, the same measurement stream; CL and CL-LS share a point stack, NODE-replay and NODE-CL a segment budget and admission rule, the observer-based methods the observer, and the projected laws the projection. Gains were selected on the tuning seed and held fixed;\footnote{As is standard for Lyapunov-based nonlinear and adaptive designs, the gain conditions in \cref{thm:drift,thm:adjoint_uub,thm:stack} are sufficient, not necessary, and conservative (worst-case Lipschitz and Taylor constants); the gains were tuned for performance and, as \cref{fig:ablation}(c) and \cref{app:sim_details} show, the sampled implementations remain stable well below the certified thresholds.} a six-link swimmer ($p = 356$) that no method learns at this budget is reported in \cref{app:sim_details}.

\begin{table}[t]
\caption{Benchmark suite (five evaluation seeds, tuning seed excluded, $60\,$s): held-out $1\,$s prediction error normalized by the state standard deviation, median over seeds, clean (c) and with velocity noise $\sigma_{v} = 3\times10^{-3}$ (n); in parentheses, seeds on which it is best; best median in bold. Initial-model errors: $2.16$, $2.08$, $1.52$, $1.71$. CL-LS is the identification form of the DNN concurrent-learning law of \citet{Hart2025} (stored-set extension of \cref{thm:drift}, \cref{app:stored_proof}); CL is its constant-gain simplification; NODE-CL is the proposed method in \cref{thm:stack}.}
\label{tab:bench}
\centering
\scriptsize
\setlength{\tabcolsep}{3.25pt}
\begin{tabular}{lcccccccc}
\toprule
 & \multicolumn{2}{c}{pendulum ($p=73$)} & \multicolumn{2}{c}{cartpole ($p=146$)} & \multicolumn{2}{c}{acrobot ($p=162$)} & \multicolumn{2}{c}{reacher ($p=162$)} \\
\cmidrule(lr){2-3}\cmidrule(lr){4-5}\cmidrule(lr){6-7}\cmidrule(lr){8-9}
Method & c & n & c & n & c & n & c & n \\
\midrule
single-step & $1.643$\,(0) & $1.650$\,(0) & $3.287$\,(0) & $3.145$\,(0) & $1.621$\,(1) & $\mathbf{1.232}$\,(2) & $1.509$\,(0) & $1.516$\,(0) \\
CL (constant gain) & $0.315$\,(0) & $0.609$\,(0) & $0.623$\,(0) & $1.363$\,(0) & $3.615$\,(0) & $4.385$\,(0) & $1.007$\,(0) & $0.976$\,(0) \\
CL-LS \citep{Hart2025} & $0.195$\,(0) & $1.642$\,(0) & $\mathbf{0.251}$\,(4) & $1.984$\,(0) & $1.409$\,(0) & $2.015$\,(0) & $\mathbf{0.064}$\,(2) & $0.215$\,(0) \\
NODE-replay (control) & $0.183$\,(0) & $0.141$\,(3) & $0.311$\,(0) & $0.741$\,(1) & $\mathbf{1.002}$\,(2) & $2.129$\,(1) & $0.077$\,(2) & $0.088$\,(2) \\
NODE-CL (proposed) & $\mathbf{0.035}$\,(5) & $\mathbf{0.038}$\,(2) & $0.381$\,(1) & $\mathbf{0.302}$\,(4) & $1.545$\,(2) & $1.279$\,(2) & $0.069$\,(1) & $\mathbf{0.075}$\,(3) \\
\bottomrule
\end{tabular}
\end{table}

\begin{figure}[t]
\centering
\includegraphics[width=0.82\textwidth]{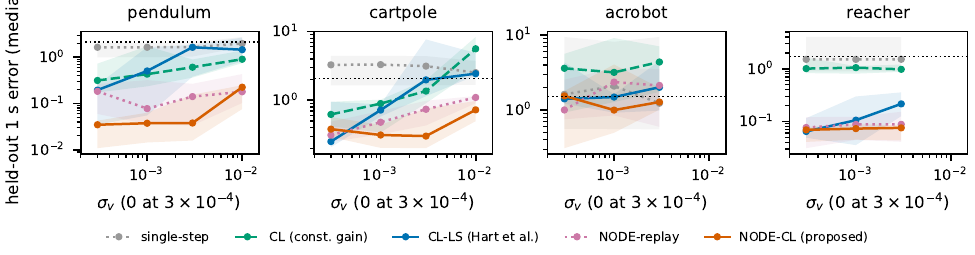}
\caption{Held-out $1\,$s error against velocity noise per domain (five seeds, median and range; dotted, initial model). The $10^{-2}$ level was not run for the acrobot and reacher.}
\label{fig:bench_noise}
\end{figure}

\textbf{Results (\cref{tab:bench}, \cref{fig:bench_noise}).} With clean measurements, NODE-CL is best on every pendulum seed ($0.035$ against $0.18$ to $0.32$) but not on the other domains. On the cartpole and reacher, the three stored-set methods lie within a factor of $1.6$. Under velocity noise of $3 \times 10^{-3}$, CL-LS degrades on these three domains by a factor of $3.3$ to $8.4$ relative to its clean error, and CL degrades by up to $2.2$; $k_{f} = 200$ was tuned on clean data rather than per noise level. NODE-CL has the lowest median on all three domains and wins nine of fifteen seeds, the expected advantage of trajectory residuals when derivative estimates are corrupted by noise. The advantage is not uniform: on the pendulum NODE-replay wins three seeds at $3\times10^{-3}$ despite a higher median, and at $10^{-2}$ NODE-CL is ahead of NODE-replay on the cartpole ($0.73$ against $1.10$) but behind on the pendulum ($0.23$ against $0.18$). On the acrobot, NODE-CL improves on the initial model at $10^{-3}$ and $3\times10^{-3}$ ($1.00$ and $1.28$ against $1.52$) but not with clean data ($1.55$), whereas NODE-replay does so only with clean data ($1.00$). \cref{sec:diag} tests individual conditions of the analysis on two diagnostic systems.

\subsection{Diagnostic studies and ablations}
\label{sec:diag}

Two diagnostic systems, a teacher network (S1, $p=25$) with known $\theta^{*}$ and a two-link manipulator (S2, $p=86$), test distinct predictions of the analysis. \textbf{Excited subspace:} in S1, the component of the parameter error in the excited subspace contracts to $0.27$ of its initial value while the ratio for the full parameter error remains near $0.80$, consistent with the excitation-free functional bound of \cref{app:drift_proof}. \textbf{Horizon:} the adjoint-window error increases about $3.7$-fold from $T = 0.8$ to $1.6\,$s, consistent with the horizon dependence in \cref{eq:etaa_bound}. \textbf{Sufficient gains:} sampled implementations remain stable below the conservative thresholds on $\alpha$ and leakage. Full curves, seed ranges, and condition checks are in \cref{app:sim_details}. \cref{sec:conclusion} concludes.

\section{Conclusion}
\label{sec:conclusion}

We showed that online Neural ODE training and concurrent learning fit in one Lyapunov-certified framework for adaptive identification: drift and trajectory residuals both yield a gradient of the form regressor times parameter error plus a bounded perturbation, so one projected least-squares update is certified for windowed and stored data, and its stored-segment instance, NODE-CL, brings a Gauss--Newton gain to concurrent learning without state-derivative estimates. On the pendulum, cartpole, and reacher, NODE-CL is best at $\sigma_v = 3 \times 10^{-3}$, where the observer-based CL-LS degrades sharply; with clean data it is best on the pendulum and within a factor of $1.6$ of the best method on the cartpole and reacher, and on the acrobot the advantage is not uniform.

\subsection*{Reproducibility statement}

All assumptions are stated in \cref{sec:problem}, and complete proofs of every lemma, theorem, and corollary are given in \cref{app:observer_proof,app:gamma_proof,app:proj_proof,app:drift_decomp_proof,app:drift_proof,app:functional_proof,app:adjoint_proof,app:containment_proof,app:stack_proof,app:adjoint_uub_proof}. NODE-CL pseudocode is given in \cref{alg:stack}, the proposed drift-window and adjoint-window algorithms in \cref{app:algorithms}, and every gain, window length, sampling rate, architecture, excitation signal, and evaluation metric used in the simulations is listed in \cref{sec:simulations} and \cref{app:sim_details}. The supplementary material contains the complete simulation code, the resumable experiment drivers for the main comparison, the noise sweep, and the ablations, the analysis scripts that generate the tables and figures from the run records, the saved summary files from which the tables and \cref{fig:bench_noise} are reproduced, and the numerical checks described in \cref{app:sim_details}; the raw run records are omitted from the archive for size and are regenerated by the drivers. The simulations are deterministic given the seed.

\subsection*{AI use statement}

In this work, generative AI tools were used for implementing the simulation code, for interpreting simulation results, and for drafting and editing the text and formatting references. The two illustrative icons in \cref{fig:overview} (a sensor module and a memory stack) are AI-generated conceptual images; no reported experimental data or numerical results were produced by generative AI, and the research questions were determined by the author(s). Selected benchmark derivatives and numerical updates were checked against the mathematical formulation, including finite-difference checks of the trajectory gradient and Gauss--Newton matrix and numerical checks of the projection inequality, the gain matrix bounds, the observer bound, and the drift gradient decomposition on simulated data; all AI-assisted text was reviewed. Responsibility for the final content of this work, including text, claims, and artifacts produced with the aid of generative AI, rests with the author(s).

\bibliography{references}

@inproceedings{Chen2018,
  author    = {Chen, Ricky T. Q. and Rubanova, Yulia and Bettencourt, Jesse and Duvenaud, David},
  title     = {Neural Ordinary Differential Equations},
  booktitle = {Advances in Neural Information Processing Systems},
  volume    = {31},
  year      = {2018}
}

@article{SlotineLi1989,
  author  = {Slotine, Jean-Jacques E. and Li, Weiping},
  title   = {Composite Adaptive Control of Robot Manipulators},
  journal = {Automatica},
  volume  = {25},
  number  = {4},
  pages   = {509--519},
  year    = {1989}
}

@article{Chowdhary2011,
  author  = {Chowdhary, Girish and Johnson, Eric},
  title   = {Theory and Flight-Test Validation of a Concurrent-Learning Adaptive Controller},
  journal = {Journal of Guidance, Control, and Dynamics},
  volume  = {34},
  number  = {2},
  pages   = {592--607},
  year    = {2011}
}

@article{Patil2022,
  author  = {Patil, Omkar Sudhir and Le, Duc M. and Greene, Max L. and Dixon, Warren E.},
  title   = {Lyapunov-Derived Control and Adaptive Update Laws for Inner and Outer Layer Weights of a Deep Neural Network},
  journal = {IEEE Control Systems Letters},
  volume  = {6},
  pages   = {1855--1860},
  year    = {2022}
}

@book{Khalil2002,
  author    = {Khalil, Hassan K.},
  title     = {Nonlinear Systems},
  edition   = {3rd},
  publisher = {Prentice Hall},
  year      = {2002}
}

@book{Lavretsky2013,
  author    = {Lavretsky, Eugene and Wise, Kevin A.},
  title     = {Robust and Adaptive Control with Aerospace Applications},
  publisher = {Springer},
  year      = {2013}
}

@article{Cai2006,
  author  = {Cai, Zhijun and de Queiroz, Marcio S. and Dawson, Darren M.},
  title   = {A Sufficiently Smooth Projection Operator},
  journal = {IEEE Transactions on Automatic Control},
  volume  = {51},
  number  = {1},
  pages   = {135--139},
  year    = {2006}
}

@book{Lewis1999,
  author    = {Lewis, Frank L. and Jagannathan, Sarangapani and Yesildirek, Aydin},
  title     = {Neural Network Control of Robot Manipulators and Nonlinear Systems},
  publisher = {Taylor \& Francis},
  year      = {1999}
}

@inproceedings{Kingma2015,
  author    = {Kingma, Diederik P. and Ba, Jimmy},
  title     = {Adam: A Method for Stochastic Optimization},
  booktitle = {International Conference on Learning Representations},
  year      = {2015}
}

@article{Hart2025,
  author  = {Hart, Rebecca G. and Patil, Omkar Sudhir and Bell, Zachary I. and Dixon, Warren E.},
  title   = {Concurrent Learning for System Identification and Control Using {L}yapunov-Based Deep Neural Networks},
  journal = {IEEE Control Systems Letters},
  volume  = {9},
  pages   = {2957--2962},
  year    = {2025}
}

@article{Hart2025arxiv,
  author  = {Hart, Rebecca G. and Patil, Omkar Sudhir and Bell, Zachary I. and Dixon, Warren E.},
  title   = {System Identification and Control Using {L}yapunov-Based Deep Neural Networks without Persistent Excitation: A Concurrent Learning Approach},
  journal = {arXiv preprint arXiv:2505.10678},
  year    = {2025}
}

@article{Makumi2025,
  author  = {Makumi, Wanjiku and Patil, Omkar Sudhir and Dixon, Warren E.},
  title   = {Lyapunov-Based Adaptive Deep System Identification for Approximate Dynamic Programming},
  journal = {Automatica},
  volume  = {180},
  pages   = {112462},
  year    = {2025}
}

@article{Tassa2018,
  author  = {Tassa, Yuval and Doron, Yotam and Muldal, Alistair and Erez, Tom and Li, Yazhe and de Las Casas, Diego and Budden, David and Abdolmaleki, Abbas and Merel, Josh and Lefrancq, Andrew and Lillicrap, Timothy and Riedmiller, Martin},
  title   = {{DeepMind} Control Suite},
  journal = {arXiv preprint arXiv:1801.00690},
  year    = {2018}
}

@article{Turan2022,
  author  = {Turan, Evren Mert and J{\"a}schke, Johannes},
  title   = {Multiple Shooting for Training Neural Differential Equations on Time Series},
  journal = {IEEE Control Systems Letters},
  volume  = {6},
  pages   = {1897--1902},
  year    = {2022}
}

@inproceedings{Rubanova2019,
  author    = {Rubanova, Yulia and Chen, Ricky T. Q. and Duvenaud, David},
  title     = {Latent Ordinary Differential Equations for Irregularly-Sampled Time Series},
  booktitle = {Advances in Neural Information Processing Systems},
  volume    = {32},
  year      = {2019}
}

@phdthesis{Kidger2022,
  author = {Kidger, Patrick},
  title  = {On Neural Differential Equations},
  school = {University of Oxford},
  year   = {2022}
}

@article{Brunton2016,
  author  = {Brunton, Steven L. and Proctor, Joshua L. and Kutz, J. Nathan},
  title   = {Discovering Governing Equations from Data by Sparse Identification of Nonlinear Dynamical Systems},
  journal = {Proceedings of the National Academy of Sciences},
  volume  = {113},
  number  = {15},
  pages   = {3932--3937},
  year    = {2016}
}

@article{Sanner1992,
  author  = {Sanner, Robert M. and Slotine, Jean-Jacques E.},
  title   = {Gaussian Networks for Direct Adaptive Control},
  journal = {IEEE Transactions on Neural Networks},
  volume  = {3},
  number  = {6},
  pages   = {837--863},
  year    = {1992}
}

@inproceedings{Joshi2019,
  author    = {Joshi, Girish and Chowdhary, Girish},
  title     = {Deep Model Reference Adaptive Control},
  booktitle = {IEEE Conference on Decision and Control},
  pages     = {4601--4608},
  year      = {2019}
}

@article{Sun2022,
  author  = {Sun, Runhan and Greene, Max L. and Le, Duc M. and Bell, Zachary I. and Chowdhary, Girish and Dixon, Warren E.},
  title   = {Lyapunov-Based Real-Time and Iterative Adjustment of Deep Neural Networks},
  journal = {IEEE Control Systems Letters},
  volume  = {6},
  pages   = {193--198},
  year    = {2022}
}

@article{Kamalapurkar2017,
  author  = {Kamalapurkar, Rushikesh and Reish, Benjamin and Chowdhary, Girish and Dixon, Warren E.},
  title   = {Concurrent Learning for Parameter Estimation Using Dynamic State-Derivative Estimators},
  journal = {IEEE Transactions on Automatic Control},
  volume  = {62},
  number  = {7},
  pages   = {3594--3601},
  year    = {2017}
}

@article{Boffi2021,
  author  = {Boffi, Nicholas M. and Slotine, Jean-Jacques E.},
  title   = {Implicit Regularization and Momentum Algorithms in Nonlinearly Parameterized Adaptive Control and Prediction},
  journal = {Neural Computation},
  volume  = {33},
  number  = {3},
  pages   = {590--673},
  year    = {2021}
}

@inproceedings{BoffiTu2021,
  author    = {Boffi, Nicholas M. and Tu, Stephen and Slotine, Jean-Jacques E.},
  title     = {Regret Bounds for Adaptive Nonlinear Control},
  booktitle = {Proceedings of the 3rd Conference on Learning for Dynamics and Control (L4DC)},
  series    = {Proceedings of Machine Learning Research},
  volume    = {144},
  pages     = {471--483},
  publisher = {PMLR},
  year      = {2021}
}

@book{Ljung1999,
  author    = {Ljung, Lennart},
  title     = {System Identification: Theory for the User},
  publisher = {Prentice Hall},
  edition   = {2nd},
  year      = {1999}
}

@inproceedings{SinghalWu1989,
  author    = {Singhal, Sharad and Wu, Lance},
  title     = {Training Multilayer Perceptrons with the Extended {K}alman Algorithm},
  booktitle = {Advances in Neural Information Processing Systems},
  volume    = {1},
  year      = {1989}
}

@article{Puskorius1994,
  author  = {Puskorius, Gintaras V. and Feldkamp, Lee A.},
  title   = {Neurocontrol of Nonlinear Dynamical Systems with {K}alman Filter Trained Recurrent Networks},
  journal = {IEEE Transactions on Neural Networks},
  volume  = {5},
  number  = {2},
  pages   = {279--297},
  year    = {1994}
}

@book{Haykin2001,
  editor    = {Haykin, Simon},
  title     = {Kalman Filtering and Neural Networks},
  publisher = {Wiley},
  year      = {2001}
}

@inproceedings{Kolter2019,
  author    = {Kolter, J. Zico and Manek, Gaurav},
  title     = {Learning Stable Deep Dynamics Models},
  booktitle = {Advances in Neural Information Processing Systems},
  volume    = {32},
  year      = {2019}
}

@inproceedings{Dupont2019,
  author    = {Dupont, Emilien and Doucet, Arnaud and Teh, Yee Whye},
  title     = {Augmented Neural {ODEs}},
  booktitle = {Advances in Neural Information Processing Systems},
  volume    = {32},
  year      = {2019}
}

@inproceedings{Massaroli2020,
  author    = {Massaroli, Stefano and Poli, Michael and Park, Jinkyoo and Yamashita, Atsushi and Asama, Hajime},
  title     = {Dissecting Neural {ODEs}},
  booktitle = {Advances in Neural Information Processing Systems},
  volume    = {33},
  pages     = {3952--3963},
  year      = {2020}
}

@inproceedings{Kidger2020,
  author    = {Kidger, Patrick and Morrill, James and Foster, James and Lyons, Terry},
  title     = {Neural Controlled Differential Equations for Irregular Time Series},
  booktitle = {Advances in Neural Information Processing Systems},
  volume    = {33},
  pages     = {6696--6707},
  year      = {2020}
}

@inproceedings{Finlay2020,
  author    = {Finlay, Chris and Jacobsen, J{\"o}rn-Henrik and Nurbekyan, Levon and Oberman, Adam M.},
  title     = {How to Train Your Neural {ODE}: The World of {J}acobian and Kinetic Regularization},
  booktitle = {Proceedings of the 37th International Conference on Machine Learning},
  series    = {Proceedings of Machine Learning Research},
  volume    = {119},
  pages     = {3154--3164},
  publisher = {PMLR},
  year      = {2020}
}

@inproceedings{Zhuang2020,
  author    = {Zhuang, Juntang and Dvornek, Nicha and Li, Xiaoxiao and Tatikonda, Sekhar and Papademetris, Xenophon and Duncan, James},
  title     = {Adaptive Checkpoint Adjoint Method for Gradient Estimation in Neural {ODE}},
  booktitle = {Proceedings of the 37th International Conference on Machine Learning},
  series    = {Proceedings of Machine Learning Research},
  volume    = {119},
  pages     = {11639--11649},
  publisher = {PMLR},
  year      = {2020}
}

@inproceedings{Massaroli2021,
  author    = {Massaroli, Stefano and Poli, Michael and Sonoda, Sho and Suzuki, Taiji and Park, Jinkyoo and Yamashita, Atsushi and Asama, Hajime},
  title     = {Differentiable Multiple Shooting Layers},
  booktitle = {Advances in Neural Information Processing Systems},
  volume    = {34},
  pages     = {16532--16544},
  year      = {2021}
}

@inproceedings{Kang2021,
  author    = {Kang, Qiyu and Song, Yang and Ding, Qinxu and Tay, Wee Peng},
  title     = {Stable Neural {ODE} with {L}yapunov-Stable Equilibrium Points for Defending Against Adversarial Attacks},
  booktitle = {Advances in Neural Information Processing Systems},
  volume    = {34},
  pages     = {14925--14937},
  year      = {2021}
}

@inproceedings{Rodriguez2022,
  author    = {Rodriguez, Ivan Dario Jimenez and Ames, Aaron D. and Yue, Yisong},
  title     = {{LyaNet}: A {L}yapunov Framework for Training Neural {ODEs}},
  booktitle = {Proceedings of the 39th International Conference on Machine Learning},
  series    = {Proceedings of Machine Learning Research},
  volume    = {162},
  pages     = {18687--18703},
  publisher = {PMLR},
  year      = {2022}
}

@inproceedings{White2023,
  author    = {White, Alistair and Kilbertus, Niki and Gelbrecht, Maximilian and Boers, Niklas},
  title     = {Stabilized Neural Differential Equations for Learning Dynamics with Explicit Constraints},
  booktitle = {Advances in Neural Information Processing Systems},
  volume    = {36},
  pages     = {12929--12950},
  year      = {2023}
}

@inproceedings{Sochopoulos2024,
  author    = {Sochopoulos, Andreas and Gienger, Michael and Vijayakumar, Sethu},
  title     = {Learning Deep Dynamical Systems Using Stable Neural {ODEs}},
  booktitle = {IEEE/RSJ International Conference on Intelligent Robots and Systems},
  pages     = {11163--11170},
  year      = {2024}
}

@article{Luo2025,
  author  = {Luo, Chaoyang and Zou, Yan and Li, Wanying and Huang, Nanjing},
  title   = {{FxTS-Net}: Fixed-Time Stable Learning Framework for Neural {ODEs}},
  journal = {Neural Networks},
  volume  = {185},
  pages   = {107219},
  year    = {2025}
}

@inproceedings{Richards2021,
  author    = {Richards, Spencer M. and Azizan, Navid and Slotine, Jean-Jacques and Pavone, Marco},
  title     = {Adaptive-Control-Oriented Meta-Learning for Nonlinear Systems},
  booktitle = {Proceedings of Robotics: Science and Systems},
  year      = {2021}
}

@inproceedings{Shi2021,
  author    = {Shi, Guanya and Azizzadenesheli, Kamyar and O'Connell, Michael and Chung, Soon-Jo and Yue, Yisong},
  title     = {Meta-Adaptive Nonlinear Control: Theory and Algorithms},
  booktitle = {Advances in Neural Information Processing Systems},
  volume    = {34},
  pages     = {10013--10025},
  year      = {2021}
}

@article{OConnell2022,
  author  = {O'Connell, Michael and Shi, Guanya and Shi, Xichen and Azizzadenesheli, Kamyar and Anandkumar, Anima and Yue, Yisong and Chung, Soon-Jo},
  title   = {Neural-{F}ly Enables Rapid Learning for Agile Flight in Strong Winds},
  journal = {Science Robotics},
  volume  = {7},
  number  = {66},
  pages   = {eabm6597},
  year    = {2022}
}
\bibliographystyle{iclr2027_conference}

\clearpage
\appendix

\section{Drift Observer and Error Bound}
\label{app:observer_proof}
\label{sec:observer}

The experiments instantiate the drift estimate with the following state-derivative observer. Let $\hat{v} \in \mathbb{R}^{n}$ denote an auxiliary filter state and $\tilde{v} \triangleq \dot{x}-\hat{v}$ its measurable error. For gains $\alpha_{o},k_{f}>0$, define
\begin{align}
\dot{\hat{v}} &= \alpha_{o}\tilde{v}+\hat{f}+g(X)u, \label{eq:vhat_dot} \\
\hat{f}(t) &= \hat{f}(t_{0})+k_{f}\left(\tilde{v}(t)-\tilde{v}(t_{0})\right)+\int_{t_{0}}^{t}\left(k_{f}\alpha_{o}+1\right)\tilde{v}(\tau)\,d\tau. \label{eq:fhat_int}
\end{align}
Differentiating \cref{eq:fhat_int} and using \cref{eq:system,eq:vhat_dot}, the errors $\tilde{v}$ and $\tilde{f}=f(X)-\hat{f}$ satisfy $\dot{\tilde{v}}=\tilde{f}-\alpha_{o}\tilde{v}$ and $\dot{\tilde{f}}=\dot{f}-k_{f}\tilde{f}-\tilde{v}$. Let $z_{o}\triangleq[\tilde{v}^{\top}\;\tilde{f}^{\top}]^{\top}$.

\begin{lemma}
\label{lem:observer}
Let \cref{assm:bounded_trajectory,assm:f_smooth} hold. Then, for all $t\geq t_{0}$,
\begin{equation}
\label{eq:observer_bound}
\left\Vert z_{o}(t) \right\Vert \leq \left\Vert z_{o}(t_{0}) \right\Vert e^{-k_{o}(t-t_{0})}+\bar{z}_{o},
\end{equation}
where $k_{o}\triangleq\min\{\alpha_{o},k_{f}/2\}$, $\bar{z}_{o}\triangleq d_{f}/\sqrt{2k_{f}k_{o}}$, and $d_{f}\triangleq c_{fx}\bar{X}+c_{fv}(\bar{f}+\bar{g}\bar{u})$.
\end{lemma}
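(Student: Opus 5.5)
The plan is a direct Lyapunov argument on the observer error $z_{o}=[\tilde{v}^{\top}\;\tilde{f}^{\top}]^{\top}$ with the unweighted quadratic $V_{o}\triangleq\tfrac{1}{2}\left\Vert z_{o}\right\Vert^{2}=\tfrac{1}{2}\left\Vert \tilde{v}\right\Vert^{2}+\tfrac{1}{2}\left\Vert \tilde{f}\right\Vert^{2}$, followed by the comparison lemma.

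\textbf{Error dynamics.} I first confirm the stated error dynamics. Differentiating \cref{eq:fhat_int} gives $\dot{\hat{f}}=k_{f}\dot{\tilde{v}}+(k_{f}\alpha_{o}+1)\tilde{v}$. From \cref{eq:system,eq:vhat_dot}, $\dot{\tilde{v}}=\ddot{x}-\dot{\hat{v}}=\tilde{f}-\alpha_{o}\tilde{v}$. Substituting, $\dot{\hat{f}}=k_{f}\tilde{f}+\tilde{v}$, hence $\dot{\tilde{f}}=\dot{f}-k_{f}\tilde{f}-\tilde{v}$.

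\textbf{Lyapunov derivative.} Along solutions,
\[
\dot{V}_{o}=\tilde{v}^{\top}(\tilde{f}-\alpha_{o}\tilde{v})+\tilde{f}^{\top}(\dot{f}-k_{f}\tilde{f}-\tilde{v}).
\]
The cross terms $\tilde{v}^{\top}\tilde{f}$ and $-\tilde{f}^{\top}\tilde{v}$ cancel exactly, which is why the coefficient $k_{f}\alpha_{o}+1$ was chosen in \cref{eq:fhat_int}. This leaves
\[
\dot{V}_{o}=-\alpha_{o}\left\Vert\tilde{v}\right\Vert^{2}-k_{f}\left\Vert\tilde{f}\right\Vert^{2}+\tilde{f}^{\top}\dot{f}.
\]

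\textbf{Bounding $\dot{f}$.} This is the only step needing care. By the chain rule along the true trajectory, $\dot{f}=\tfrac{\partial f}{\partial x}\dot{x}+\tfrac{\partial f}{\partial \dot{x}}\ddot{x}$. \cref{assm:bounded_trajectory} gives $\left\Vert\dot{x}\right\Vert\leq\left\Vert X\right\Vert\leq\bar{X}$, and \cref{eq:system} with \cref{assm:f_smooth} gives $\left\Vert\ddot{x}\right\Vert\leq\bar{f}+\bar{g}\bar{u}$. Since $X(t)$ stays in the ball where the constants of \cref{assm:f_smooth} hold, $\left\Vert\dot{f}\right\Vert\leq d_{f}$ for all $t\geq t_{0}$.

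\textbf{Young's inequality and the differential inequality.} Young's inequality gives $\tilde{f}^{\top}\dot{f}\leq\tfrac{k_{f}}{2}\left\Vert\tilde{f}\right\Vert^{2}+\tfrac{d_{f}^{2}}{2k_{f}}$. Therefore
\[
\dot{V}_{o}\leq-\alpha_{o}\left\Vert\tilde{v}\right\Vert^{2}-\tfrac{k_{f}}{2}\left\Vert\tilde{f}\right\Vert^{2}+\tfrac{d_{f}^{2}}{2k_{f}}\leq-2k_{o}V_{o}+\tfrac{d_{f}^{2}}{2k_{f}},
\]
using $k_{o}=\min\{\alpha_{o},k_{f}/2\}$.

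\textbf{Comparison and conclusion.} The comparison lemma yields
\[
V_{o}(t)\leq V_{o}(t_{0})e^{-2k_{o}(t-t_{0})}+\tfrac{d_{f}^{2}}{4k_{f}k_{o}}.
\]
Multiplying by $2$ gives
\[
\left\Vert z_{o}(t)\right\Vert^{2}\leq\left\Vert z_{o}(t_{0})\right\Vert^{2}e^{-2k_{o}(t-t_{0})}+\tfrac{d_{f}^{2}}{2k_{f}k_{o}}.
\]
Taking square roots and using $\sqrt{a+b}\leq\sqrt{a}+\sqrt{b}$ gives \cref{eq:observer_bound} with $\bar{z}_{o}=d_{f}/\sqrt{2k_{f}k_{o}}$.

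\textbf{Window bound.} The bound \cref{eq:observer_window_bound} then follows because $\left\Vert\tilde{f}(\tau)\right\Vert\leq\left\Vert z_{o}(\tau)\right\Vert$ and the right side of \cref{eq:observer_bound} is nonincreasing in $\tau$, so its supremum over $[t-T,t]$ is attained at $\tau=t-T$.
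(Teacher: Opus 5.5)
Your proposal is correct and follows the paper's proof essentially step for step: the same unweighted quadratic $V_{o}$, the exact cancellation of the cross terms, the chain-rule bound $\left\Vert \dot{f} \right\Vert \leq d_{f}$, Young's inequality with weight $k_{f}/2$, the comparison lemma, and $\sqrt{a+b}\leq\sqrt{a}+\sqrt{b}$. Your explicit check of the error dynamics and your derivation of the window envelope \cref{eq:observer_window_bound} from the monotonicity of the bound are correct additions that the paper leaves implicit.
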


\textit{Proof.} Consider the observer Lyapunov function $V_{o} = \tfrac{1}{2}\tilde{v}^{\top}\tilde{v} + \tfrac{1}{2}\tilde{f}^{\top}\tilde{f}$, which satisfies $V_{o} = \tfrac{1}{2}\left\Vert z_{o} \right\Vert^{2}$. Taking the time derivative along the observer error dynamics yields
\begin{equation}
\dot{V}_{o} = \tilde{v}^{\top}(\tilde{f} - \alpha_{o}\tilde{v}) + \tilde{f}^{\top}(\dot{f} - k_{f}\tilde{f} - \tilde{v})
= -\alpha_{o}\left\Vert \tilde{v} \right\Vert^{2} - k_{f}\left\Vert \tilde{f} \right\Vert^{2} + \tilde{f}^{\top}\dot{f},
\end{equation}
where the cross terms cancel. Under \cref{assm:bounded_trajectory,assm:f_smooth}, the drift derivative along the trajectory satisfies
\begin{equation}
\left\Vert \dot{f} \right\Vert = \left\Vert \tfrac{\partial f}{\partial x}\dot{x} + \tfrac{\partial f}{\partial \dot{x}}\left(f + g u\right) \right\Vert \leq c_{fx}\bar{X} + c_{fv}\left(\bar{f} + \bar{g}\bar{u}\right) = d_{f}.
\end{equation}
Using Young's inequality, $\tilde{f}^{\top}\dot{f} \leq \tfrac{k_{f}}{2}\left\Vert \tilde{f} \right\Vert^{2} + \tfrac{d_{f}^{2}}{2k_{f}}$, and therefore
\begin{equation}
\dot{V}_{o} \leq -k_{o}\left\Vert z_{o} \right\Vert^{2} + c_{o} = -2k_{o}V_{o} + c_{o},
\end{equation}
where $k_{o} = \min\{\alpha_{o}, k_{f}/2\}$ and $c_{o} = d_{f}^{2}/(2k_{f})$. The comparison lemma \citep{Khalil2002} yields $V_{o}(t) \leq V_{o}(t_{0})e^{-2k_{o}(t-t_{0})} + \tfrac{c_{o}}{2k_{o}}$, and using $\left\Vert z_{o} \right\Vert = \sqrt{2V_{o}}$ together with $\sqrt{a+b} \leq \sqrt{a} + \sqrt{b}$ gives \cref{eq:observer_bound} with $\bar{z}_{o} = \sqrt{c_{o}/k_{o}} = d_{f}/\sqrt{2k_{f}k_{o}}$. \hfill $\blacksquare$

\section{Pseudocode for the Proposed Drift-Window and Adjoint-Window Methods}
\label{app:algorithms}

The two windowed laws of \cref{sec:drift} are given in pseudocode here, in the form implemented in the experiments.

\setcounter{algorithm}{0}
\begin{algorithm}[H]
\caption{Proposed Drift-Window Identification}
\label{alg:drift}
\begin{algorithmic}[1]
\REQUIRE $T$, gains $\alpha_{o}, k_{f}, \alpha, k_{\sigma}, \beta_{0}, \bar{\lambda}, \underline{\lambda}$
\STATE Initialize $\hat{\theta}(t_{0}) \in \Theta_{\epsilon}$, $\Gamma(t_{0}) = \Gamma_{0}$, $\hat{v}(t_{0}) = \dot{x}(t_{0})$, $\hat{f}(t_{0}) = 0$, empty circular buffer
\FOR{each time step $t$}
  \STATE Measure $x, \dot{x}$; record $u$
  \STATE Update observer \cref{eq:vhat_dot}--\cref{eq:fhat_int}; push $(X(t), \hat{f}(t))$ to buffer
  \IF{$t \geq t_{0} + T$}
    \STATE $E(\tau) \leftarrow \hat{f}(\tau) - \Phi(X(\tau), \hat{\theta})$ over the buffer
    \STATE $\Xi_{f} \leftarrow$ quadrature of $\Phi'(X(\tau),\hat{\theta})^{\top} E(\tau)$
    \STATE $\Psi \leftarrow$ quadrature of $\Phi'(X(\tau),\hat{\theta})^{\top} \Phi'(X(\tau),\hat{\theta})$
  \ELSE
    \STATE $\Xi_{f} \leftarrow 0_{p}$; $\Psi \leftarrow 0_{p \times p}$
  \ENDIF
  \STATE Update $\Gamma$ by \cref{eq:Gamma_update} with gate $\varsigma_{\Gamma}$
  \STATE Update $\hat{\theta}$ by \cref{eq:adapt_law}
\ENDFOR
\end{algorithmic}
\end{algorithm}

\begin{algorithm}[H]
\caption{Proposed Adjoint-Window Identification}
\label{alg:adjoint}
\begin{algorithmic}[1]
\REQUIRE $T$, gains $\alpha, k_{\sigma}$, constant $\Gamma_{0}$
\STATE Initialize $\hat{\theta}(t_{0}) \in \Theta_{\epsilon}$, empty circular buffer
\FOR{each time step $t$}
  \STATE Measure $x, \dot{x}$; record $u$; push $(X(t), u(t))$ to buffer
  \IF{$t \geq t_{0} + T$}
    \STATE Forward: integrate \cref{eq:virtual_system} from $X(t-T)$
    \STATE Backward: integrate the adjoint equation in \cref{eq:adjoint_gradient} from $\nu(t) = 0$
    \STATE $\Xi_{a} \leftarrow -$ quadrature of $\Phi'(\hat{X}_{p},\hat{\theta})^{\top} \nu_{v}$
  \ELSE
    \STATE $\Xi_{a} \leftarrow 0_{p}$
  \ENDIF
  \STATE Update $\hat{\theta}$ by \cref{eq:adapt_law_a}
\ENDFOR
\end{algorithmic}
\end{algorithm}

\section{Gain-Matrix Bounds}
\label{app:gamma_proof}

This appendix proves that the gated forgetting update keeps the gain matrix inside the certified interval.

\begin{lemma}
\label{lem:gamma}
If $\underline{\lambda} I_{p} \preceq \Gamma_{0} \preceq \bar{\lambda} I_{p}$, then the solution of \cref{eq:Gamma_update} satisfies $\underline{\lambda} I_{p} \preceq \Gamma(t) \preceq \bar{\lambda} I_{p}$ for all $t \geq t_{0}$, and $\tfrac{d}{dt}\Gamma^{-1} = -\beta_{0}\left(\Gamma^{-1} - \bar{\lambda}^{-1}I_{p}\right) + \varsigma_{\Gamma}\Psi$ with $\Gamma^{-1} - \bar{\lambda}^{-1}I_{p} \succeq 0$.
\end{lemma}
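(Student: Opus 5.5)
The plan is to work with the inverse $P \triangleq \Gamma^{-1}$, whose dynamics are affine in $P$ and therefore easy to bound. As long as $\Gamma$ is invertible, $\tfrac{d}{dt}\Gamma^{-1} = -\Gamma^{-1}\dot{\Gamma}\Gamma^{-1}$. Substituting \cref{eq:Gamma_update} gives
\begin{equation*}
\dot{P} = -\beta_{0}\left(P - \bar{\lambda}^{-1}I_{p}\right) + \varsigma_{\Gamma}\Psi,
\end{equation*}
which is the stated identity. Solutions exist and are unique because the gate $\varsigma$ is Lipschitz, $\lambda_{\min}$ is Lipschitz in $\Gamma$, and $\Psi$ is piecewise continuous and bounded by \cref{lem:drift_decomp}. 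A continuation argument then lets us use the $P$ equation on the maximal interval on which $\Gamma$ stays positive definite. The a priori bounds derived below show that this interval is all of $[t_{0},\infty)$.

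\textbf{Upper bound $\Gamma \preceq \bar{\lambda}I_{p}$.} Set $Q \triangleq P - \bar{\lambda}^{-1}I_{p}$. It satisfies $\dot{Q} = -\beta_{0}Q + \varsigma_{\Gamma}\Psi$, so by variation of constants
\begin{equation*}
Q(t) = e^{-\beta_{0}(t-t_{0})}Q(t_{0}) + \int_{t_{0}}^{t} e^{-\beta_{0}(t-s)}\varsigma_{\Gamma}(s)\Psi(s)\,ds.
\end{equation*}
Here $Q(t_{0}) \succeq 0$ because $\Gamma_{0} \preceq \bar{\lambda}I_{p}$. The integrand is also positive semidefinite, since $\varsigma_{\Gamma}\in[0,1]$ and $\Psi \succeq 0$ by \cref{lem:drift_decomp}. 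Hence $Q(t) \succeq 0$, which gives $\Gamma^{-1} \succeq \bar{\lambda}^{-1}I_{p}$, that is, $\Gamma \preceq \bar{\lambda}I_{p}$.

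\textbf{Lower bound $\Gamma \succeq \underline{\lambda}I_{p}$.} This is the main obstacle, because the $\Psi$ term can push $\lambda_{\min}(\Gamma)$ downward without limit and only the gate prevents it. I would argue on $\Gamma$ directly through its minimum eigenvalue. Let $\ell(t) \triangleq \lambda_{\min}(\Gamma(t))$, which is absolutely continuous. At almost every $t$, take a unit eigenvector $v$ of $\Gamma$ for $\ell$. The standard formula for the derivative of an extreme eigenvalue (Danskin-type) gives
\begin{equation*}
\dot{\ell} = v^{\top}\dot{\Gamma}v = \beta_{0}\ell\left(1 - \ell/\bar{\lambda}\right) - \varsigma(\ell)\,\ell^{2}\, v^{\top}\Psi v,
\end{equation*}
using $\Gamma v = \ell v$.

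Now suppose $\ell$ first drops below $\underline{\lambda}$ after some time $t^{*}$ at which $\ell(t^{*}) = \underline{\lambda}$. On any subinterval where $\ell \le \underline{\lambda}$, the gate is zero, so $\dot{\ell} = \beta_{0}\ell(1-\ell/\bar{\lambda}) > 0$, because $\underline{\lambda} < \bar{\lambda}/2 < \bar{\lambda}$. This contradicts the assumption that $\ell$ falls below $\underline{\lambda}$. For a rigorous version I would apply the comparison lemma to $\max\{\underline{\lambda} - \ell, 0\}$. Alternatively, to avoid eigenvalue nonsmoothness, I would argue via $\lambda_{\max}(P)$ and its upper Dini derivative, which gives the same sign condition whenever $\lambda_{\max}(P) \ge \underline{\lambda}^{-1}$.

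Finally, combining $\underline{\lambda}I_{p} \preceq \Gamma \preceq \bar{\lambda}I_{p}$ on the maximal interval rules out finite escape and loss of invertibility. The solution therefore extends to all $t \ge t_{0}$, and both bounds, together with $\Gamma^{-1} - \bar{\lambda}^{-1}I_{p} \succeq 0$, hold globally.
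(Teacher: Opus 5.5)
Your proof is correct, but it reaches the upper bound by a different route than the paper.

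The paper stays in $\Gamma$ coordinates for both bounds. It applies Nagumo's theorem to the closed convex sets $\{\Gamma\preceq\bar{\lambda}I_{p}\}$ and $\{\Gamma\succeq\underline{\lambda}I_{p}\}$, checking $v^{\top}\dot{\Gamma}v$ on the extremal eigenspace at boundary points. At $\lambda_{\max}(\Gamma)=\bar{\lambda}$ the forgetting term vanishes and $v^{\top}\dot{\Gamma}v=-\varsigma_{\Gamma}\bar{\lambda}^{2}v^{\top}\Psi v\le 0$. At $\lambda_{\min}(\Gamma)=\underline{\lambda}$ the gate is off and $v^{\top}\dot{\Gamma}v=\beta_{0}\underline{\lambda}(1-\underline{\lambda}/\bar{\lambda})>0$. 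The identity for $\tfrac{d}{dt}\Gamma^{-1}$ is derived only afterwards.

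Your lower-bound step is the same sign computation as the paper's second check, done by hand through the Danskin derivative of $\lambda_{\min}(\Gamma)$ and a first-crossing argument. Note that positivity of $\beta_{0}\ell(1-\ell/\bar{\lambda})$ also needs $\ell>0$, which holds near the crossing time by continuity.

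Your upper bound instead passes to the information form. It writes $\Gamma^{-1}-\bar{\lambda}^{-1}I_{p}$ explicitly as a decaying initial term plus a discounted integral of $\varsigma_{\Gamma}\Psi\succeq 0$. This has two advantages:
\begin{itemize}
\item The upper bound becomes an identity rather than an invariance argument. You do not rely on uniqueness plus the tangent-cone condition, which Nagumo needs here because the boundary inequality is only non-strict.
\item The formula makes visible the equilibrium $\Gamma\approx\beta_{0}\Psi^{-1}$ on excited directions, which the paper invokes informally in Stage III.
\end{itemize}
The price is that it requires $\Gamma$ to be invertible. So it depends on the lower bound together with your continuation argument. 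The paper's argument treats both bounds symmetrically in the original coordinates, with no invertibility precondition.
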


The right-hand side of \cref{eq:Gamma_update} is locally Lipschitz in $\Gamma$ (it is polynomial in $\Gamma$, and $\varsigma_{\Gamma}$ is Lipschitz in $\Gamma$ through $\lambda_{\min}$), and $\Psi$ is piecewise continuous, so solutions are unique and absolutely continuous, and Nagumo's theorem applies to the closed convex sets $\mathcal{K}_{u} \triangleq \{\Gamma : \Gamma \preceq \bar{\lambda}I_{p}\}$ and $\mathcal{K}_{\ell} \triangleq \{\Gamma : \Gamma \succeq \underline{\lambda}I_{p}\}$: each set is forward invariant if, at every boundary point, the vector field lies in the tangent cone, which for these sets holds if $v^{\top}\dot{\Gamma}v \leq 0$ (respectively $\geq 0$) for every unit vector $v$ in the extremal eigenspace.

For the upper bound, let $\Gamma \in \partial\mathcal{K}_{u}$, so $\lambda_{\max}(\Gamma) = \bar{\lambda}$, and let $v$ be any unit vector with $\Gamma v = \bar{\lambda}v$. Then $v^{\top}\dot{\Gamma}v = \beta_{0}\bar{\lambda}(1 - \bar{\lambda}/\bar{\lambda}) - \varsigma_{\Gamma}v^{\top}\Gamma\Psi\Gamma v \leq 0$. Since $\Gamma_{0} \in \mathcal{K}_{u}$, forward invariance yields $\Gamma(t) \preceq \bar{\lambda}I_{p}$ for all $t$, hence $\Gamma^{-1} - \bar{\lambda}^{-1}I_{p} \succeq 0$. The expression for $\tfrac{d}{dt}\Gamma^{-1}$ follows from $\tfrac{d}{dt}\Gamma^{-1} = -\Gamma^{-1}\dot{\Gamma}\Gamma^{-1}$.

For the lower bound, let $\Gamma \in \partial\mathcal{K}_{\ell}$, so $\lambda_{\min}(\Gamma) = \underline{\lambda}$, and let $v$ be any unit vector with $\Gamma v = \underline{\lambda}v$. Then $\varsigma_{\Gamma} = \varsigma(\underline{\lambda}) = 0$, so $v^{\top}\dot{\Gamma}v = \beta_{0}\underline{\lambda}(1 - \underline{\lambda}/\bar{\lambda}) > 0$ since $\underline{\lambda} < \bar{\lambda}$. Since $\Gamma_{0} \in \mathcal{K}_{\ell}$, forward invariance yields $\Gamma(t) \succeq \underline{\lambda}I_{p}$ for all $t$. \hfill $\blacksquare$

\section{Gain-Metric Projection}
\label{app:proj_proof}

This appendix defines the gain-metric projection used in every adaptation law and proves its two properties. The projection operator is defined as follows. Let $p_{c}(\theta) \triangleq (\left\Vert \theta \right\Vert^{2} - \bar{\theta}^{2})/(2\epsilon_{\theta}\bar{\theta} + \epsilon_{\theta}^{2})$, so that $p_{c}(\theta) \leq 0$ if and only if $\theta \in \Theta$ and $p_{c}(\theta) = 1$ if and only if $\left\Vert \theta \right\Vert = \bar{\theta} + \epsilon_{\theta}$. For a symmetric positive definite $\Gamma$, the projection operator is defined as
\begin{equation}
\label{eq:proj_def}
\mathrm{proj}_{\Gamma}(\theta, y) \triangleq
\begin{cases}
\Gamma y - p_{c}(\theta)\, \dfrac{\Gamma \nabla p_{c} \nabla p_{c}^{\top} \Gamma\, y}{\nabla p_{c}^{\top} \Gamma \nabla p_{c}}, & \text{if } p_{c}(\theta) > 0 \text{ and } \nabla p_{c}^{\top} \Gamma y > 0, \\[2mm]
\Gamma y, & \text{otherwise},
\end{cases}
\end{equation}
where $\nabla p_{c} = \nabla p_{c}(\theta)$; cf. the smooth projection operators in \citet{Lavretsky2013,Cai2006}. The operator removes only the component of $\Gamma y$ that points outward across the boundary of $\Theta_{\epsilon}$, and it does so in the $\Gamma$ metric, which yields the following properties.

\begin{lemma}
\label{lem:proj}
Let $\Gamma(t)$ be continuous, symmetric, and positive definite for each $t$, and let $\hat{\theta}$ evolve by $\dot{\hat{\theta}} = \mathrm{proj}_{\Gamma}(\hat{\theta}, y(t))$ for a piecewise continuous $y$. If $\hat{\theta}(t_{0}) \in \Theta_{\epsilon}$, then (i) $\hat{\theta}(t) \in \Theta_{\epsilon}$ for all $t \geq t_{0}$, and (ii) for every $\theta^{*} \in \Theta$, \cref{eq:proj_prop} holds.
\end{lemma}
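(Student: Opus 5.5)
Both parts of \cref{lem:proj} reduce to elementary computations with the two cases of \cref{eq:proj_def}. The only delicate point is well-posedness. Throughout, write $\nabla p_{c}(\theta) = 2\theta/(2\epsilon_{\theta}\bar{\theta}+\epsilon_{\theta}^{2})$. Note that $\nabla p_{c}$ is a positive multiple of $\theta$, and that $\nabla p_{c}^{\top}\Gamma\nabla p_{c} > 0$ whenever $p_{c}(\theta)>0$, since then $\theta \neq 0$. So the operator is well defined.

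\emph{Well-posedness.} The correction term in \cref{eq:proj_def} can be written as
\[
-\max\{p_{c}(\theta),0\}\,\max\{\nabla p_{c}^{\top}\Gamma y,0\}\,\Gamma\nabla p_{c}/(\nabla p_{c}^{\top}\Gamma\nabla p_{c}),
\]
because it vanishes on both switching surfaces $p_{c}=0$ and $\nabla p_{c}^{\top}\Gamma y=0$. Hence $\mathrm{proj}_{\Gamma}(\theta,y)$ is continuous in $(\theta,\Gamma,y)$ and locally Lipschitz in $\theta$ near $\Theta_{\epsilon}$. With $\Gamma$ continuous and $y$ piecewise continuous, Carath\'eodory theory gives a unique absolutely continuous solution on each interval of continuity of $y$. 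These pieces are then concatenated.

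\emph{Part (i).} Compute $\tfrac{d}{dt}p_{c}(\hat{\theta}) = \nabla p_{c}^{\top}\dot{\hat{\theta}}$. In the inactive case with $p_{c}>0$, this equals $\nabla p_{c}^{\top}\Gamma y \le 0$. In the active case it equals $(1-p_{c})\,\nabla p_{c}^{\top}\Gamma y$, which is $\le 0$ whenever $p_{c}\ge 1$. So on $\{p_{c}\ge 1\}$ we always have $\tfrac{d}{dt}p_{c}\le 0$.

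Suppose $p_{c}(\hat{\theta}(t_{2}))>1$ for some $t_{2}$. Let $t_{b} \triangleq \sup\{t<t_{2} : p_{c}(\hat{\theta}(t))\le 1\}$. On $[t_{b},t_{2}]$ the function $p_{c}\circ\hat{\theta}$ is absolutely continuous with a.e. nonpositive derivative. It therefore cannot increase from the value $1$ at $t_{b}$, a contradiction. Hence $p_{c}(\hat{\theta}(t))\le 1$, i.e.\ $\hat{\theta}(t)\in\Theta_{\epsilon}$, for all $t$.

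\emph{Part (ii).} In the inactive case, $\tilde{\theta}^{\top}\Gamma^{-1}\Gamma y=\tilde{\theta}^{\top}y$ with equality. In the active case,
\[
\tilde{\theta}^{\top}\Gamma^{-1}\mathrm{proj}_{\Gamma}(\hat{\theta},y) = \tilde{\theta}^{\top}y - p_{c}\,\frac{(\tilde{\theta}^{\top}\nabla p_{c})(\nabla p_{c}^{\top}\Gamma y)}{\nabla p_{c}^{\top}\Gamma\nabla p_{c}}.
\]
Here $p_{c}>0$, $\nabla p_{c}^{\top}\Gamma y>0$ and the denominator is positive, so it suffices to show $\tilde{\theta}^{\top}\nabla p_{c}\le 0$. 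This follows from $\nabla p_{c}\propto\hat{\theta}$ and Cauchy--Schwarz. Since $\left\Vert\hat{\theta}\right\Vert>\bar{\theta}\ge\left\Vert\theta^{*}\right\Vert$ in the active case,
\[
(\theta^{*}-\hat{\theta})^{\top}\hat{\theta}\le \bar{\theta}\left\Vert\hat{\theta}\right\Vert - \left\Vert\hat{\theta}\right\Vert^{2}<0.
\]
Thus the subtracted term is nonnegative, which gives \cref{eq:proj_prop}.

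The main obstacle is the well-posedness step, not the inequalities. The vector field switches and $\Gamma(t)$ is time varying, so care is needed. The invariance argument must use absolute continuity of $p_{c}\circ\hat{\theta}$ rather than classical tangency, or alternatively Nagumo's theorem as in \cref{app:gamma_proof}. I would settle this first by establishing the continuity and local Lipschitz representation above.
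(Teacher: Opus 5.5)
Your proof is correct and follows the paper's route: the same case split on the two branches of \cref{eq:proj_def}, the same active-branch computation $\nabla p_{c}^{\top}\dot{\hat{\theta}} = (1-p_{c})\nabla p_{c}^{\top}\Gamma y$ for (i), and the same metric-compatibility identity with $\tilde{\theta}^{\top}\nabla p_{c}<0$ for (ii). The differences are minor refinements: you check the sign of $\tfrac{d}{dt}p_{c}(\hat{\theta})$ on all of $\{p_{c}\geq 1\}$ and conclude by absolute continuity, where the paper checks only the boundary $p_{c}=1$; and you get $\tilde{\theta}^{\top}\nabla p_{c}<0$ from $\nabla p_{c}\propto\hat{\theta}$ and Cauchy--Schwarz, where the paper uses convexity of $p_{c}$, which would also cover other convex constraint functions.
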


The projection is continuous and locally Lipschitz across both switching surfaces. Together with the smoothness of $\Phi$, this makes the right-hand sides of the proposed adaptation laws locally Lipschitz in $\hat{\theta}$ between data-set updates; hence their Carath\'{e}odory solutions exist and are unique.

For (i), on the boundary $p_{c}(\hat{\theta}) = 1$, the outward normal component of the flow satisfies $\nabla p_{c}^{\top} \dot{\hat{\theta}} = (1 - p_{c}(\hat{\theta}))\, \nabla p_{c}^{\top} \Gamma y = 0$ when the activation condition $\mathcal{A}$ holds, and $\nabla p_{c}^{\top} \dot{\hat{\theta}} = \nabla p_{c}^{\top}\Gamma y \leq 0$ otherwise; hence the sublevel set $\{p_{c} \leq 1\} = \Theta_{\epsilon}$ is forward invariant.

For (ii), when $\mathcal{A}$ does not hold, \cref{eq:proj_prop} holds with equality. When $\mathcal{A}$ holds,
\begin{equation}
\tilde{\theta}^{\top}\Gamma^{-1}\mathrm{proj}_{\Gamma}(\hat{\theta}, y) - \tilde{\theta}^{\top} y = -\, p_{c}(\hat{\theta})\, \frac{\left(\tilde{\theta}^{\top}\nabla p_{c}\right)\left(\nabla p_{c}^{\top}\Gamma y\right)}{\nabla p_{c}^{\top}\Gamma\nabla p_{c}}.
\end{equation}
By convexity of $p_{c}$ and $p_{c}(\theta^{*}) \leq 0 < p_{c}(\hat{\theta})$, it follows that $\tilde{\theta}^{\top}\nabla p_{c} = (\theta^{*} - \hat{\theta})^{\top} \nabla p_{c}(\hat{\theta}) \leq p_{c}(\theta^{*}) - p_{c}(\hat{\theta}) < 0$. Since $p_{c}(\hat{\theta}) > 0$, $\nabla p_{c}^{\top}\Gamma y > 0$, and $\nabla p_{c}^{\top}\Gamma\nabla p_{c} > 0$, the right-hand side is nonnegative, which establishes \cref{eq:proj_prop}. \hfill $\blacksquare$

\section{Proof of the Windowed Gradient Decomposition}
\label{app:drift_decomp_proof}

This appendix proves the windowed gradient decomposition of \cref{lem:drift_decomp}. Using $\hat{f} = f - \tilde{f}$ and \cref{eq:ufap},
\begin{equation}
E(\tau; t) = \Phi(X(\tau), \theta^{*}) + \varepsilon(X(\tau)) - \tilde{f}(\tau) - \Phi(X(\tau), \hat{\theta}(t)).
\end{equation}
Since $\Theta_{\epsilon}$ is convex and contains both $\hat{\theta}(t)$ and $\theta^{*}$, Taylor's theorem with the Lagrange form of the remainder yields
\begin{equation}
\label{eq:taylor}
\Phi(X(\tau), \theta^{*}) - \Phi(X(\tau), \hat{\theta}(t)) = \Phi'(X(\tau), \hat{\theta}(t))\, \tilde{\theta}(t) + R(\tau), \qquad \left\Vert R(\tau) \right\Vert \leq \tfrac{1}{2}c_{\theta\theta}\left\Vert \tilde{\theta}(t) \right\Vert^{2},
\end{equation}
where the remainder bound follows from the bound on $\partial^{2}\Phi/\partial\theta^{2}$. Substituting into \cref{eq:Xif_def} gives \cref{eq:Xif_decomp} with $\Psi$ as in \cref{eq:Psi_def} and
\begin{equation}
\eta_{f}(t) = \int_{t-T}^{t}\Phi'(X(\tau), \hat{\theta}(t))^{\top}\left[R(\tau) + \varepsilon(X(\tau)) - \tilde{f}(\tau)\right] d\tau.
\end{equation}
The matrix $\Psi$ is an integral of Gram matrices and is therefore positive semi-definite with $\left\Vert \Psi \right\Vert \leq T\bar{\upsilon}^{2}$. Using $\left\Vert \Phi' \right\Vert \leq \bar{\upsilon}$, the definition of $\bar{f}_{T}$, and integrating over the window of length $T$ yields \cref{eq:etaf_bound}. \hfill $\blacksquare$

\section{Proof of the Drift-Window Guarantee and Its Consequences}
\label{app:drift_proof}

This appendix proves \cref{thm:drift} and states and proves its two consequences, the excitation-dependent rate and the excitation-free functional bound. The windowed excitation condition and the corresponding result are stated here. Since $\Phi'(X, \theta) \in \mathbb{R}^{n \times p}$, the integrand of $\Psi$ has rank at most $n$, so a full-rank $\Psi$ requires the Jacobian row space to rotate through all of $\mathbb{R}^{p}$ within each window.  

\begin{definition}
\label{def:wpe}
The closed loop satisfies window persistence of excitation (WPE) if there exist $\varphi > 0$ and $t^{*} \geq t_{1}$ such that $\Psi(t) \succeq \varphi I_{p}$ for all $t \geq t^{*}$. Since $\Psi(t)$ is evaluated at $\hat{\theta}(t)$, this is a condition on the signals generated by the closed loop rather than on the state trajectory alone.
\end{definition}

Under this condition the rate of \cref{thm:drift} improves as follows.

\begin{theorem}
\label{thm:wpe}
Under the conditions of \cref{thm:drift} and \cref{def:wpe}, the bounds in \cref{eq:theta_bound} hold for $t \geq t^{*}$ with $t_{1}$ replaced by $t^{*}$ and the rate $\lambda_{d}$ replaced by $\lambda_{e} \triangleq 2\left(k_{1} + \left(\alpha - \tfrac{1}{2}\right)\varphi\right)\underline{\lambda}$.
\end{theorem}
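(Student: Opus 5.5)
The plan is to reuse the Lyapunov computation behind \cref{thm:drift} and change only the final step, where the quadratic sink $\tilde{\theta}^{\top}\Psi\tilde{\theta}$ was discarded. Take $V=\tfrac{1}{2}\tilde{\theta}^{\top}\Gamma^{-1}\tilde{\theta}$. By \cref{lem:gamma}, $\tfrac{\bar{\lambda}^{-1}}{2}\|\tilde{\theta}\|^{2}\le V\le\tfrac{\underline{\lambda}^{-1}}{2}\|\tilde{\theta}\|^{2}$, and
\[
\tfrac{d}{dt}\Gamma^{-1}=-\beta_{0}(\Gamma^{-1}-\bar{\lambda}^{-1}I_{p})+\varsigma_{\Gamma}\Psi .
\]
Since $\dot{\tilde{\theta}}=-\dot{\hat{\theta}}$, the projection inequality \cref{eq:proj_prop} with $y=\alpha\Xi_{f}-k_{\sigma}\hat{\theta}$ gives
\[
\dot V\le -\tilde{\theta}^{\top}(\alpha\Xi_f-k_\sigma\hat{\theta})+\tfrac12\varsigma_\Gamma\tilde{\theta}^{\top}\Psi\tilde{\theta},
\]
after dropping the nonpositive forgetting term. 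Next insert the decomposition \cref{eq:Xif_decomp} from \cref{lem:drift_decomp}. Because $0\le\varsigma_\Gamma\le1$ and $\Psi\succeq0$, the $\tilde{\theta}^{\top}\Psi\tilde{\theta}$ terms combine into at most $-(\alpha-\tfrac12)\tilde{\theta}^{\top}\Psi\tilde{\theta}$.

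The remaining terms are handled exactly as in the proof of \cref{thm:drift}. For the leakage term, $k_\sigma\tilde{\theta}^{\top}\hat{\theta}=k_\sigma\tilde{\theta}^{\top}\theta^*-k_\sigma\|\tilde{\theta}\|^2$. For the perturbation, the bound \cref{eq:etaf_bound} gives $\alpha c_{\eta1}\|\tilde{\theta}\|^3\le\alpha c_{\eta1}\bar{\tilde{\theta}}\|\tilde{\theta}\|^2$, using projection to keep $\|\tilde{\theta}\|\le\bar{\tilde{\theta}}$. Young's inequality then absorbs the linear terms $k_\sigma\bar{\theta}\|\tilde{\theta}\|$ and $\alpha(c_{\eta2}\bar f_T+c_{\eta3})\|\tilde{\theta}\|$. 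This reproduces the intermediate inequality of that proof:
\[
\dot V\le -k_1\|\tilde{\theta}\|^2-(\alpha-\tfrac12)\tilde{\theta}^{\top}\Psi\tilde{\theta}+c_d(t).
\]
If the Young splits there produce different constants, I will mirror them so that $c_d$ and $k_1$ match \cref{eq:cd_def}.

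The new step comes for $t\ge t^*$. \cref{def:wpe} and $\alpha>\tfrac12$ give $-(\alpha-\tfrac12)\tilde{\theta}^{\top}\Psi\tilde{\theta}\le-(\alpha-\tfrac12)\varphi\|\tilde{\theta}\|^2$, so
\[
\dot V\le-(k_1+(\alpha-\tfrac12)\varphi)\|\tilde{\theta}\|^2+c_d(t).
\]
Using $\|\tilde{\theta}\|^2\ge2\underline{\lambda}V$ converts this into $\dot V\le-\lambda_eV+c_d(t)$. Since $\bar f_T$ is nonincreasing, $c_d(t)\le c_d(t^*)$ on $[t^*,\infty)$. The comparison lemma then gives
\[
V(t)\le V(t^*)e^{-\lambda_e(t-t^*)}+c_d(t^*)/\lambda_e .
\]
Converting back with the metric bounds yields \cref{eq:theta_bound} with $t_1\to t^*$ and $\lambda_d\to\lambda_e$. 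The ratio $\bar{\lambda}/\underline{\lambda}$ appears on the initial term and the factor $2\bar{\lambda}/\lambda_e$ on $c_d$.

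For the limsup I will apply the same comparison from an arbitrary later start time $s\ge t^*$ and then let $s\to\infty$, so that $c_d(s)\to c_d^\infty$. This requires $\bar f_T(t)\to\bar z_o$, which follows from \cref{eq:observer_window_bound}.

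The main obstacle is bookkeeping. I need to confirm that the proof of \cref{thm:drift} indeed retains the sink $-(\alpha-\tfrac12)\tilde{\theta}^{\top}\Psi\tilde{\theta}$ rather than spending part of it in a Young bound. If part of it was spent, I will redo the Young splits using only the $\|\tilde{\theta}\|^2$ terms, so the full factor $(\alpha-\tfrac12)\varphi$ survives. A second check is that the gate's value $\varsigma_\Gamma<1$ never weakens the sink, which holds because the gate only shrinks the positive $\tfrac12\varsigma_\Gamma\Psi$ contribution.
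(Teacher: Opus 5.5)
Your proposal is correct and follows the paper's proof. The paper likewise starts from the intermediate inequality \cref{eq:Vdot_final} in the proof of \cref{thm:drift}, which does retain the full sink $-(\alpha-\tfrac{1}{2})\tilde{\theta}^{\top}\Psi\tilde{\theta}$, so your bookkeeping concern does not arise. It then applies $\Psi\succeq\varphi I_{p}$ for $t\geq t^{*}$ to obtain $\dot V\leq-\lambda_{e}V+c_{d}(t)$ and reruns the comparison-lemma argument from $t^{*}$; your restart-at-$s\geq t^{*}$ route to the $c_{d}^{\infty}$ limit superior is an equally valid way to finish.
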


Without any excitation, the same Lyapunov argument still bounds the time-averaged first-order functional error.

\begin{proposition}
\label{cor:functional}
Under the conditions of \cref{thm:drift},
$\limsup_{S \to \infty}\, \tfrac{1}{S}\int_{t_{1}}^{t_{1}+S} \tilde{\theta}(r)^{\top} \Psi(r)\, \tilde{\theta}(r)\, dr \leq c_{d}^{\infty}/(\alpha - \tfrac{1}{2})$.
\end{proposition}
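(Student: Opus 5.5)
The approach is to reuse the Lyapunov computation behind \cref{thm:drift}, but this time keep the sink term $-(\alpha-\tfrac12)\tilde{\theta}^{\top}\Psi\tilde{\theta}$ instead of discarding it, and then integrate the resulting differential inequality over $[t_1,t_1+S]$.

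Take $V=\tfrac12\tilde{\theta}^{\top}\Gamma^{-1}\tilde{\theta}$ and use $\dot{\tilde{\theta}}=-\dot{\hat{\theta}}$.

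\emph{Step 1 (gain term).} By \cref{lem:gamma},
\[
\tfrac{d}{dt}\Gamma^{-1}=-\beta_0(\Gamma^{-1}-\bar{\lambda}^{-1}I_p)+\varsigma_\Gamma\Psi .
\]
The forgetting part contributes a nonpositive term. Since $\varsigma_\Gamma\le 1$ and $\Psi\succeq0$, the gain dynamics contribute at most $\tfrac12\tilde{\theta}^{\top}\Psi\tilde{\theta}$.

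\emph{Step 2 (update term).} By \cref{eq:proj_prop}, the update contributes at most $-\tilde{\theta}^{\top}(\alpha\Xi_f-k_\sigma\hat{\theta})$. Substitute \cref{eq:Xif_decomp} and write $\hat{\theta}=\theta^*-\tilde{\theta}$. The sigma term gives
\[
k_\sigma\tilde{\theta}^{\top}\hat{\theta}\le -\tfrac{k_\sigma}{2}\|\tilde{\theta}\|^2+\tfrac{k_\sigma}{2}\bar{\theta}^2 .
\]
Bound $\eta_f$ with \cref{eq:etaf_bound}. The cubic term satisfies $\alpha c_{\eta1}\|\tilde{\theta}\|^3\le\alpha c_{\eta1}\bar{\tilde{\theta}}\|\tilde{\theta}\|^2$. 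Apply Young's inequality to the linear term:
\[
\alpha\|\tilde{\theta}\|(c_{\eta2}\bar f_T+c_{\eta3})\le \tfrac{k_\sigma}{4}\|\tilde{\theta}\|^2+\tfrac{\alpha^2}{k_\sigma}(c_{\eta2}\bar f_T+c_{\eta3})^2 .
\]

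\emph{Step 3 (collect).} Combining the two steps gives
\[
\dot V\le -(\alpha-\tfrac12)\tilde{\theta}^{\top}\Psi\tilde{\theta}-k_1'\|\tilde{\theta}\|^2+c_d(t), \qquad k_1'\ge0,
\]
so in particular $\dot V\le-(\alpha-\tfrac12)\tilde{\theta}^{\top}\Psi\tilde{\theta}+c_d(t)$. The Young constants only need to be chosen so that the residual is at most $c_d(t)$ as defined in \cref{eq:cd_def}. Whatever split produced $c_d$ in the proof of \cref{thm:drift} can be reused, since this is exactly the intermediate inequality of that proof before the sink was dropped.

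\emph{Step 4 (integrate and average).} Integrating over $[t_1,t_1+S]$ and using $V\ge0$ gives
\[
(\alpha-\tfrac12)\int_{t_1}^{t_1+S}\tilde{\theta}^{\top}\Psi\tilde{\theta}\,dr\le V(t_1)+\int_{t_1}^{t_1+S}c_d(r)\,dr .
\]
Here $V(t_1)\le\bar{\tilde{\theta}}^2/(2\underline{\lambda})$ is finite. Divide by $S$ and let $S\to\infty$. The $V(t_1)/S$ term vanishes. Because $\bar f_T(r)\to\bar z_o$ exponentially, $c_d(r)\to c_d^\infty$, so its Ces\`aro average converges to $c_d^\infty$. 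Dividing by $\alpha-\tfrac12>0$ yields the claim.

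The main obstacle is Step 3: making sure the constants close. Specifically, after absorbing the cubic and cross terms into $k_\sigma\|\tilde{\theta}\|^2$, the leftover must be bounded by $c_d(t)$ with exactly the stated constants, and the $\|\tilde{\theta}\|^2$ coefficient must stay nonpositive; the latter is guaranteed by $k_1>0$. A further point is that $\tfrac{d}{dt}V$ must be valid almost everywhere under the switching projection and gate. This follows from absolute continuity of $\Gamma$ and $\hat{\theta}$, established in \cref{lem:gamma,lem:proj}.
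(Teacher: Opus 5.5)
Your proof is the paper's: keep the sink $-(\alpha-\tfrac12)\tilde{\theta}^{\top}\Psi\tilde{\theta}$ in \cref{eq:Vdot_final}, drop $-k_1\|\tilde{\theta}\|^2$, integrate over $[t_1,t_1+S]$, bound $V(t_1)$ by $\bar{\tilde{\theta}}^2/(2\underline{\lambda})$ via the sandwich inequality, and pass to the Ces\`aro limit of $c_d$.

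There is one slip in Step 2. Your split $k_\sigma\tilde{\theta}^{\top}\theta^{*}\le\tfrac{k_\sigma}{2}\|\tilde{\theta}\|^2+\tfrac{k_\sigma}{2}\bar{\theta}^2$, combined with the $\tfrac{k_\sigma}{4}$ taken by the linear term, leaves the coefficient $-(\tfrac{k_\sigma}{4}-\alpha c_{\eta1}\bar{\tilde{\theta}})$ on $\|\tilde{\theta}\|^2$. The gain condition $k_\sigma>2\alpha c_{\eta1}\bar{\tilde{\theta}}$ does not make this nonpositive. You need the paper's split $k_\sigma\tilde{\theta}^{\top}\theta^{*}\le\tfrac{k_\sigma}{4}\|\tilde{\theta}\|^2+k_\sigma\bar{\theta}^2$. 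It gives exactly $k_1=\tfrac{k_\sigma}{2}-\alpha c_{\eta1}\bar{\tilde{\theta}}>0$ with residual $c_d(t)$, which is what your Step 3 fallback to the intermediate inequality of \cref{thm:drift} already assumes.
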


The quantity $\tilde{\theta}(r)^{\top}\Psi(r)\tilde{\theta}(r)$ is the first-order functional error of the current model accumulated along the recent trajectory. Thus, up to the Taylor remainder and approximation error, \cref{cor:functional} bounds the time-averaged squared mismatch between $\Phi(\cdot,\hat{\theta})$ and $f$ without requiring excitation. For $p \gg n$, a bounded trajectory typically excites only a parameter subspace, making this functional statement more informative than a full-parameter claim.

\label{app:functional_proof}

Consider the Lyapunov function $V = \tfrac{1}{2}\tilde{\theta}^{\top} P(t)\tilde{\theta}$ with $P \triangleq \Gamma^{-1}$, which, by \cref{lem:gamma}, satisfies the sandwich inequality
\begin{equation}
\label{eq:sandwich}
\tfrac{1}{2\bar{\lambda}}\left\Vert \tilde{\theta} \right\Vert^{2} \leq V \leq \tfrac{1}{2\underline{\lambda}}\left\Vert \tilde{\theta} \right\Vert^{2}.
\end{equation}
Since $\dot{P} = -P\dot{\Gamma}P$, the update \cref{eq:Gamma_update} gives $\dot{P} = -\beta_{0}(P - \bar{\lambda}^{-1}I_{p}) + \varsigma_{\Gamma}\Psi$. Using $\dot{\tilde{\theta}} = -\dot{\hat{\theta}}$ and the projection property \cref{eq:proj_prop} with $y = \alpha\Xi_{f} - k_{\sigma}\hat{\theta}$,
\begin{equation}
\label{eq:Vdot_1}
\dot{V} = -\tilde{\theta}^{\top} P\dot{\hat{\theta}} + \tfrac{1}{2}\tilde{\theta}^{\top}\dot{P}\tilde{\theta}
\leq -\alpha\tilde{\theta}^{\top}\Xi_{f} + k_{\sigma}\tilde{\theta}^{\top}\hat{\theta} - \tfrac{\beta_{0}}{2}\tilde{\theta}^{\top}(P - \bar{\lambda}^{-1}I_{p})\tilde{\theta} + \tfrac{\varsigma_{\Gamma}}{2}\tilde{\theta}^{\top}\Psi\tilde{\theta}.
\end{equation}
Substituting \cref{eq:Xif_decomp}, using $\hat{\theta} = \theta^{*} - \tilde{\theta}$, $0 \leq \varsigma_{\Gamma} \leq 1$, $\Psi \succeq 0$, and $P - \bar{\lambda}^{-1}I_{p} \succeq 0$ (\cref{lem:gamma}),
\begin{equation}
\label{eq:Vdot_2}
\dot{V} \leq -\left(\alpha - \tfrac{1}{2}\right)\tilde{\theta}^{\top}\Psi\tilde{\theta} - k_{\sigma}\left\Vert \tilde{\theta} \right\Vert^{2} + k_{\sigma}\tilde{\theta}^{\top}\theta^{*} - \alpha\tilde{\theta}^{\top}\eta_{f}.
\end{equation}
Using Young's inequality in the form $ab \leq \tfrac{a^{2}}{4} + b^{2}$, the sigma-modification cross term is bounded as $k_{\sigma}\tilde{\theta}^{\top}\theta^{*} \leq \tfrac{k_{\sigma}}{4}\left\Vert \tilde{\theta} \right\Vert^{2} + k_{\sigma}\bar{\theta}^{2}$. Using \cref{eq:etaf_bound} and the projection bound $\left\Vert \tilde{\theta} \right\Vert \leq \bar{\tilde{\theta}}$,
\begin{equation}
\begin{split}
\alpha\left|\tilde{\theta}^{\top}\eta_{f}\right| &\leq \alpha c_{\eta 1}\bar{\tilde{\theta}}\left\Vert \tilde{\theta} \right\Vert^{2} + \alpha\left(c_{\eta 2}\bar{f}_{T}(t) + c_{\eta 3}\right)\left\Vert \tilde{\theta} \right\Vert \\
&\leq \left(\alpha c_{\eta 1}\bar{\tilde{\theta}} + \tfrac{k_{\sigma}}{4}\right)\left\Vert \tilde{\theta} \right\Vert^{2} + \tfrac{\alpha^{2}}{k_{\sigma}}\left(c_{\eta 2}\bar{f}_{T}(t) + c_{\eta 3}\right)^{2},
\end{split}
\end{equation}
where Young's inequality with parameter $k_{\sigma}/2$ was applied to the linear term. Collecting terms in \cref{eq:Vdot_2} yields
\begin{equation}
\label{eq:Vdot_final}
\dot{V} \leq -\left(\alpha - \tfrac{1}{2}\right)\tilde{\theta}^{\top}\Psi\tilde{\theta} - k_{1}\left\Vert \tilde{\theta} \right\Vert^{2} + c_{d}(t),
\end{equation}
with $k_{1} = \tfrac{k_{\sigma}}{2} - \alpha c_{\eta 1}\bar{\tilde{\theta}}$, which is positive by the gain condition of \cref{thm:drift} since $2\alpha c_{\eta 1}\bar{\tilde{\theta}} = \alpha T\bar{\upsilon}c_{\theta\theta}\bar{\tilde{\theta}}$, and $c_{d}(t)$ as in \cref{eq:cd_def}.

\textit{Proof of \cref{thm:drift}.} Dropping the first (non-positive) term in \cref{eq:Vdot_final} and using \cref{eq:sandwich} in the form $\left\Vert \tilde{\theta} \right\Vert^{2} \geq 2\underline{\lambda}V$,
\begin{equation}
\dot{V} \leq -\lambda_{d}V + c_{d}(t), \qquad \lambda_{d} = 2k_{1}\underline{\lambda}.
\end{equation}
The comparison lemma gives, for $t \geq t_{1}$,
\begin{equation}
V(t) \leq V(t_{1})e^{-\lambda_{d}(t - t_{1})} + \int_{t_{1}}^{t}e^{-\lambda_{d}(t - s)}c_{d}(s)\, ds.
\end{equation}
Since $\bar{f}_{T}$ is nonincreasing by the definition of $\bar{f}_{T}$, $c_{d}$ is nonincreasing with $\lim_{t\to\infty}c_{d}(t) = c_{d}^{\infty}$, so the integral is bounded by $c_{d}(t_{1})/\lambda_{d}$ and its limit superior by $c_{d}^{\infty}/\lambda_{d}$. Converting through \cref{eq:sandwich}, $\left\Vert \tilde{\theta}(t) \right\Vert^{2} \leq 2\bar{\lambda}V(t)$ and $V(t_{1}) \leq \left\Vert \tilde{\theta}(t_{1}) \right\Vert^{2}/(2\underline{\lambda})$, which yields both bounds in \cref{eq:theta_bound}. \hfill $\blacksquare$

\textit{Remark.} Since $c_{d}^{\infty} \geq k_{\sigma}\bar{\theta}^{2}$ and $k_{1} < k_{\sigma}/2$, the excitation-free radius in \cref{eq:theta_bound} satisfies $\tfrac{2\bar{\lambda}}{\lambda_{d}}c_{d}^{\infty} = \tfrac{\bar{\lambda}}{k_{1}\underline{\lambda}}c_{d}^{\infty} > 2\tfrac{\bar{\lambda}}{\underline{\lambda}}\bar{\theta}^{2}$, i.e., it exceeds $\bar{\theta}\sqrt{2\bar{\lambda}/\underline{\lambda}} > 2\bar{\theta}$ (for the diagnostic gains $\bar{\lambda}/\underline{\lambda} = 20$, more than $6\bar{\theta}$), whereas the projection alone gives $\left\Vert \tilde{\theta} \right\Vert \leq 2\bar{\theta} + \epsilon_{\theta}$. The same computation for \cref{eq:theta_bound_a} gives a radius of at least $\sqrt{2}\,\bar{\theta}$. This is the usual sigma-modification tradeoff: without excitation, \cref{thm:drift,thm:adjoint_uub} certify invariance and the exponential rate into the residual set, \cref{cor:functional} certifies the averaged functional error, and parameter localization requires excitation (\cref{thm:wpe}).

\textit{Proof of \cref{cor:functional}.} Retaining the first term in \cref{eq:Vdot_final} and dropping $-k_{1}\left\Vert \tilde{\theta} \right\Vert^{2}$, integration over $[t_{1}, t_{1}+S]$ gives
\begin{equation}
\begin{split}
\left(\alpha - \tfrac{1}{2}\right)&\int_{t_{1}}^{t_{1}+S}\tilde{\theta}(r)^{\top}\Psi(r)\tilde{\theta}(r)\, dr \\
&\leq V(t_{1}) - V(t_{1}+S) + \int_{t_{1}}^{t_{1}+S}c_{d}(r)\, dr \\
&\leq \tfrac{\bar{\tilde{\theta}}^{2}}{2\underline{\lambda}} + \int_{t_{1}}^{t_{1}+S}c_{d}(r)\, dr,
\end{split}
\end{equation}
where $V \geq 0$ and \cref{eq:sandwich} with $\left\Vert \tilde{\theta} \right\Vert \leq \bar{\tilde{\theta}}$ were used. Dividing by $S$ and letting $S \to \infty$, the first term vanishes and the Cesaro mean of the nonincreasing $c_{d}$ converges to $c_{d}^{\infty}$, which establishes the bound in \cref{cor:functional}. \hfill $\blacksquare$

\textit{Proof of \cref{thm:wpe}.} For $t \geq t^{*}$, $\tilde{\theta}^{\top}\Psi\tilde{\theta} \geq \varphi\left\Vert \tilde{\theta} \right\Vert^{2}$, so \cref{eq:Vdot_final} yields $\dot{V} \leq -\left(k_{1} + (\alpha - \tfrac{1}{2})\varphi\right)\left\Vert \tilde{\theta} \right\Vert^{2} + c_{d}(t) \leq -\lambda_{e}V + c_{d}(t)$, and the argument of \cref{thm:drift} applies verbatim with $\lambda_{d}$ replaced by $\lambda_{e}$. \hfill $\blacksquare$

\section{Adjoint Representation of the Trajectory Gradient}
\label{app:adjoint_proof}

This appendix states and proves the adjoint representation of the trajectory gradient used in \cref{sec:adjoint}.

\begin{theorem}
\label{thm:adjoint}
Let \cref{eq:containment_cond} hold, let $\hat{X}_{p}$ satisfy \cref{eq:virtual_system}, and let $\nu$ satisfy the backward equation in \cref{eq:adjoint_gradient} with $\nu(t) = 0_{2n}$. Then $\nabla_{\hat{\theta}}\mathcal{E}_{T}(t)$ equals the integral in \cref{eq:adjoint_gradient}.
\end{theorem}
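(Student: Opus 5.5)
We will prove the adjoint representation by the standard forward-sensitivity route followed by a duality (integration-by-parts) step, so that no Lagrangian calculus on function spaces is needed. First, by \cref{eq:containment_cond} (through the containment result of \cref{app:containment_proof}) the virtual trajectory $\hat{X}_{p}(\cdot;t)$ stays in $\Omega$ for every $\hat{\theta}\in\Theta_{\epsilon}$. On $\Omega\times\Theta_{\epsilon}$ the field $F$ is $C^{1}$ in $(\chi,\theta)$ and piecewise continuous in $\tau$ through $u_{\mathrm{rec}}$, because $\Phi$ is smooth and $g$ is $C^{2}$ by \cref{assm:f_smooth}. The classical differentiability theorem for ODE solutions with respect to parameters therefore applies (Carath\'{e}odory form suffices for a piecewise continuous $u_{\mathrm{rec}}$). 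It gives that $\hat{\theta}\mapsto\hat{X}_{p}(\tau;t)$ is $C^{1}$. Its sensitivity $S(\tau)\triangleq\partial\hat{X}_{p}/\partial\hat{\theta}\in\mathbb{R}^{2n\times p}$ is absolutely continuous and solves
\begin{equation*}
\dot{S} = A(\tau)S + B\,\Phi'(\hat{X}_{p},\hat{\theta}), \qquad S(t-T)=0,
\end{equation*}
where $A\triangleq\partial F/\partial\chi$ along $\hat{X}_{p}$ and $\partial F/\partial\theta=B\Phi'$. The initial condition vanishes because $\hat{X}_{p}(t-T;t)=X(t-T)$ does not depend on $\hat{\theta}$.

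Next, since the integrand of $\mathcal{E}_{T}$ is $C^{1}$ in $\hat{\theta}$ with derivatives bounded uniformly over the compact window, we differentiate under the integral sign. Because $X(\tau)$ is measured data independent of $\hat{\theta}$, this gives $\nabla_{\hat{\theta}}\mathcal{E}_{T}=\int_{t-T}^{t}S(\tau)^{\top}e(\tau;t)\,d\tau$.

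The key step is to trade $S$ for $\nu$. The backward equation in \cref{eq:adjoint_gradient} is linear with bounded piecewise continuous coefficients, so it has a unique absolutely continuous solution $\nu$ on $[t-T,t]$. We compute $\tfrac{d}{d\tau}(S^{\top}\nu)$, which equals $S^{\top}A^{\top}\nu+\Phi'^{\top}B^{\top}\nu-S^{\top}A^{\top}\nu-S^{\top}e=\Phi'^{\top}B^{\top}\nu-S^{\top}e$. We integrate this over $[t-T,t]$ and use $S(t-T)=0$ together with $\nu(t)=0$, so the boundary term $S^{\top}\nu\big|_{t-T}^{t}$ vanishes. This yields $\int S^{\top}e\,d\tau=\int(\partial F/\partial\theta)^{\top}\nu\,d\tau$, which is exactly the integral in \cref{eq:adjoint_gradient}. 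With $B^{\top}\nu=\nu_{v}$ it also gives \cref{eq:Xia_def}.

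The main obstacle is not the algebra but justifying the regularity steps: differentiability of the flow in $\hat{\theta}$, differentiation under the integral, and the product rule for absolutely continuous functions. These steps must hold given that $u_{\mathrm{rec}}$ is only piecewise continuous and the bounds on $\Phi$ hold only on $\Omega\times\Theta_{\epsilon}$. We handle this by invoking containment first, so that every trajectory with nearby $\hat{\theta}$ lies in $\Omega$. We then apply the Carath\'{e}odory versions of the sensitivity theorem interval by interval between the discontinuities of $u_{\mathrm{rec}}$, gluing the pieces by continuity of $S$ and $\nu$ at the switching times.
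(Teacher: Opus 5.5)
Your proposal is correct and follows essentially the same route as the paper. Both use the forward sensitivity $S$ with $S(t-T)=0$, the chain-rule identity $\nabla_{\hat{\theta}}\mathcal{E}_{T}=\int_{t-T}^{t}S^{\top}e\,d\tau$, and an integration by parts against $\nu$ whose boundary terms vanish by $\nu(t)=0$ and $S(t-T)=0$; your direct differentiation of $S^{\top}\nu$ is the same computation as the paper's integration of $\nu^{\top}S'$ by parts, and the regularity justifications you add (Carath\'{e}odory differentiability in $\hat{\theta}$, differentiation under the integral sign) are sound but left implicit in the paper.
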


Let $S(\tau) = \partial\hat{X}_{p}(\tau;t)/\partial\hat{\theta}$ denote the sensitivity, which satisfies $S' = \tfrac{\partial F}{\partial\chi}(\hat{X}_{p})S + \tfrac{\partial F}{\partial\theta}(\hat{X}_{p})$ with $S(t-T) = 0$, since the initial condition $X(t-T)$ does not depend on $\hat{\theta}$. By the chain rule,
\begin{equation}
\label{eq:gradE_S}
\nabla_{\hat{\theta}}\mathcal{E}_{T} = \int_{t-T}^{t} S(\tau)^{\top} e(\tau; t)\, d\tau.
\end{equation}
Consider $\int_{t-T}^{t}\nu^{\top}\left(S' - \tfrac{\partial F}{\partial\chi}S - \tfrac{\partial F}{\partial\theta}\right)d\tau = 0$. Integrating $\nu^{\top} S'$ by parts and using $\nu(t) = 0$ and $S(t-T) = 0$ eliminates the boundary terms, leaving
\begin{equation}
\int_{t-T}^{t}\left(-\dot{\nu}^{\top} - \nu^{\top}\tfrac{\partial F}{\partial\chi}\right)S\, d\tau = \int_{t-T}^{t}\nu^{\top}\tfrac{\partial F}{\partial\theta}\, d\tau.
\end{equation}
Selecting $\nu$ to satisfy the backward equation in \cref{eq:adjoint_gradient} makes the left integrand equal $e^{\top} S$, so \cref{eq:gradE_S} equals $\int_{t-T}^{t}\left(\tfrac{\partial F}{\partial\theta}\right)^{\top}\nu\, d\tau$, which is \cref{eq:adjoint_gradient}. \hfill $\blacksquare$

\section{Containment and Proof of the Trajectory Decomposition}
\label{app:containment_proof}
\label{app:traj_decomp_proof}

This appendix proves that the virtual trajectory stays in $\Omega$ and then proves the trajectory decomposition of \cref{lem:traj_decomp}. The containment lemma referenced in \cref{sec:adjoint} is stated first.

\begin{lemma}
\label{lem:containment}
Let \cref{assm:bounded_trajectory,assm:f_smooth} hold, let $\hat{\theta}(t) \in \Theta_{\epsilon}$, and let the window condition in \cref{eq:containment_cond} hold. Then $\hat{X}_{p}(\tau; t) \in \Omega$ and $\left\Vert e(\tau;t) \right\Vert < \varpi$ for all $\tau \in [t-T, t]$.
\end{lemma}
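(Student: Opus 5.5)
The plan is a continuation (bootstrap) argument on the window, combining a Gr\"onwall comparison with the true trajectory and the containment condition \cref{eq:containment_cond}. Fix $t \geq t_{0}+T$ and let $\tau^{\dagger} \triangleq \sup\{\tau \in [t-T,t] : \hat{X}_{p}(s;t) \in \Omega \text{ for all } s \in [t-T,\tau]\}$. Since $\hat{X}_{p}(t-T;t) = X(t-T)$ and $\left\Vert X(t-T) \right\Vert \leq \bar{X} < \bar{X}+\varpi$ by \cref{assm:bounded_trajectory}, continuity gives $\tau^{\dagger} > t-T$. On $[t-T,\tau^{\dagger}]$ both $\hat{X}_{p}$ and $X$ lie in $\Omega$, so the bounds of \cref{sec:problem} (extended to $\Omega \times \Theta_{\epsilon}$) and the Lipschitz constant $L_{F}$ are available there.

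On that interval, write the measured trajectory as a solution of the model field plus a defect. By \cref{eq:system} and \cref{eq:ufap}, using the recorded input $u_{\mathrm{rec}} = u$ and $\hat{\theta}=\theta^{*}-\tilde{\theta}$,
\[
\dot{X} = F(X,\hat{\theta}(t),\tau) + B\big(\Phi(X,\theta^{*}) - \Phi(X,\hat{\theta}(t)) + \varepsilon(X)\big).
\]
The mean value theorem in $\theta$ over the convex set $\Theta_{\epsilon}$, together with $\left\Vert \Phi' \right\Vert \leq \bar{\upsilon}$ and $\left\Vert \tilde{\theta} \right\Vert \leq \bar{\tilde{\theta}}$, bounds the defect by $\delta \triangleq \bar{\upsilon}\bar{\tilde{\theta}} + \bar{\varepsilon}$. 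Subtracting from \cref{eq:virtual_system} gives
\[
\tfrac{d}{d\tau}\left\Vert e \right\Vert \leq L_{F}\left\Vert e \right\Vert + \delta
\]
in the Dini sense. Here the Lipschitz bound $\left\Vert F(\hat{X}_{p})-F(X)\right\Vert \leq L_{F}\left\Vert e \right\Vert$ uses convexity of $\Omega$, so the segment between $\hat{X}_{p}$ and $X$ stays in $\Omega$. Since $e(t-T)=0$, Gr\"onwall gives $\left\Vert e(\tau;t) \right\Vert \leq \delta\,(e^{L_{F}(\tau-t+T)}-1)/L_{F} \leq \kappa_{T}\delta < \varpi$ by \cref{eq:containment_cond}.

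The step needing the most care is closing the bootstrap. Because the bound $\kappa_{T}\delta$ is strictly below $\varpi$, we get $\left\Vert \hat{X}_{p}(\tau;t) \right\Vert \leq \left\Vert X(\tau) \right\Vert + \left\Vert e \right\Vert \leq \bar{X} + \kappa_{T}\delta < \bar{X}+\varpi$ on $[t-T,\tau^{\dagger}]$. So $\hat{X}_{p}$ lies in the interior of $\Omega$ at $\tau^{\dagger}$. If $\tau^{\dagger}<t$, continuity would extend containment beyond $\tau^{\dagger}$, contradicting the definition of the supremum; hence $\tau^{\dagger}=t$.

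The same argument also excludes finite escape time of \cref{eq:virtual_system} before $t$: the solution stays in a compact set on which $F$ is locally Lipschitz, so it exists on the whole window. This yields both $\hat{X}_{p}(\tau;t) \in \Omega$ and $\left\Vert e(\tau;t) \right\Vert < \varpi$ for all $\tau \in [t-T,t]$.

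A minor point to check is that $L_{F}$ is by definition a bound for $\partial F/\partial\chi$ on $\Omega \times \Theta_{\epsilon}$, which is uniform in $\tau$ because $g$ is $C^{2}$ and $\left\Vert u \right\Vert \leq \bar{u}$.
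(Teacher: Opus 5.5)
Your proof is correct and follows the paper's route. Both arguments run a continuation argument on the window, apply Gr\"onwall to the error against the measured trajectory with defect bounded by $\bar{\upsilon}\bar{\tilde{\theta}} + \bar{\varepsilon}$, and close using the strict inequality in the window condition. The only differences are minor: you bootstrap on $\hat{X}_{p}(s;t) \in \Omega$ rather than on $\left\Vert e(s;t) \right\Vert \leq \varpi$, getting segment containment from convexity of $\Omega$, and you explicitly rule out finite escape time, which the paper leaves implicit.
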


\textit{Proof of \cref{lem:containment}.} Let $\tau^{*} \triangleq \sup\{\tau \in [t-T, t] : \left\Vert e(s;t) \right\Vert \leq \varpi \; \forall s \in [t-T, \tau]\}$, which is well defined with $\tau^{*} > t-T$ by continuity and $e(t-T; t) = 0$. For $\tau \leq \tau^{*}$, both $\hat{X}_{p}(\tau)$ and the segment between $X(\tau)$ and $\hat{X}_{p}(\tau)$ lie in $\Omega$. On this interval,
\begin{equation}
\begin{split}
\dot{e} &= F(\hat{X}_{p}, \hat{\theta}, \tau) - F(X, \hat{\theta}, \tau) + F(X, \hat{\theta}, \tau) - \begin{bmatrix}\dot{x} \\ f(X) + g(X)u\end{bmatrix} \\
&= F(\hat{X}_{p}, \hat{\theta}, \tau) - F(X, \hat{\theta}, \tau) - B\left(f(X) - \Phi(X, \hat{\theta})\right),
\end{split}
\end{equation}
so, using the mean value bound $\left\Vert F(\hat{X}_{p},\hat{\theta},\tau) - F(X,\hat{\theta},\tau) \right\Vert \leq L_{F}\left\Vert e \right\Vert$ and, by \cref{eq:ufap} and the mean value theorem on the segment $[\hat{\theta}, \theta^{*}] \subset \Theta_{\epsilon}$, $\left\Vert f(X) - \Phi(X,\hat{\theta}) \right\Vert \leq \bar{\upsilon}\left\Vert \tilde{\theta} \right\Vert + \bar{\varepsilon} \leq \bar{\upsilon}\bar{\tilde{\theta}} + \bar{\varepsilon}$, the Gronwall inequality yields
\begin{equation}
\label{eq:e_apriori}
\left\Vert e(\tau; t) \right\Vert \leq \frac{e^{L_{F}(\tau - (t-T))} - 1}{L_{F}}\left(\bar{\upsilon}\bar{\tilde{\theta}} + \bar{\varepsilon}\right) \leq \kappa_{T}\left(\bar{\upsilon}\bar{\tilde{\theta}} + \bar{\varepsilon}\right) < \varpi,
\end{equation}
by \cref{eq:containment_cond}. If $\tau^{*} < t$, then continuity would require $\left\Vert e(\tau^{*}; t) \right\Vert = \varpi$, contradicting \cref{eq:e_apriori}; hence $\tau^{*} = t$ and the claim follows. \hfill $\blacksquare$

The operators used in \cref{lem:traj_decomp} are defined as follows. Let $X_{s}(\tau) \triangleq X(\tau) + s\, e(\tau; t)$ for $s \in [0,1]$ denote the segment between the measured and predicted states, and define the mean-value system matrix and its transition matrix as
\begin{equation}
\label{eq:Abar_def}
\bar{A}(\tau) \triangleq \int_{0}^{1} \frac{\partial F}{\partial \chi}\left(X_{s}(\tau),\, \hat{\theta}(t),\, \tau\right) ds, \qquad
\frac{\partial}{\partial \tau} \Phi_{A}(\tau, s) = \bar{A}(\tau)\, \Phi_{A}(\tau, s), \quad \Phi_{A}(s, s) = I_{2n}.
\end{equation}
Define the operators
\begin{align}
W(\tau) &\triangleq \int_{t-T}^{\tau} \Phi_{A}(\tau, s)\, B\, \Phi'(X(s), \hat{\theta}(t))\, ds, \label{eq:W_def}\\
d(\tau) &\triangleq \int_{t-T}^{\tau} \Phi_{A}(\tau, s)\, B \left[R(s) + \varepsilon(X(s))\right] ds, \label{eq:d_def}
\end{align}
where $R(s) \triangleq \Phi(X(s), \theta^{*}) - \Phi(X(s), \hat{\theta}(t)) - \Phi'(X(s), \hat{\theta}(t))\,\tilde{\theta}(t)$ denotes the Taylor remainder, which satisfies $\left\Vert R(s) \right\Vert \leq \tfrac{1}{2}c_{\theta\theta}\left\Vert \tilde{\theta}(t) \right\Vert^{2}$ by the smoothness of $\Phi$. Let $S(\tau) \triangleq \partial \hat{X}_{p}(\tau; t)/\partial \hat{\theta}$ denote the sensitivity of the virtual trajectory, which satisfies $S' = (\partial F/\partial \chi)(\hat{X}_{p})\, S + B\, \Phi'(\hat{X}_{p}, \hat{\theta})$ with $S(t-T) = 0$. Since $F(\hat{X}_{p}, \hat{\theta}) - F(X, \hat{\theta}) = \bar{A}\, e$ along the segment, the error dynamics are linear in $e$ with forcing $-B(\Phi'\tilde{\theta} + R + \varepsilon)$, which yields the following representation.

The constants in \cref{eq:etaa_bound} are $d_{2} \triangleq \tfrac{1}{2}c_{\theta\theta}\kappa_{T}$, $d_{0} \triangleq \kappa_{T}\bar{\varepsilon}$, $w_{e} \triangleq \bar{w} + d_{2}\bar{\tilde{\theta}}$, $D_{S} \triangleq T e^{L_{F}T}\left(c_{X\theta} + L_{FX}\, T\, \bar{\upsilon}\right)$,
\begin{equation}
b_{1} \triangleq T\left(\bar{w}\, d_{2} + D_{S}\, w_{e}^{2}\right), \qquad
b_{2} \triangleq 2\, T\, D_{S}\, w_{e}\, d_{0}, \qquad
b_{3} \triangleq T\, d_{0}\left(\bar{w} + D_{S}\, d_{0}\right).
\label{eq:b_defs}
\end{equation}

\textit{Proof of \cref{lem:traj_decomp}.} By \cref{lem:containment}, the segment $X(\tau) + s\, e(\tau;t)$, $s \in [0,1]$, remains in $\Omega$, so $\bar{A}$ in \cref{eq:Abar_def} is well defined with $\left\Vert \bar{A}(\tau) \right\Vert \leq L_{F}$, and by the fundamental theorem of calculus, $F(\hat{X}_{p},\hat{\theta},\tau) - F(X,\hat{\theta},\tau) = \bar{A}(\tau)e(\tau;t)$ exactly. Therefore
\begin{equation}
\dot{e} = \bar{A}(\tau)e - B\left(\Phi'(X(\tau),\hat{\theta})\tilde{\theta} + R(\tau) + \varepsilon(X(\tau))\right),
\end{equation}
using \cref{eq:taylor} and \cref{eq:ufap}, and variation of constants with $e(t-T;t) = 0$ gives the exact representation $e(\tau;t) = -\left(W(\tau)\tilde{\theta} + d(\tau)\right)$ with $W$ and $d$ as in \cref{eq:W_def}--\cref{eq:d_def}. Since $\left\Vert \Phi_{A}(\tau,s) \right\Vert \leq e^{L_{F}(\tau - s)}$,
\begin{equation}
\label{eq:Wd_bounds}
\left\Vert W(\tau) \right\Vert \leq \bar{\upsilon}\kappa_{T} = \bar{w}, \qquad \left\Vert d(\tau) \right\Vert \leq \kappa_{T}\left(\tfrac{1}{2}c_{\theta\theta}\left\Vert \tilde{\theta} \right\Vert^{2} + \bar{\varepsilon}\right) = d_{2}\left\Vert \tilde{\theta} \right\Vert^{2} + d_{0},
\end{equation}
and consequently, using $\left\Vert \tilde{\theta} \right\Vert \leq \bar{\tilde{\theta}}$,
\begin{equation}
\label{eq:esup_bound}
\left\Vert e(\tau;t) \right\Vert \leq \left(\bar{w} + d_{2}\bar{\tilde{\theta}}\right)\left\Vert \tilde{\theta} \right\Vert + d_{0} = w_{e}\left\Vert \tilde{\theta} \right\Vert + d_{0}.
\end{equation}

Next, the exact sensitivity $S$ and the operator $W$ are compared. Let $A^{v}(\tau) \triangleq \tfrac{\partial F}{\partial\chi}(\hat{X}_{p}(\tau),\hat{\theta},\tau)$ with transition matrix $\Phi_{v}(\tau,s)$, so that
\begin{equation*}
S(\tau) = \int_{t-T}^{\tau}\Phi_{v}(\tau,s)B\,\Phi'(\hat{X}_{p}(s),\hat{\theta})\, ds.
\end{equation*}
Since $A^{v}$ and $\bar{A}$ are evaluated at points within distance $\left\Vert e \right\Vert$ of one another inside $\Omega$, the bound on $\partial^{2}\Phi/\partial X\partial\theta$ and the definition of $L_{FX}$ give $\left\Vert A^{v}(\tau) - \bar{A}(\tau) \right\Vert \leq L_{FX}\left\Vert e(\tau;t) \right\Vert$. Writing $D(\tau,s) \triangleq \Phi_{v}(\tau,s) - \Phi_{A}(\tau,s)$, the difference satisfies $\partial_{\tau}D = A^{v}D + (A^{v} - \bar{A})\Phi_{A}$ with $D(s,s) = 0$, so
\begin{equation}
\left\Vert D(\tau,s) \right\Vert \leq \int_{s}^{\tau}e^{L_{F}(\tau - r)}\, L_{FX}\left\Vert e(r;t) \right\Vert\, e^{L_{F}(r-s)}\, dr \leq L_{FX}\, T\, e^{L_{F}T}\sup_{r}\left\Vert e(r;t) \right\Vert.
\end{equation}
Similarly, $\left\Vert \Phi'(\hat{X}_{p}(s),\hat{\theta}) - \Phi'(X(s),\hat{\theta}) \right\Vert \leq c_{X\theta}\left\Vert e(s;t) \right\Vert$. Combining,
\begin{equation}
\label{eq:SW_bound}
\left\Vert S(\tau) - W(\tau) \right\Vert \leq \int_{t-T}^{\tau}\left(\left\Vert D(\tau,s) \right\Vert\bar{\upsilon} + e^{L_{F}T}c_{X\theta}\left\Vert e(s;t) \right\Vert\right)ds \leq D_{S}\sup_{r \in [t-T,t]}\left\Vert e(r;t) \right\Vert,
\end{equation}
with $D_{S} = Te^{L_{F}T}\left(c_{X\theta} + L_{FX}T\bar{\upsilon}\right)$.

By \cref{thm:adjoint} and \cref{eq:gradE_S}, the adjoint signal satisfies exactly
\begin{equation}
\begin{split}
\Xi_{a} &= -\int_{t-T}^{t}S(\tau)^{\top} e(\tau;t)\, d\tau = \int_{t-T}^{t}S(\tau)^{\top}\left(W(\tau)\tilde{\theta} + d(\tau)\right)d\tau \\
&= H\tilde{\theta} + \int_{t-T}^{t}\left[W^{\top} d + (S - W)^{\top}\left(W\tilde{\theta} + d\right)\right]d\tau,
\end{split}
\end{equation}
which is \cref{eq:Xia_decomp} with $\eta_{a}$ equal to the last integral. Using \cref{eq:Wd_bounds}, \cref{eq:esup_bound}, \cref{eq:SW_bound}, $\left\Vert W\tilde{\theta} + d \right\Vert = \left\Vert e \right\Vert$, and $\left\Vert \tilde{\theta} \right\Vert \leq \bar{\tilde{\theta}}$ to fold cubic and quartic powers,
\begin{equation}
\begin{split}
\left\Vert \eta_{a} \right\Vert &\leq T\left[\bar{w}\left(d_{2}\left\Vert \tilde{\theta} \right\Vert^{2} + d_{0}\right) + D_{S}\left(w_{e}\left\Vert \tilde{\theta} \right\Vert + d_{0}\right)^{2}\right] \\
&\leq b_{1}\left\Vert \tilde{\theta} \right\Vert^{2} + b_{2}\left\Vert \tilde{\theta} \right\Vert + b_{3},
\end{split}
\end{equation}
with $b_{1}, b_{2}, b_{3}$ as in \cref{eq:b_defs}. Finally, $H$ is an integral of Gram matrices, hence positive semi-definite, with $\left\Vert H \right\Vert \leq T\bar{w}^{2}$. \hfill $\blacksquare$

\section{Proof of the Adjoint-Window Guarantee and the Window-Length Condition}
\label{app:adjoint_uub_proof}

This appendix proves \cref{thm:adjoint_uub} and the explicit window-length condition referenced in \cref{sec:adjoint}. The window-length corollary is stated first.

\begin{corollary}
\label{cor:window}
Let $L_{F} T \leq 1$. Then the gain condition of \cref{thm:adjoint_uub} holds whenever $T \leq \left(k_{\sigma}/(2\,\alpha\, C_{b})\right)^{1/3}$, where $C_{b}$ is the computable constant defined in \cref{app:adjoint_uub_proof}, obtained by majorizing $\kappa_{T} \leq e\, T$ and $e^{L_{F}T} \leq e$ in \cref{eq:b_defs}.
\end{corollary}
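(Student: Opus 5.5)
The plan is to substitute the majorants $\kappa_{T} \leq eT$ and $e^{L_{F}T} \leq e$ into the definitions in \cref{eq:b_defs}. This reduces $b_{1}$ and $b_{2}$ to explicit functions of $T$, each carrying at least a factor $T^{3}$. The condition $k_{\sigma} > 2\alpha(b_{1}\bar{\tilde{\theta}} + b_{2})$ then follows from $T^{3} \leq k_{\sigma}/(2\alpha C_{b})$, up to the strict versus nonstrict inequality, which I would handle by defining $C_{b}$ with a slack factor or by reading the condition as a strict inequality on $T$.

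First, the majorants. For $L_{F}T \leq 1$, convexity of the exponential gives $(e^{L_{F}T}-1)/L_{F} \leq T e^{L_{F}T} \leq eT$. Hence $\kappa_{T} \leq eT$, and therefore $\bar{w} \leq e\bar{\upsilon}T$, $d_{2} \leq \tfrac{1}{2}e c_{\theta\theta}T$, $d_{0} \leq e\bar{\varepsilon}T$, and $w_{e} \leq eT(\bar{\upsilon} + \tfrac{1}{2}c_{\theta\theta}\bar{\tilde{\theta}})$.

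Second, the bound on $D_{S}$. Here $D_{S} \leq eT(c_{X\theta} + L_{FX}T\bar{\upsilon})$, and the factor $T$ inside is at most $1/L_{F}$ when $L_{F}T \leq 1$, so $D_{S} \leq eT(c_{X\theta} + L_{FX}\bar{\upsilon}/L_{F})$.

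Third, the powers of $T$ in each constant.
\begin{itemize}
\item $b_{1} = T(\bar{w}d_{2} + D_{S}w_{e}^{2})$. The first term is $O(T^{3})$. The second is $O(T^{4})$, which is at most $T^{3}/L_{F}$ for $L_{F}T \leq 1$. So $b_{1} \leq C_{1}T^{3}$.
\item $b_{2} = 2TD_{S}w_{e}d_{0}$ is $O(T^{4})$. The same trick gives $b_{2} \leq C_{2}T^{3}$.
\end{itemize}
Set $C_{b} \triangleq C_{1}\bar{\tilde{\theta}} + C_{2}$, which is explicit in $\bar{\upsilon}, c_{\theta\theta}, c_{X\theta}, L_{F}, L_{FX}, \bar{\tilde{\theta}}, \bar{\varepsilon}$. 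Then $2\alpha(b_{1}\bar{\tilde{\theta}} + b_{2}) \leq 2\alpha C_{b}T^{3} \leq k_{\sigma}$ whenever $T \leq (k_{\sigma}/(2\alpha C_{b}))^{1/3}$.

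The main obstacle is the bookkeeping that brings every term to exactly $T^{3}$, rather than a mixture of $T^{3}$ and $T^{4}$. Absorbing the excess powers via $T \leq 1/L_{F}$ handles this cleanly.

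The other point needing care is strictness: $k_{1a} > 0$ requires the strict inequality. I would resolve this by replacing $C_{b}$ with any strictly larger constant, for example $2C_{b}$. Alternatively, if $C_{b} = 0$ (that is, $\bar{\varepsilon} = 0$ and $c_{\theta\theta} = 0$), the condition is trivial. Otherwise, equality in $T$ gives $2\alpha(\cdot) \leq k_{\sigma}$ with strictness coming from the strict slack in $e^{L_{F}T} - 1 < L_{F}Te^{L_{F}T}$ for $T > 0$.
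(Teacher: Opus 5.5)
Your proposal is correct and follows the paper's proof essentially step for step. It uses the same majorants $\kappa_{T}\leq eT$ and $e^{L_{F}T}\leq e$, the same absorption of the excess powers of $T$ via $T\leq 1/L_{F}$, and hence arrives at the same constant $C_{b}$.

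For strictness, your second option is the one the paper uses, and it is the one that proves the statement with this $C_{b}$ rather than an inflated constant. Since $\kappa_{T}<eT$ strictly for $T>0$, the positive term $T\bar{w}d_{2}\bar{\tilde{\theta}}$ falls strictly below its majorant, which gives $2\alpha(b_{1}\bar{\tilde{\theta}}+b_{2})<2\alpha C_{b}T^{3}\leq k_{\sigma}$.

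Your aside characterizing $C_{b}=0$ is inaccurate: the $D_{S}w_{e}^{2}$ contribution survives when $c_{\theta\theta}=\bar{\varepsilon}=0$. It is irrelevant, however, because all the constants are positive.
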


Consider $V_{a} = \tfrac{1}{2}\tilde{\theta}^{\top}\Gamma_{0}^{-1}\tilde{\theta}$, which satisfies $\tfrac{1}{2\bar{\lambda}}\left\Vert \tilde{\theta} \right\Vert^{2} \leq V_{a} \leq \tfrac{1}{2\underline{\lambda}}\left\Vert \tilde{\theta} \right\Vert^{2}$. Since $\Gamma_{0}$ is constant, using the projection property \cref{eq:proj_prop} with $y = \alpha\Xi_{a} - k_{\sigma}\hat{\theta}$ and the decomposition \cref{eq:Xia_decomp},
\begin{equation}
\begin{split}
\dot{V}_{a} &\leq -\alpha\tilde{\theta}^{\top} H\tilde{\theta} - \alpha\tilde{\theta}^{\top}\eta_{a} + k_{\sigma}\tilde{\theta}^{\top}\theta^{*} - k_{\sigma}\left\Vert \tilde{\theta} \right\Vert^{2} \\
&\leq -\alpha\tilde{\theta}^{\top} H\tilde{\theta} - k_{\sigma}\left\Vert \tilde{\theta} \right\Vert^{2} + k_{\sigma}\tilde{\theta}^{\top}\theta^{*} \\
&\quad + \alpha\left(b_{1}\bar{\tilde{\theta}} + b_{2}\right)\left\Vert \tilde{\theta} \right\Vert^{2} + \alpha b_{3}\left\Vert \tilde{\theta} \right\Vert,
\end{split}
\end{equation}
using \cref{eq:etaa_bound} and $\left\Vert \tilde{\theta} \right\Vert \leq \bar{\tilde{\theta}}$. Applying Young's inequality as in \cref{app:drift_proof}, $k_{\sigma}\tilde{\theta}^{\top}\theta^{*} \leq \tfrac{k_{\sigma}}{4}\left\Vert \tilde{\theta} \right\Vert^{2} + k_{\sigma}\bar{\theta}^{2}$ and $\alpha b_{3}\left\Vert \tilde{\theta} \right\Vert \leq \tfrac{k_{\sigma}}{4}\left\Vert \tilde{\theta} \right\Vert^{2} + \tfrac{\alpha^{2}b_{3}^{2}}{k_{\sigma}}$. Dropping the non-positive term $-\alpha\tilde{\theta}^{\top} H\tilde{\theta}$ yields
\begin{equation}
\dot{V}_{a} \leq -k_{1a}\left\Vert \tilde{\theta} \right\Vert^{2} + c_{a} \leq -\lambda_{a}V_{a} + c_{a},
\end{equation}
with $k_{1a} = \tfrac{k_{\sigma}}{2} - \alpha(b_{1}\bar{\tilde{\theta}} + b_{2}) > 0$ by the gain condition of \cref{thm:adjoint_uub} and $\lambda_{a} = 2k_{1a}\underline{\lambda}$. The comparison lemma and the sandwich inequality then give \cref{eq:theta_bound_a}. If WPE holds for $H$, that is $H(t) \succeq \varphi_{a}I_{p}$, the rate improves to $2(k_{1a} + \alpha\varphi_{a})\underline{\lambda}$ by retaining the dropped term. \hfill $\blacksquare$

\textit{Proof of \cref{cor:window}.} For $L_{F}T \leq 1$, the majorizations $\kappa_{T} = (e^{L_{F}T} - 1)/L_{F} \leq (e - 1)T < eT$, $e^{L_{F}T} \leq e$, and $T \leq 1/L_{F}$ hold, the first strictly for $T > 0$. Substituting into the definitions in \cref{eq:b_defs} gives $\bar{w} \leq e\bar{\upsilon}T$, $d_{2} \leq \tfrac{e c_{\theta\theta}}{2}T$, $d_{0} \leq e\bar{\varepsilon}T$, $w_{e} \leq e\, w_{1} T$ with $w_{1} \triangleq \bar{\upsilon} + \tfrac{c_{\theta\theta}\bar{\tilde{\theta}}}{2}$, and $D_{S} \leq eM_{S}T$ with $M_{S} \triangleq c_{X\theta} + L_{FX}\bar{\upsilon}/L_{F}$. Substituting into \cref{eq:b_defs} yields
\begin{equation}
b_{1}\bar{\tilde{\theta}} + b_{2} \leq \tfrac{e^{2}\bar{\upsilon}c_{\theta\theta}\bar{\tilde{\theta}}}{2}\, T^{3} + e^{3}M_{S}\, w_{1}\left(\bar{\tilde{\theta}}\, w_{1} + 2\bar{\varepsilon}\right) T^{4}
\leq C_{b}\, T^{3},
\end{equation}
where the factor $T^{4}$ was majorized by $T^{3}/L_{F}$ using $T \leq 1/L_{F}$, and
\begin{equation}
C_{b} \triangleq \frac{e^{2}\,\bar{\upsilon}\, c_{\theta\theta}\, \bar{\tilde{\theta}}}{2} + \frac{e^{3}\, M_{S}\, w_{1}}{L_{F}}\left(\bar{\tilde{\theta}}\, w_{1} + 2\bar{\varepsilon}\right).
\end{equation}
Since every term of $b_{1}\bar{\tilde{\theta}} + b_{2}$ contains at least one factor $\kappa_{T}$ (through $\bar{w}$, $d_{2}$, $d_{0}$, or $w_{e}$), and $\kappa_{T} < eT$ strictly, $b_{1}\bar{\tilde{\theta}} + b_{2} < C_{b}T^{3}$ for $T > 0$. Hence the window bound in \cref{cor:window} implies $2\alpha\left(b_{1}\bar{\tilde{\theta}} + b_{2}\right) < 2\alpha C_{b}T^{3} \leq k_{\sigma}$, which is the strict gain condition of \cref{thm:adjoint_uub}. \hfill $\blacksquare$

\section{Stored-Set Extension, Gauss--Newton Bound, and Proof of the NODE-CL Guarantee}
\label{app:stored_proof}
\label{app:stack_proof}

This appendix states and proves the stored-set extension of \cref{thm:drift}, the bound relating the Gauss--Newton matrix to the trajectory regressor, and \cref{thm:stack}.

\begin{proposition}
\label{cor:stored}
Let $\mathcal{D}(t) = \{(X_{j}, \hat{f}(\tau_{j}))\}_{j=1}^{N}$ be a set of $N$ stored pairs recorded at times $\tau_{j} \leq t$, with membership piecewise constant in $t$, and replace the integrals in \cref{eq:Xif_def}, \cref{eq:Psi_def}, and \cref{eq:Gamma_update} by $\Xi_{\mathcal{D}} \triangleq \tfrac{T}{N}\sum_{j=1}^{N}\Phi'(X_{j},\hat{\theta})^{\top}(\hat{f}(\tau_{j}) - \Phi(X_{j},\hat{\theta}))$ and $\Psi_{\mathcal{D}} \triangleq \tfrac{T}{N}\sum_{j=1}^{N}\Phi'(X_{j},\hat{\theta})^{\top}\Phi'(X_{j},\hat{\theta})$. Suppose every admitted sample satisfies $\left\Vert \tilde{f}(\tau_{j}) \right\Vert \leq \delta_{\mathcal{D}}$, which holds with $\delta_{\mathcal{D}} = \left\Vert z_{o}(t_{0}) \right\Vert e^{-k_{o}(t_{a} - t_{0})} + \bar{z}_{o}$ when admission starts at $t_{a} \geq t_{0}$. Let $t_{s} \geq t_{1}$ denote the time at which the stored-set law is activated, with the stack non-empty, $\hat{\theta}(t_{s}) \in \Theta_{\epsilon}$, and $\underline{\lambda} I_{p} \preceq \Gamma(t_{s}) \preceq \bar{\lambda} I_{p}$. Then \cref{lem:drift_decomp}, \cref{thm:drift}, and \cref{cor:functional} hold for $t \geq t_{s}$ with $t_{1}$ replaced by $t_{s}$ and with $\bar{f}_{T}(t)$ and $\bar{z}_{o}$ both replaced by $\delta_{\mathcal{D}}$, so that $c_{d}(t) = c_{d}^{\infty} = k_{\sigma}\bar{\theta}^{2} + \tfrac{\alpha^{2}}{k_{\sigma}}(c_{\eta 2}\delta_{\mathcal{D}} + c_{\eta 3})^{2}$ is constant.
\end{proposition}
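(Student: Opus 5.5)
The plan is to repeat the proofs of \cref{lem:drift_decomp}, \cref{thm:drift}, and \cref{cor:functional} with the window integral replaced by the weighted finite sum $\tfrac{T}{N}\sum_{j}$. The weights $T/N$ sum to $T$, so every place where the window proof used ``integrate over an interval of length $T$'' gives the same constant.

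\textbf{Step 1: decomposition.} For each stored pair, write $\hat{f}(\tau_j) = f(X_j) - \tilde{f}(\tau_j)$. Then use \cref{eq:ufap} and the Lagrange-form Taylor expansion \cref{eq:taylor} at the point $X_j$, which is valid because $\|X_j\|\le\bar X$ by \cref{assm:bounded_trajectory}, so $X_j\in\Omega$, and because $\hat\theta,\theta^*\in\Theta_\epsilon$. This gives
\begin{equation*}
\Xi_{\mathcal D}=\Psi_{\mathcal D}\tilde\theta+\eta_{\mathcal D},\qquad
\eta_{\mathcal D}=\tfrac{T}{N}\textstyle\sum_j \Phi'(X_j,\hat\theta)^\top\bigl[R_j+\varepsilon(X_j)-\tilde f(\tau_j)\bigr].
\end{equation*}
$\Psi_{\mathcal D}$ is a nonnegative combination of Gram matrices, so it is positive semi-definite with norm at most $T\bar\upsilon^2$. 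Bounding term by term with $\|\tilde f(\tau_j)\|\le\delta_{\mathcal D}$ gives \cref{eq:etaf_bound} with the same $c_{\eta1},c_{\eta2},c_{\eta3}$ and with $\bar f_T(t)$ replaced by $\delta_{\mathcal D}$. The claimed value of $\delta_{\mathcal D}$ comes directly from \cref{lem:observer}: each sample is admitted at some $\tau_j\ge t_a$, and the exponential term in \cref{eq:observer_bound} is nonincreasing in time.

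\textbf{Step 2: gain bounds and Lyapunov argument.} First show that \cref{lem:gamma} still holds with $\Psi$ replaced by $\Psi_{\mathcal D}$, starting at $t_s$. The Nagumo argument only needs $\Psi_{\mathcal D}\succeq0$ and piecewise continuity in $t$. The latter holds because membership is piecewise constant and each summand is continuous in $\hat\theta(t)$; any jumps occur at admission times and affect only $\dot\Gamma$, not $\Gamma$, so $\Gamma$ stays absolutely continuous and the invariance argument goes through on each inter-admission interval. \cref{lem:proj} applies unchanged. Then rerun the computation \cref{eq:Vdot_1}--\cref{eq:Vdot_final} verbatim with $V=\tfrac12\tilde\theta^\top\Gamma^{-1}\tilde\theta$. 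Because $V$ is continuous across switching times (both $\hat\theta$ and $\Gamma$ are continuous), the differential inequality holds almost everywhere and the comparison lemma applies on $[t_s,\infty)$. The result is $\dot V\le-(\alpha-\tfrac12)\tilde\theta^\top\Psi_{\mathcal D}\tilde\theta-k_1\|\tilde\theta\|^2+c_d$, where $c_d$ is now the constant $k_\sigma\bar\theta^2+\tfrac{\alpha^2}{k_\sigma}(c_{\eta2}\delta_{\mathcal D}+c_{\eta3})^2$. The proofs of \cref{thm:drift} and \cref{cor:functional} then transfer with $t_1\to t_s$. They simplify slightly, because the comparison integral is exactly $c_d(1-e^{-\lambda_d(t-t_s)})/\lambda_d\le c_d/\lambda_d$ and the Ces\`aro limit is trivial.

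\textbf{Main obstacle.} The only delicate point is regularity under data-set switching: admitting a sample makes $\Psi_{\mathcal D}$ and $\Xi_{\mathcal D}$ discontinuous in $t$. I would handle this by treating each inter-admission interval separately with a Carath\'eodory solution and concatenating. The key facts are that $V$, $\Gamma$, and $\hat\theta$ are continuous at switches and that every bound used is uniform across intervals. Since $\delta_{\mathcal D}$ bounds every admitted sample, not just the current stack, no constant can increase when the stack changes.
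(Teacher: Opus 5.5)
Your proposal is correct and follows the paper's proof essentially step for step. Like the paper, you expand sample by sample with weights $T/N$ summing to $T$, so the same constants $c_{\eta 1},c_{\eta 2},c_{\eta 3}$ reappear with $\delta_{\mathcal{D}}$ in place of $\bar{f}_{T}$; you reuse the gain-interval, projection, and Lyapunov arguments, which need only positive semi-definiteness and the form of the perturbation bound; and you rely on continuity of $V$ across membership switches so that the comparison lemma applies almost everywhere. Your extra details, namely checking $X_j\in\Omega$ and computing the comparison integral explicitly with constant $c_d$, are consistent with the paper's brief argument and make it slightly more explicit.
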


A stored sample retains the observer error of its recording time, so the decaying window envelope becomes a uniform bound over admitted samples. The law of \citet{Hart2025} is a stored-set drift law of this type, whereas fixing $\Gamma$ recovers its constant-gain simplification, the DNN analogue of the stack of \citet{Chowdhary2011}. The additions analyzed here are the certified gain interval, gain-metric projection, and excitation-free functional bound; the finite-excitation convergence result of \citet{Chowdhary2011} is complementary.

\textit{Proof of \cref{cor:stored}.} The expansion in the proof of \cref{lem:drift_decomp} is performed sample by sample, so replacing the integral over $[t-T,t]$ by the average over $\mathcal{D}(t)$ scaled by $T$ gives $\Xi_{\mathcal{D}} = \Psi_{\mathcal{D}}\tilde{\theta} + \eta_{\mathcal{D}}$ with $\Psi_{\mathcal{D}} \succeq 0$, $\left\Vert \Psi_{\mathcal{D}} \right\Vert \leq T\bar{\upsilon}^{2}$, and $\left\Vert \eta_{\mathcal{D}} \right\Vert \leq c_{\eta 1}\left\Vert \tilde{\theta} \right\Vert^{2} + c_{\eta 2}\delta_{\mathcal{D}} + c_{\eta 3}$, since each stored $\hat{f}(\tau_{j})$ carries the observer error at its recording time and every admitted sample is bounded by $\delta_{\mathcal{D}}$; the bound on $\delta_{\mathcal{D}}$ under an admission time follows from \cref{eq:observer_bound}. The proofs of \cref{lem:gamma}, \cref{thm:drift}, and \cref{cor:functional} use only positive semi-definiteness and boundedness of the regressor and the form of the perturbation bound, with the constant $c_{d}^{\infty}$ in place of the nonincreasing $c_{d}(t)$, so they apply verbatim. Membership changes make $\Psi_{\mathcal{D}}$ piecewise continuous in $t$; $V$ is continuous and the inequality on $\dot{V}$ holds almost everywhere, so the comparison lemma applies unchanged. \hfill $\blacksquare$

The computable Gauss--Newton matrix is related to the trajectory regressor by the following bound.

\begin{lemma}
\label{lem:gn}
Under the conditions of \cref{lem:traj_decomp} at segment length $T_{s}$, $\left\Vert G_{\mathcal{S}} - H_{\mathcal{S}} \right\Vert \leq g_{2}\left\Vert \tilde{\theta} \right\Vert^{2} + g_{1}\left\Vert \tilde{\theta} \right\Vert + g_{0}$ with $g_{2} \triangleq T_{s}D_{S}^{2}w_{e}^{2}$, $g_{1} \triangleq 2T_{s}D_{S}w_{e}(\bar{w} + D_{S}d_{0})$, and $g_{0} \triangleq T_{s}D_{S}d_{0}(2\bar{w} + D_{S}d_{0})$, where $\bar{w}, w_{e}, d_{0}, D_{S}$ are the constants of \cref{app:traj_decomp_proof} at $T_{s}$.
\end{lemma}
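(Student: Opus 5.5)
The plan is to work segment by segment and then average, since both $G_{\mathcal{S}}$ and $H_{\mathcal{S}}$ are averages over $j$ of integrals over windows of length $T_{s}$. If each segment integral differs by at most $g_{2}\Vert\tilde{\theta}\Vert^{2}+g_{1}\Vert\tilde{\theta}\Vert+g_{0}$, the same bound passes to the average. Fix a segment $j$ and drop the index. The proof of \cref{lem:traj_decomp} at window length $T_{s}$, whose hypotheses include \cref{eq:containment_cond} at $T_{s}$ and hence \cref{lem:containment}, already supplies the three ingredients needed:
\begin{itemize}
\item $\Vert W(\tau)\Vert\leq\bar{w}$ from \cref{eq:Wd_bounds};
\item $\Vert S(\tau)-W(\tau)\Vert\leq D_{S}\sup_{r}\Vert e(r)\Vert$ from \cref{eq:SW_bound};
\item $\sup_{r}\Vert e(r)\Vert\leq w_{e}\Vert\tilde{\theta}\Vert+d_{0}$ from \cref{eq:esup_bound}.
\end{itemize}
Combining the last two gives $\Vert S-W\Vert\leq \delta\triangleq D_{S}(w_{e}\Vert\tilde{\theta}\Vert+d_{0})$ pointwise in $\tau$.

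Next, expand the Gram difference algebraically as
\[
S^{\top}S-W^{\top}W=(S-W)^{\top}W+W^{\top}(S-W)+(S-W)^{\top}(S-W).
\]
Taking norms gives $\Vert S^{\top}S-W^{\top}W\Vert\leq 2\bar{w}\delta+\delta^{2}$. Integrating over $[t_{j},t_{j}+T_{s}]$ gives
\[
\Big\Vert\int (S^{\top}S-W^{\top}W)\,d\tau\Big\Vert\leq T_{s}\big(2\bar{w}\delta+\delta^{2}\big).
\]
Substituting $\delta$ and expanding $\delta^{2}=D_{S}^{2}(w_{e}^{2}\Vert\tilde{\theta}\Vert^{2}+2w_{e}d_{0}\Vert\tilde{\theta}\Vert+d_{0}^{2})$ collects the coefficients as follows:
\begin{itemize}
\item the $\Vert\tilde{\theta}\Vert^{2}$ coefficient is $T_{s}D_{S}^{2}w_{e}^{2}=g_{2}$;
\item the $\Vert\tilde{\theta}\Vert$ coefficient is $T_{s}(2\bar{w}D_{S}w_{e}+2D_{S}^{2}w_{e}d_{0})=2T_{s}D_{S}w_{e}(\bar{w}+D_{S}d_{0})=g_{1}$;
\item the constant is $T_{s}(2\bar{w}D_{S}d_{0}+D_{S}^{2}d_{0}^{2})=T_{s}D_{S}d_{0}(2\bar{w}+D_{S}d_{0})=g_{0}$.
\end{itemize}
These match the stated constants exactly, which confirms that this is the intended route. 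The final step is to average over $j$ with the triangle inequality.

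The main point to check is that the segment-level quantities are valid objects of \cref{lem:traj_decomp}:
\begin{itemize}
\item each segment rollout starts at the measured $X_{j}(t_{j})$, so $e_{j}(t_{j})=0$;
\item each stored state lies in the bounded set of \cref{assm:bounded_trajectory};
\item the recorded input $u_{j}$ enters $F$ only through the known $g$, so $L_{F}$ and $L_{FX}$ are unchanged;
\item all quantities are evaluated at the current $\hat{\theta}(t)\in\Theta_{\epsilon}$.
\end{itemize}
Under these conditions the containment argument and the bounds \cref{eq:Wd_bounds,eq:esup_bound,eq:SW_bound} apply verbatim with $T$ replaced by $T_{s}$ and $[t-T,t]$ replaced by $[t_{j},t_{j}+T_{s}]$. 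Beyond this, the argument involves no real difficulty.
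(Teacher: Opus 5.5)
Your proposal is correct and matches the paper's proof: the same three bounds give $\Vert S_{j}-W_{j}\Vert \leq \Delta = D_{S}(w_{e}\Vert\tilde{\theta}\Vert + d_{0})$, the Gram difference is bounded by $\Delta(2\bar{w}+\Delta)$, and integrating over the segment and averaging yields exactly $g_{2},g_{1},g_{0}$. The only cosmetic difference is that you split $S^{\top}S-W^{\top}W$ into three terms, whereas the paper writes it as $(S-W)^{\top}S+W^{\top}(S-W)$ and uses $\Vert S\Vert\leq\bar{w}+\Delta$; both give the same bound.
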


\textit{Proof of \cref{lem:gn}.} By \cref{eq:SW_bound} and \cref{eq:esup_bound} applied to each segment at length $T_{s}$, $\left\Vert S_{j} - W_{j} \right\Vert \leq D_{S}(w_{e}\left\Vert \tilde{\theta} \right\Vert + d_{0}) \eqqcolon \Delta$, while $\left\Vert W_{j} \right\Vert \leq \bar{w}$ and hence $\left\Vert S_{j} \right\Vert \leq \bar{w} + \Delta$. Hence $\left\Vert S_{j}^{\top} S_{j} - W_{j}^{\top} W_{j} \right\Vert = \left\Vert (S_{j} - W_{j})^{\top} S_{j} + W_{j}^{\top}(S_{j} - W_{j}) \right\Vert \leq \Delta(2\bar{w} + \Delta)$, and integrating over a segment of length $T_{s}$ and averaging over segments gives $\left\Vert G_{\mathcal{S}} - H_{\mathcal{S}} \right\Vert \leq T_{s}\Delta(2\bar{w} + \Delta)$; expanding $\Delta$ in powers of $\left\Vert \tilde{\theta} \right\Vert$ yields the stated constants. \hfill $\blacksquare$

\textit{Proof of \cref{thm:stack}.} \cref{lem:gamma} holds because $G_{\mathcal{S}} \succeq 0$ and its proof uses nothing else. With $V = \tfrac{1}{2}\tilde{\theta}^{\top}\Gamma^{-1}\tilde{\theta}$ and \cref{eq:proj_prop},
\begin{equation*}
\dot{V} \leq -\alpha\tilde{\theta}^{\top} H_{\mathcal{S}}\tilde{\theta} - \alpha\tilde{\theta}^{\top}\eta_{\mathcal{S}} + k_{\sigma}\tilde{\theta}^{\top}\theta^{*} - k_{\sigma}\left\Vert \tilde{\theta} \right\Vert^{2} + \tfrac{1}{2}\varsigma_{\Gamma}\tilde{\theta}^{\top} H_{\mathcal{S}}\tilde{\theta} + \tfrac{1}{2}\varsigma_{\Gamma}\tilde{\theta}^{\top}(G_{\mathcal{S}} - H_{\mathcal{S}})\tilde{\theta},
\end{equation*}
where the last two terms come from $\tfrac{1}{2}\tilde{\theta}^{\top}\tfrac{d}{dt}(\Gamma^{-1})\tilde{\theta}$ with the term $-\tfrac{\beta_{0}}{2}\tilde{\theta}^{\top}(\Gamma^{-1} - \bar{\lambda}^{-1}I_{p})\tilde{\theta} \leq 0$ dropped. Since $0 \leq \varsigma_{\Gamma} \leq 1$, the $H_{\mathcal{S}}$ terms combine into $-(\alpha - \tfrac{1}{2})\tilde{\theta}^{\top} H_{\mathcal{S}}\tilde{\theta}$, and \cref{lem:gn} with $\left\Vert \tilde{\theta} \right\Vert \leq \bar{\tilde{\theta}}$ bounds the mismatch term by $\tfrac{1}{2}(g_{2}\bar{\tilde{\theta}}^{2} + g_{1}\bar{\tilde{\theta}} + g_{0})\left\Vert \tilde{\theta} \right\Vert^{2}$. Bounding the remaining terms as in the proof of \cref{thm:adjoint_uub} and using $H_{\mathcal{S}}\succeq\varphi_s I_p$ gives
\begin{equation*}
\dot V\leq-k_s(\varphi_s)\left\Vert \tilde{\theta} \right\Vert^{2}+c_a
\leq-2k_s(\varphi_s)\underline{\lambda}V+c_a
=-\lambda_s(\varphi_s)V+c_a.
\end{equation*}
The comparison lemma and $\left\Vert \tilde{\theta} \right\Vert^{2}/(2\bar{\lambda})\leq V\leq\left\Vert \tilde{\theta} \right\Vert^{2}/(2\underline{\lambda})$ yield \cref{eq:stack_bound}. \hfill $\blacksquare$

\section{Simulation Details, Additional Ablations, and Verification of the Certified Inequalities}
\label{app:sim_details}

The diagnostic results are reported in \cref{tab:s12} and \cref{fig:curves,fig:noise,fig:s1_subspace,fig:s1_internals,fig:s2_rollout,fig:ablation,fig:ablation_app}; the full benchmark results in \cref{tab:bench_full} and \cref{fig:bench_curves,fig:bench_rollout}.

\textbf{Plants and excitation.} In S1, the plant is $\ddot{x} = \Phi(X, \theta_{\mathrm{tr}}) + u$ with $n = m = 1$, where $\Phi$ is a single-hidden-layer $\tanh$ network with input $X \in \mathbb{R}^{2}$, six hidden neurons, and scalar output ($p = 25$), and $\theta_{\mathrm{tr}}$ is drawn per seed with hidden weights and biases from $\mathcal{N}(0, 1)$ and $\mathcal{N}(0, 0.25)$ and output weights and bias from $\mathcal{N}(0, 0.64)$ and $\mathcal{N}(0, 0.04)$. The input is $u = \sum_{k=1}^{7} a_{k}\sin(2\pi \omega_{k} t + \phi_{k}) - x - 1.5\dot{x}$ with $\omega_{k} \in \{0.20, 0.31, 0.47, 0.71, 1.03, 1.51, 2.23\}\,$Hz, amplitudes $a_{k} = 1.1\,\xi_{k}$ with $\xi_{k} \sim \mathcal{U}[0.5, 1]$, and phases $\phi_{k} \sim \mathcal{U}[0, 2\pi]$. The estimate is initialized at $\hat{\theta}(t_{0}) = \theta_{\mathrm{tr}} + \delta$ with $\delta$ Gaussian and rescaled to $\left\Vert \delta \right\Vert = 0.4\left\Vert \theta_{\mathrm{tr}} \right\Vert$, and $\bar{\theta} = 2\left\Vert \theta_{\mathrm{tr}} \right\Vert$, $\epsilon_{\theta} = 0.5$. In S2, the manipulator model is $M(q)\ddot{q} + C(q, \dot{q})\dot{q} + F_{d}\dot{q} = \tau$ with $M = [p_{1} + 2p_{3}c_{2},\; p_{2} + p_{3}c_{2};\; p_{2} + p_{3}c_{2},\; p_{2}]$, $C = [-p_{3}s_{2}\dot{q}_{2},\; -p_{3}s_{2}(\dot{q}_{1} + \dot{q}_{2});\; p_{3}s_{2}\dot{q}_{1},\; 0]$, $c_{2} = \cos q_{2}$, $s_{2} = \sin q_{2}$, $p_{1} = 3.473$, $p_{2} = 0.196$, $p_{3} = 0.242$, and $F_{d} = \mathrm{diag}(5.3, 1.1)$. The input is a five-component multisine per joint with frequencies $\{0.13, 0.29, 0.53, 0.97, 1.61\}\,$Hz and $\{0.17, 0.37, 0.67, 1.19, 1.87\}\,$Hz, amplitudes $4\xi_{k}$ and $2\xi_{k}$, and the baseline $-3q - 3\dot{q}$. The network has input $X \in \mathbb{R}^{4}$, twelve hidden neurons, and two outputs ($p = 86$), initialized with entries from $\mathcal{N}(0, 0.0025)$; $\bar{\theta} = 40$ and $\epsilon_{\theta} = 2$. The held-out excitation uses a different draw of amplitudes and phases.

\textbf{Gains and discretization (S1 and S2).} The observer gains are $\alpha_{o} = k_{f} = 200$, the learning gains are $\alpha = 2$ (drift, single-step, CL) and $\alpha = 6$ (adjoint), and the gain matrix parameters of the windowed drift law are $\Gamma_{0} = 5 I_{p}$, $\bar{\lambda} = 10$, $\underline{\lambda} = 0.5$, $\beta_{0} = 0.05$, and $k_{\sigma} = 10^{-4}$ (the CL-LS and stored-segment gains are given under \emph{Stored-set methods} below). The plant is integrated by fourth-order Runge-Kutta at $\Delta t = 1\,$ms. The observer is integrated by fourth-order Runge-Kutta over each sampling interval from the sampled velocity, with the velocity extrapolated within the interval as $\dot{x}(t_{k}) + (\hat{f}(t_{k}) + g(X(t_{k}))u(t_{k}))(\tau - t_{k})$ (predictive hold); the zero-order hold variant in \cref{fig:ablation}(b) uses $\dot{x}(t_{k})$ throughout the interval. The drift-law, single-step, and CL updates are applied by forward Euler at $1\,$ms in S1 and $5\,$ms in S2, and the adjoint and Adam updates at $10\,$ms in both studies, with the gradient held constant between updates. The forward and backward solves of the adjoint use second-order integration on the $10\,$ms buffer grid.

\textbf{Baselines (S1 and S2).} The single-step baseline uses $\Xi = T\,\Phi'(X(t), \hat{\theta})^{\top} E(t;t)$ so that its gradient magnitude matches that of the windowed laws. CL stores up to $100$ pairs $(X_{j}, \hat{f}(\tau_{j}))$; a new pair is admitted when the relative change of the Jacobian $\Phi'(X, \hat{\theta})$ from the last stored point exceeds $0.1$, and once the stack is full it replaces the stored point whose replacement most increases the minimum singular value of the stacked regressor, evaluated every $50\,$ms in S1 and $200\,$ms in S2 with the current $\hat{\theta}$; its gradient is $\Xi = (T/N)\sum_{j=1}^{N}\Phi'(X_{j},\hat{\theta})^{\top}(\hat{f}(\tau_{j}) - \Phi(X_{j},\hat{\theta}))$ with a constant gain $\Gamma_{0}$, and the stack-selection check, which requires a singular value decomposition per stored point, is not included in its reported update cost. Online Adam uses $\beta_{1} = 0.9$, $\beta_{2} = 0.999$, and learning rates $10^{-2}$ (S1) and $3 \times 10^{-3}$ (S2), selected from $\{3 \times 10^{-4}, 10^{-3}, 3 \times 10^{-3}, 10^{-2}\}$ on seed $0$; the single-step gain $\alpha = 2$ was selected on the pendulum tuning seed.

\textbf{Evaluation (S1 and S2).} The evaluation states are $400$ samples, equally spaced in time, from the trajectory after $t = 20\,$s ($120\,$s runs) or $15\,$s ($60\,$s runs). The excited subspace in \cref{fig:s1_subspace}(b) is a diagnostic computed from the teacher Jacobian: the span of eigenvectors of the average of $\Phi'(X, \theta_{\mathrm{tr}})^{\top}\Phi'(X, \theta_{\mathrm{tr}})$ over the evaluation states with eigenvalues above $10^{-3}$ of the largest; its invariance under the adaptation is not established, and the distinction drawn is between excited and weakly excited directions rather than a strict rank deficiency.

\begin{figure}[tbp]
\centering
\includegraphics[width=\textwidth]{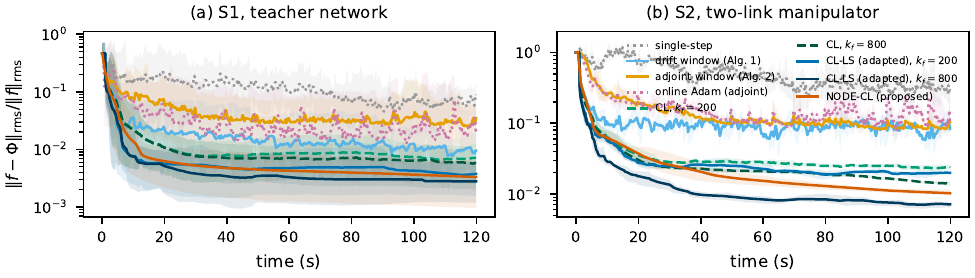}
\caption{Normalized functional error over time (three seeds, geometric mean with seed range) for (a) S1 and (b) S2.}
\label{fig:curves}
\end{figure}

\begin{table}[tbp]
\caption{Diagnostic studies S1 and S2 (three seeds, $120\,$s; these runs use the scalar bounded-gain forgetting of \citet{SlotineLi1989} in the least-squares laws): terminal normalized functional error (geometric mean, seed range); final parameter error ratio (S1); held-out $4\,$s prediction error (S2; initially $14.3$); update cost in ms on one CPU core, excluding stack selection. All rows except online Adam are sampled implementations of continuous-time laws covered by the analysis under its stated assumptions and gain conditions.}
\label{tab:s12}
\centering
\scriptsize
\setlength{\tabcolsep}{2.75pt}
\begin{tabular}{lcccccc}
\toprule
 & \multicolumn{2}{c}{S1 (teacher, $p=25$)} & \multicolumn{2}{c}{S2 (manipulator, $p=86$)} & \multicolumn{2}{c}{cost (ms)} \\
\cmidrule(lr){2-3}\cmidrule(lr){4-5}\cmidrule(lr){6-7}
Method & func. error [range] & $\left\Vert \tilde{\theta}(t_f) \right\Vert/\left\Vert \tilde{\theta}(0) \right\Vert$ & func. error [range] & pred. ($4\,$s) & S1 & S2 \\
\midrule
single-step & $6.8$e$-2$ [$3.1$e$-2$,$1.1$e$-1$] & $0.89$ & $3.2$e$-1$ [$2.2$e$-1$,$5.1$e$-1$] & $0.293$ & $0.03$ & $0.04$ \\
drift window (Alg.\,1) & $9.3$e$-3$ [$3.7$e$-3$,$1.7$e$-2$] & $0.80$ & $7.8$e$-2$ [$6.1$e$-2$,$1.1$e$-1$] & $0.119$ & $0.15$ & $0.81$ \\
adjoint window (Alg.\,2) & $2.9$e$-2$ [$6.8$e$-3$,$7.3$e$-2$] & $0.85$ & $8.9$e$-2$ [$8.0$e$-2$,$9.9$e$-2$] & $0.122$ & $1.4$ & $2.5$ \\
online Adam (adjoint) & $2.4$e$-2$ [$9.0$e$-3$,$7.7$e$-2$] & $1.31$ & $1.2$e$-1$ [$9.4$e$-2$,$1.8$e$-1$] & $0.157$ & $1.4$ & $2.5$ \\
CL, $k_{f}=200$ & $6.9$e$-3$ [$2.6$e$-3$,$1.2$e$-2$] & $0.83$ & $2.3$e$-2$ [$2.1$e$-2$,$2.6$e$-2$] & $0.0375$ & $0.06$ & $0.14$ \\
CL, $k_{f}=800$ & $5.9$e$-3$ [$2.1$e$-3$,$9.9$e$-3$] & $0.83$ & $1.4$e$-2$ [$1.4$e$-2$,$1.5$e$-2$] & $0.0399$ & $0.06$ & $0.14$ \\
CL-LS (adapted), $k_{f}=200$ & $3.7$e$-3$ [$1.6$e$-3$,$6.0$e$-3$] & $0.93$ & $2.0$e$-2$ [$1.8$e$-2$,$2.1$e$-2$] & $0.0232$ & $0.14$ & $0.78$ \\
CL-LS (adapted), $k_{f}=800$ & $\mathbf{2.8}$\textbf{e}$\mathbf{-3}$ [$1.2$e$-3$,$6.2$e$-3$] & $0.92$ & $\mathbf{7.1}$\textbf{e}$\mathbf{-3}$ [$6.3$e$-3$,$7.7$e$-3$] & $0.0249$ & $0.14$ & $0.78$ \\
NODE-CL (proposed) & $3.4$e$-3$ [$9.8$e$-4$,$6.6$e$-3$] & $0.80$ & $1.0$e$-2$ [$9.9$e$-3$,$1.2$e$-2$] & $0.0374$ & $2.3$ & $15$ \\
NODE-CL, equal memory & $4.8$e$-3$ [$1.7$e$-3$,$8.5$e$-3$] & $\mathbf{0.80}$ & $1.7$e$-2$ [$1.4$e$-2$,$2.1$e$-2$] & $0.0279$ & $0.98$ & $5.1$ \\
\bottomrule
\end{tabular}
\end{table}

\begin{figure}[tbp]
\centering
\includegraphics[width=\textwidth]{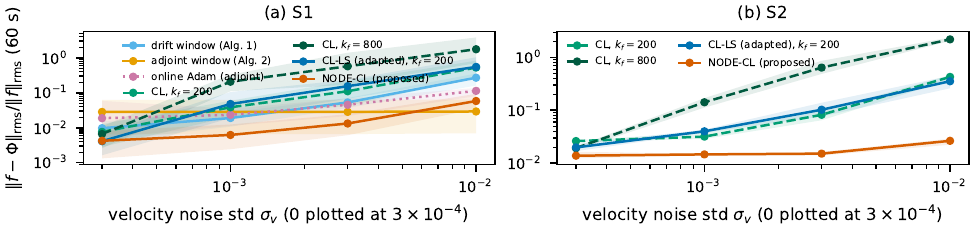}
\caption{Terminal normalized functional error after $60\,$s against the standard deviation of white velocity measurement noise (three seeds, geometric mean with seed range) for (a) S1 and (b) S2.}
\label{fig:noise}
\end{figure}

\begin{table}[tbp]
\caption{Benchmark suite, full results (five evaluation seeds, medians, $60\,$s): terminal normalized functional error on visited states, held-out $1\,$s error with seed range, held-out $4\,$s error, for clean (c) and noisy (n, $\sigma_{v} = 3\times10^{-3}$) measurements; mean cost per update and amortized stack-selection cost per simulated second (ms, one CPU core, NumPy). The $10^{-3}$ and $10^{-2}$ levels are shown in \cref{fig:bench_noise}; the $10^{-2}$ level was not run for the acrobot and reacher.}
\label{tab:bench_full}
\centering
\scriptsize
\setlength{\tabcolsep}{3.5pt}
\begin{tabular}{llcccccccc}
\toprule
Domain & Method & func (c) & $1\,$s (c) & $4\,$s (c) & func (n) & $1\,$s (n) & $4\,$s (n) & upd ms & sel ms/s \\
\midrule
pendulum & single-step & $1.0e+00$ & $1.643$ [1.00, 2.05] & $1.27$ & $8.7e-01$ & $1.650$ & $1.24$ & $0.03$ & $0$ \\
pendulum & CL & $2.5e-02$ & $0.315$ [0.17, 0.73] & $0.29$ & $1.3e-01$ & $0.609$ & $0.70$ & $0.09$ & $44$ \\
pendulum & CL-LS & $2.3e-02$ & $0.195$ [0.03, 0.25] & $0.11$ & $2.0e-01$ & $1.642$ & $0.90$ & $0.49$ & $40$ \\
pendulum & NODE-replay & $5.0e-03$ & $0.183$ [0.06, 0.26] & $0.12$ & $1.5e-02$ & $0.141$ & $0.11$ & $4.77$ & $188$ \\
pendulum & NODE-CL & $2.7e-03$ & $0.035$ [0.01, 0.04] & $0.02$ & $1.2e-02$ & $0.038$ & $0.03$ & $5.32$ & $186$ \\
cartpole & single-step & $4.6e-01$ & $3.287$ [1.65, 4.44] & $7.50$ & $4.5e-01$ & $3.145$ & $8.28$ & $0.04$ & $0$ \\
cartpole & CL & $7.0e-02$ & $0.623$ [0.42, 0.95] & $2.71$ & $1.2e-01$ & $1.363$ & $7.96$ & $0.16$ & $196$ \\
cartpole & CL-LS & $2.2e-02$ & $0.251$ [0.21, 0.95] & $1.11$ & $1.1e-01$ & $1.984$ & $28.61$ & $1.91$ & $186$ \\
cartpole & NODE-replay & $1.6e-02$ & $0.311$ [0.25, 0.49] & $1.81$ & $2.8e-02$ & $0.741$ & $2.57$ & $12.29$ & $495$ \\
cartpole & NODE-CL & $2.1e-02$ & $0.381$ [0.29, 0.57] & $1.95$ & $1.7e-02$ & $0.302$ & $2.09$ & $13.54$ & $489$ \\
acrobot & single-step & $5.5e+00$ & $1.621$ [0.56, 9.48] & $4.79$ & $3.9e+00$ & $1.232$ & $2.72$ & $0.04$ & $0$ \\
acrobot & CL & $6.4e-01$ & $3.615$ [0.92, 5.63] & $16.24$ & $5.4e-01$ & $4.385$ & $19.50$ & $0.16$ & $224$ \\
acrobot & CL-LS & $5.9e-02$ & $1.409$ [1.21, 4.33] & $7.04$ & $2.4e-01$ & $2.015$ & $7.64$ & $2.34$ & $225$ \\
acrobot & NODE-replay & $3.3e-01$ & $1.002$ [0.75, 2.55] & $3.79$ & $3.3e-01$ & $2.129$ & $6.15$ & $14.45$ & $598$ \\
acrobot & NODE-CL & $1.5e-01$ & $1.545$ [0.31, 1.79] & $5.09$ & $2.0e-01$ & $1.279$ & $3.30$ & $15.95$ & $596$ \\
reacher & single-step & $1.3e+00$ & $1.509$ [0.89, 4.02] & $1.78$ & $1.3e+00$ & $1.516$ & $1.78$ & $0.04$ & $0$ \\
reacher & CL & $6.5e-01$ & $1.007$ [0.83, 1.63] & $0.90$ & $6.4e-01$ & $0.976$ & $0.98$ & $0.16$ & $236$ \\
reacher & CL-LS & $1.6e-01$ & $0.064$ [0.05, 0.12] & $0.24$ & $3.6e-01$ & $0.215$ & $0.60$ & $2.36$ & $230$ \\
reacher & NODE-replay & $2.4e-01$ & $0.077$ [0.03, 0.12] & $0.11$ & $2.3e-01$ & $0.088$ & $0.18$ & $14.43$ & $604$ \\
reacher & NODE-CL & $2.4e-01$ & $0.069$ [0.04, 0.08] & $0.12$ & $2.4e-01$ & $0.075$ & $0.12$ & $15.94$ & $589$ \\
\bottomrule
\end{tabular}
\end{table}

\begin{figure}[tbp]
\centering
\includegraphics[width=\textwidth]{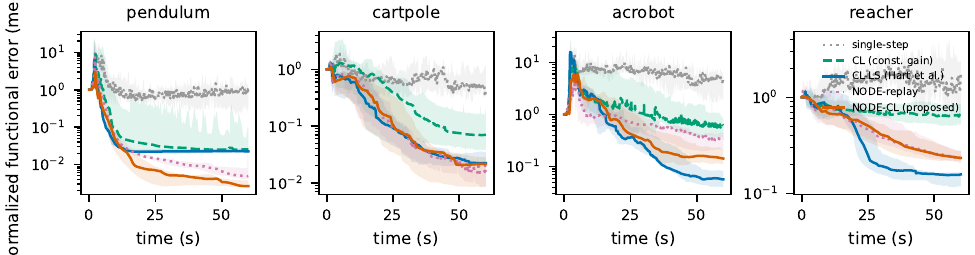}
\caption{Benchmark suite learning curves: normalized functional error on the visited states against streaming time (five evaluation seeds, median with seed range, clean measurements).}
\label{fig:bench_curves}
\end{figure}

\begin{figure}[tbp]
\centering
\includegraphics[width=\textwidth]{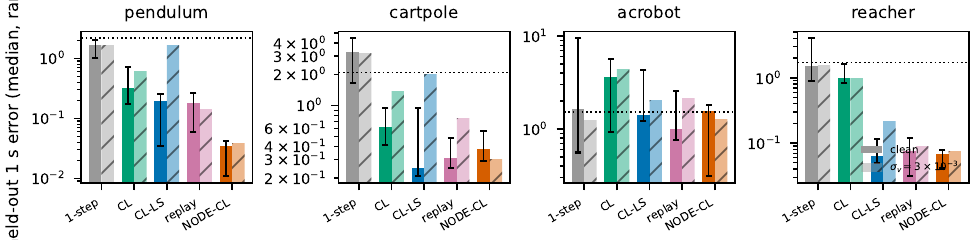}
\caption{Held-out $1\,$s prediction error per domain (five evaluation seeds, median with seed range), clean (solid) and under velocity noise $3\times10^{-3}$ (hatched); the dotted line is the initial model. Learning curves are given in \cref{fig:bench_curves}.}
\label{fig:bench_rollout}
\end{figure}

\textbf{Benchmark suite details.} Domains and tasks are pendulum (swingup), cartpole (swingup), acrobot (swingup), and reacher (easy) from \citet{Tassa2018}, integrated by MuJoCo at $1\,$ms with the suite's models unchanged; all are contact free. State dimensions $2n$ are $2$, $4$, $4$, and $4$, with $1$, $1$, $1$, and $2$ actuators; the network input has dimension $4$, $6$, $7$, and $7$ after the $(\cos,\sin)$ features (only unlimited hinges receive the feature), the hidden layer has $12$, $16$, $16$, and $16$ $\tanh$ neurons, and $p = 73$, $146$, $162$, and $162$. The excitation is a five-component multisine per actuator with frequencies uniform in $[0.1, 1.5]\,$Hz, random phases, and amplitudes of $0.45$, $0.5$, $0.25$, and $0.6$ of the actuator range, plus a velocity (and, for the cartpole slider, position) feedback baseline that keeps the trajectory bounded; the held-out record uses a separate excitation seed and $12\,$s, from which eight initial conditions spaced $1.2\,$s apart are used. The network is initialized from $\mathcal{N}(0, 0.0025)$ with $\bar{\theta} = 60$, $\epsilon_{\theta} = 3$.

\textbf{Benchmark suite gains.} Observer gains are $\alpha_{o} = k_{f} = 200$ with the predictive hold. Single-step uses $\alpha = 0.1$ at a $1\,$ms update (selected on the pendulum from $\{0.5, 0.1, 0.02\}$; larger gains oscillate); CL uses $\alpha = 2$, $\Gamma = 5I_{p}$, a stack of $100$ pairs admitted at $0.2\,$s checks with relative-Jacobian tolerance $0.1$, at a $5\,$ms update; CL-LS adds \cref{eq:Gamma_update} in information form with $\beta_{0} = 0.3$, $\bar{\lambda} = 10^{3}$, $\underline{\lambda} = 10^{-3}$, $k_{\sigma} = 10^{-6}$; NODE-CL uses segments of $0.05\,$s ($200$ on the pendulum, $100$ elsewhere), admission checks every $0.1\,$s, $\alpha = 2$, $\beta_{0} = 0.3$, $\bar{\lambda} = 10^{6}$, $\underline{\lambda} = 10^{-5}$, $k_{\sigma} = 10^{-9}$, $\Gamma_{0} = 10^{4}I_{p}$, at a $20\,$ms update; NODE-replay uses the same stack and update interval with learning rate $3 \times 10^{-2}$ (selected on S2 from $\{10^{-3}, 3\times10^{-3}, 10^{-2}, 3\times10^{-2}\}$). The forgetting rate $\beta_{0} = 0.3$ for both least-squares laws was selected on cartpole seed $0$ from $\{0.1, 0.3, 1\}$; with the scalar forgetting of \citet{SlotineLi1989} the NODE-CL error on that seed was $0.109$ against $0.037$ with the direction-wise forgetting of \cref{eq:Gamma_update}.

\textbf{Benchmark suite memory and the swimmer.} CL and CL-LS store $100$ input-output pairs ($100(2n + m + n)$ scalars) plus a $p \times p$ gain for CL-LS; NODE-CL stores $N_{s}$ segments of six samples ($6N_{s}(2n + m)$ scalars), a cache of $N_{s}$ terminal sensitivities ($2nN_{s}p$ scalars), and the $p \times p$ gain. The swimmer (six links, $2n = 16$, $m = 5$, $p = 356$ with $12$ hidden neurons, translation coordinates excluded from the input) was run with $30$ segments, $0.1\,$s updates, and $0.5\,$s admission checks; after $60\,$s every method, including NODE-CL, has normalized functional error above $1$ and held-out error above the initial model, so the domain is not discriminating at this data budget and is reported only here.

\textbf{Relation to the concurrent-learning law of \citet{Hart2025}.} CL-LS is not claimed as a new algorithm; it is an adapted baseline. The prior law updates all DNN layers by $\dot{\hat{\theta}} = \mathrm{proj}(\Gamma(\Phi'(X,\hat{\theta})^{\top} r - \gamma_{1}\sum_{i=1}^{N}\Phi'(X_{i},\hat{\theta})^{\top}(u_{i} - \hat{u}_{i}) - \gamma_{2}\hat{\theta}))$ with $\tfrac{d}{dt}\Gamma^{-1} = -\beta\Gamma^{-1} + \gamma_{1}\sum_{i=1}^{N}\Phi'(X_{i},\hat{\theta})^{\top}\Phi'(X_{i},\hat{\theta})$ frozen outside an eigenvalue band, where $u_{i} - \hat{u}_{i}$ is the stored residual between the applied input and its reconstruction from a state-derivative observer, and the stack is admitted after the observer settling time; its Lyapunov derivative bounds the gain-dynamics term by half the gradient term, as in the proof of \cref{thm:drift}. CL-LS in this paper restricts that mechanism to identification (no tracking term), with the windowed or stored residual $\hat{f}(\tau_{j}) - \Phi(X_{j},\hat{\theta})$ normalized by $T/N$, the direction-wise forgetting and Lipschitz gate of \cref{eq:Gamma_update} in place of the frozen band, which yields the certified bounds of \cref{lem:gamma}, and the projection in the $\Gamma$ metric with the inequality of \cref{lem:proj}; the constant-gain CL row is the same DNN stack with $\Gamma$ fixed, i.e., the point-stack rule of \citet{Chowdhary2011} applied to the DNN Jacobian, which is a simplification of the law of \citet{Hart2025} rather than a separate prior method. Neither law is new in its residual; the contributions of this paper on the drift side are the certified gain and projection properties and the excitation-free guarantee, and on the trajectory side the decomposition of \cref{lem:traj_decomp} and the law of \cref{thm:stack}.

\textbf{Stored-set methods (S1 and S2).} The following describes the diagnostic studies; the benchmark-suite settings are given above. CL-LS uses the CL stack (the same admission rule, tolerance, and check interval) with the scalar bounded-gain forgetting variant of the least-squares update, $\tfrac{d}{dt}\Gamma^{-1} = \varsigma_{\Gamma}\Psi_{\mathcal{D}} - \beta\Gamma^{-1}$ with $\beta = \beta_{0}(1 - \lambda_{\max}(\Gamma)/\bar{\lambda})$ \citep{SlotineLi1989}, implemented by forward Euler at the update interval, with $\beta_{0} = 1$, $\bar{\lambda} = 10^{3}$, $\underline{\lambda} = 10^{-3}$, $k_{\sigma} = 10^{-6}$, and $\Gamma_{0} = 5I_{p}$. The stored-segment law uses $N_{s} = 200$ segments of $T_{s} = 0.05\,$s (six samples on the $10\,$ms buffer grid), or $100$ segments of $0.03\,$s (four samples) in the equal-memory variant, rolled out in a batch with second-order integration of $\hat{X}_{p,j}$ and $S_{j}$ and trapezoidal quadrature of \cref{eq:stack_defs}; candidates are the most recent $T_{s}$ of the buffer, tested every $0.1\,$s ($0.2\,$s for the equal-memory variant) by the singular-value maximizing rule on the stacked terminal sensitivities, evaluated on the Gram matrix with rank-$2n$ updates; the gain uses $\alpha = 2$, $\beta_{0} = 1$, $\bar{\lambda} = 10^{6}$, $\underline{\lambda} = 10^{-5}$, $k_{\sigma} = 10^{-9}$, $\Gamma_{0} = 10^{4}I_{p}$, and the update interval is $20\,$ms. The large $\bar{\lambda}$ and small $k_{\sigma}$ reflect the $T_{s}^{3}$ scale of $G_{\mathcal{S}}$, on which $\Gamma \approx \beta G_{\mathcal{S}}^{-1}$ is correspondingly large; the leakage rate $k_{\sigma}\lambda_{\max}(\Gamma)$ is at most $10^{-3}\,\mathrm{s}^{-1}$ in all runs. The forward-sensitivity gradient agrees with the adjoint gradient of \cref{thm:adjoint} to a relative error of $1.7 \times 10^{-3}$ on a $0.25\,$s segment, which is the discretization level. The observer gain $k_{f} = 800$ runs use $\alpha_{o} = k_{f} = 800$ with the predictive hold. Velocity measurement noise is white Gaussian noise added to each $1\,$ms velocity sample used by the algorithms; the plant itself is noise free. The excitation-richness ablation keeps the first $K$ multisine components at equal total power, and the width ablation uses $6$, $12$, and $24$ hidden neurons in S2.

\begin{figure}[tbp]
\centering
\includegraphics[width=\textwidth]{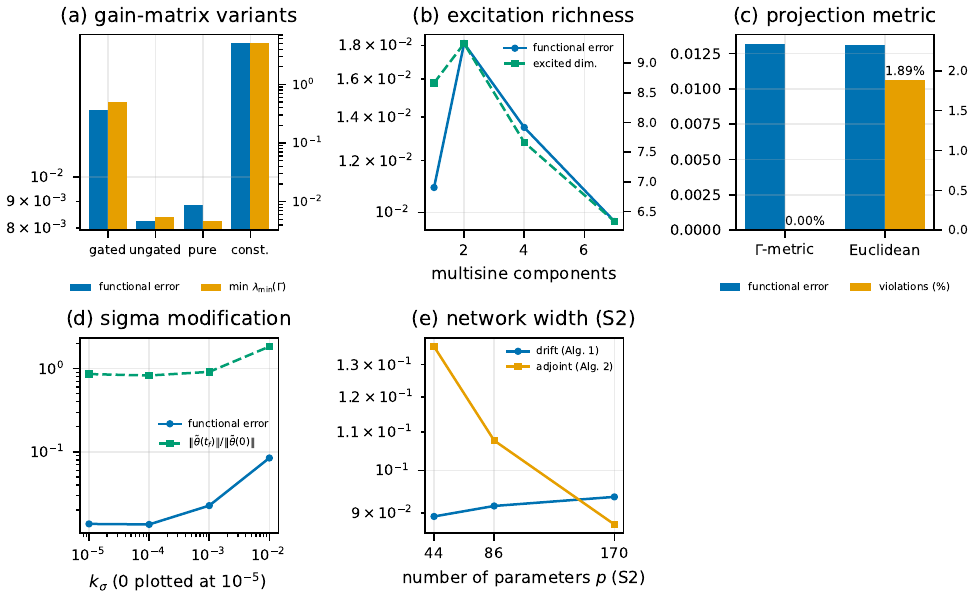}
\caption{Additional ablations (three seeds, $60\,$s; geometric means; violation rates and excited dimensions are arithmetic means, and zero violations are drawn as zero). (a) Gain-matrix variants at $\alpha = 2$: terminal functional error and the smallest eigenvalue of $\Gamma$ attained during the run. (b) Excitation richness: terminal functional error and the dimension of the excited subspace against the number of multisine components at equal total power. (c) Projection metric on a tight parameter set ($\bar{\theta} = \left\Vert \theta_{\mathrm{tr}} \right\Vert$, $\epsilon_{\theta} = 0.3$): terminal functional error and the fraction of updates that violate \cref{eq:proj_prop}. (d) Sigma-modification gain: functional and parameter error. (e) Network width in S2 for the drift and adjoint laws.}
\label{fig:ablation_app}
\end{figure}

\begin{figure}[tbp]
\centering
\includegraphics[width=0.5\textwidth]{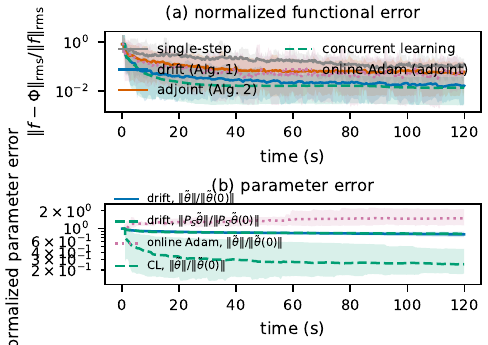}
\caption{S1, five seeds, windowed laws and baselines. (a) Normalized functional error. (b) Parameter error: drift law, total and projected onto the dominant excited subspace $S$; totals for CL and online Adam. The projected component contracts while the complementary component changes little; the total error of the unconstrained Adam flow grows.}
\label{fig:s1_subspace}
\end{figure}

\begin{figure}[tbp]
\centering
\begin{minipage}[t]{0.5\textwidth}
\centering
\includegraphics[width=\textwidth]{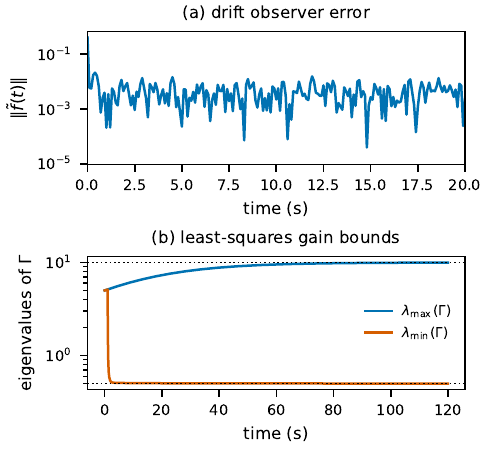}
\caption{S1, representative seed, sampled predictive-hold observer. (a) Drift observer error $\left\Vert \tilde{f}(t) \right\Vert$. (b) Extreme eigenvalues of $\Gamma(t)$, which remain within the certified interval $[\underline{\lambda}, \bar{\lambda}]$ of \cref{lem:gamma} in every seed of both studies.}
\label{fig:s1_internals}
\end{minipage}\hfill
\begin{minipage}[t]{0.46\textwidth}
\centering
\includegraphics[width=\textwidth]{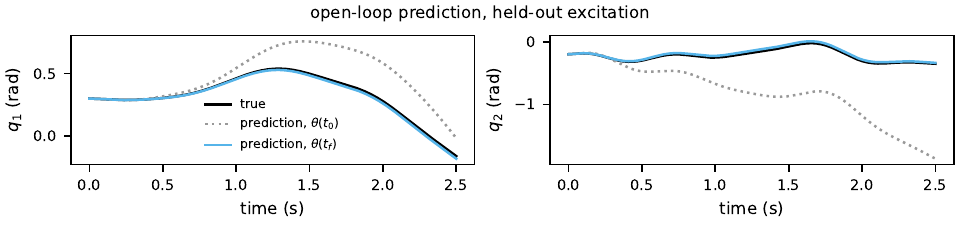}
\caption{S2, $2.5\,$s open-loop prediction on the held-out excitation under the initial and final drift-law parameters of seed $0$.}
\label{fig:s2_rollout}
\end{minipage}
\end{figure}

\begin{figure}[tbp]
\centering
\includegraphics[width=\textwidth]{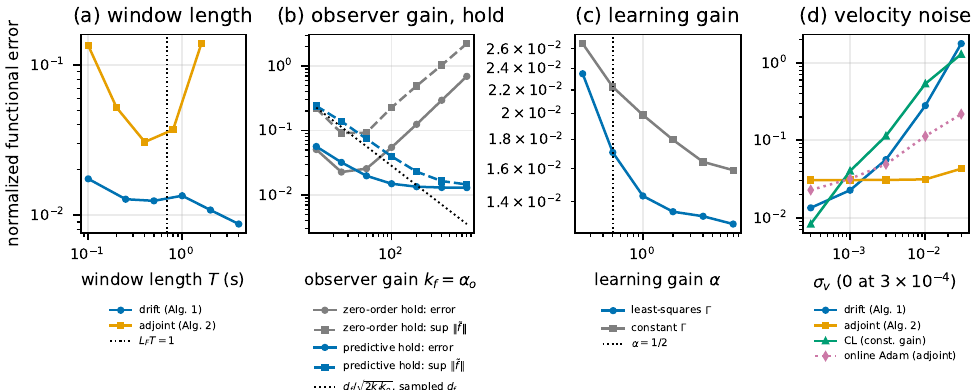}
\caption{Ablations of the windowed laws (S1; three seeds, $60\,$s; geometric means). (a) Window length, with $L_{F} T = 1$ at the measured $L_{F}$. (b) Observer gain under a zero-order and a predictive hold: terminal error (solid), post-transient supremum of $\left\Vert \tilde{f} \right\Vert$ (dashed), and the residual formula $d_{f}/\sqrt{2k_{f}k_{o}}$ evaluated with the sampled $d_{f}$ (dotted). (c) Learning gain with least-squares and constant gain matrices. (d) Velocity noise for the windowed laws, CL, and online Adam.}
\label{fig:ablation}
\end{figure}

\textbf{Ablations of the windowed laws.} \cref{fig:ablation} reports the ablations that bear on specific claims of the analysis. (a) The drift law improves slowly with longer windows, consistent with the linear growth of its perturbation constants in $T$, whereas the adjoint law improves up to $T = 0.4\,$s and then increases from $3.7 \times 10^{-2}$ to $1.4 \times 10^{-1}$ between $0.8\,$s and $1.6\,$s; the measured $L_{F} = 1.45$ places $L_{F} T = 1$ at $T = 0.69\,$s, where the exponential growth of the constants in \cref{eq:etaa_bound} predicts the degradation to begin, and the same scaling is why the stored-segment law uses segments of $0.05\,$s with a Gauss--Newton gain rather than long windows with a constant one. (b) Under a zero-order hold the observer gain traces a U-shaped tradeoff, since the residual formula $\bar{z}_{o}$ decreases as $1/k_{f}$ while the hold error, amplified by $k_{f}$, grows as $k_{f}\Delta t$; the predictive hold removes the first-order hold error, and the observed floor follows the $1/k_{f}$ scaling up to $k_{f} \approx 100$ before saturating at the second-order residual that the continuous-time bound does not model. (c) The functional error decreases monotonically in $\alpha$ with no instability below $\alpha = \tfrac{1}{2}$, so the condition is sufficient rather than necessary, and the least-squares gain improves on a constant gain at every $\alpha$. Removing the gate lowers the error by a further $35$ percent over $60\,$s but lets $\lambda_{\min}(\Gamma)$ collapse to $5 \times 10^{-3}$, so the certified rate is lost; a Euclidean projection composed with the matrix gain violates \cref{eq:proj_prop} in $1.9$ percent of updates on a tight parameter set with no change in functional error here; and leakage above $k_{\sigma} = 10^{-3}$ degrades both errors, the tradeoff that $k_{\sigma}\bar{\theta}^{2}$ in \cref{eq:cd_def} expresses (\cref{app:sim_details}).

\textbf{Additional ablations.} \cref{fig:ablation_app}(a) compares the gain-matrix variants. The ungated and pure least-squares updates attain terminal errors of $8.2 \times 10^{-3}$ and $8.9 \times 10^{-3}$ against $1.3 \times 10^{-2}$ for the gated update and $1.8 \times 10^{-2}$ for a constant gain, but their smallest eigenvalue falls to $5 \times 10^{-3}$ within $60\,$s and continues to decrease, so the certified rate in \cref{thm:drift} degrades without bound; the gate trades a modest loss in short-horizon accuracy for a uniform lower bound. \cref{fig:ablation_app}(b) shows that the terminal error and the excited dimension vary weakly and non-monotonically with the number of multisine components at equal total power (between $9.7 \times 10^{-3}$ and $1.8 \times 10^{-2}$, and between $6$ and $9$ dimensions), because fewer components at equal power produce larger excursions, which vary the Jacobian at least as much as additional frequencies do; the excited dimension is governed by the spatial extent of the trajectory rather than by the number of frequencies. \cref{fig:ablation_app}(c) reports the projection comparison discussed in \cref{sec:simulations}.

\textbf{Verification of the certified inequalities on simulated data.} The following checks were performed on the code and the simulated signals. (i) The projection inequality in \cref{eq:proj_prop} and the invariance of $\Theta_{\epsilon}$ were verified on $20{,}000$ random draws of $(\hat{\theta}, \theta^{*}, y, \Gamma)$, including $1{,}982$ draws in the active region and $5{,}000$ draws on the boundary, with no violations; on the simulated runs, the $\Gamma$-metric projection produced no violation in any update of any run. (ii) The gate $\varsigma$ implemented in code equals $0$ at $\lambda_{\min}(\Gamma) = \underline{\lambda}$, $1$ at $2\underline{\lambda}$, and is linear in between. (iii) The analytic Jacobians $\Phi'$ and $\partial\Phi/\partial X$ agree with central finite differences to $10^{-9}$, and the adjoint gradient agrees with central finite differences of the discretized \cref{eq:ET_def} with relative error $3 \times 10^{-4}$ (S1) and $5 \times 10^{-4}$ (S2) and cosine similarity $1.000000$. (iv) In every seed of both studies, $\lambda_{\min}(\Gamma(t)) \geq \underline{\lambda}$ and $\lambda_{\max}(\Gamma(t)) \leq \bar{\lambda}$ hold at every logged instant. (v) On the logged S1 seed-0 trajectory after $t = 5\,$s, the maximum sampled value of $\left\Vert \dot{f} \right\Vert$ is $2.83$. Substitution into the continuous-time residual formula gives $\bar{z}_{o} \approx 1.4 \times 10^{-2}$ at $k_{f} = 200$, while the sampled predictive-hold observer has a post-transient supremum of $\left\Vert \tilde{f} \right\Vert$ of $2.3 \times 10^{-2}$. The sampled maximum is a diagnostic estimate rather than a certified bound over the full trajectory, and the sampled observer includes a hold residual not covered by the continuous-time result, as \cref{fig:ablation}(b) shows across gains. (vi) At $t = 10\,$s in S1, the identity $\Xi_{f} = \Psi\tilde{\theta} + \eta_{f}$ was evaluated with $\tilde{\theta}$ known, and $\left\Vert \eta_{f} \right\Vert = 0.42$ lies within the bound of \cref{eq:etaf_bound} evaluated with the measured $\bar{\upsilon}$, $c_{\theta\theta}$, and $\bar{f}_{T}$, which equals $5.2$; the factor of twelve is the expected looseness of a Lagrange remainder bound. (vii) Two runs with the same seed produce bitwise identical parameter trajectories, and the plant trajectory is identical across the five methods for a given seed.

\end{document}